\theoremstyle{plain}
\DeclareMathOperator{\rng}{\it rng}
\DeclareMathOperator{\dom}{\it dom}
\newcommand{\kw}[1]{\mbox{\fontfamily{cmr}\fontseries{b}\selectfont#1}}
\newcommand{\LET}{\kw{let}\,}
\newcommand{\LETREC}{\kw{letrec}\,}
\newcommand{\IN}{\,\kw{in}\,}
\newcommand{\CASE}{\kw{case}\,}
\newcommand{\OF}{\,\kw{of}\,}
\newcommand{\alt}{\;\:\vert\;\:}
\newcommand{\arr}{\rightarrow}
\newcommand{\strict}{\mbox{\em strict}}
\newcommand{\linear}{\mbox{\it linear}}
\newcommand{\fv}{\mbox{\it fv}}
\newcommand{\fn}{\mbox{\it fn}}
\newcommand{\E}{\mathcal{E}}
\newcommand{\R}{\mathcal{R}}
\newcommand{\D}{\mathcal{D}}
\newcommand{\econ}[1]{\E\langle {#1} \rangle}
\newcommand{\con}[1]{\R\langle {#1} \rangle}
\newcommand{\eval}{\mapsto}
\newcommand{\eqdef}{\overset{\mbox{\tiny def}}{=}}
\newcommand{\adr}[2]{\D_{app}({#1})_{{#2},{\it G},\rho}}
\newcommand{\dr}[2]{\D\llbracket {#1} \rrbracket_{{#2},{\it G},\rho}}
\newcommand{\drp}[2]{\D\llbracket {#1} \rrbracket_{{#2},{\it G},\rho'}}
\newcommand{\drg}[2]{\D\llbracket {#1} \rrbracket_{{#2},{\it G}',\rho}}
\newcommand{\asdr}{\D_{app}\((\)}
\newcommand{\sdr}{\D\(\llbracket\)}
\newcommand{\edr}{\(\rrbracket\)}
\newcommand{\aedr}{\()\)}
\newcommand{\homemb}{\trianglelefteq}
\newcommand{\opapp}{\,\utilde{\sqsubset}\,}
\newcommand{\bopapp}{\,\utilde{\sqsupset}\,}
\newcommand{\opeq}{\cong}
\newcommand{\costeq}{\trianglelefteq \trianglerighteq}
\newcommand{\costlefteq}{\trianglelefteq}
\newcommand{\costrighteq}{\trianglerighteq}
\def\doi{6 (3:5) 2010}
\begin{document}

\title[Positive Supercompilation]{Positive Supercompilation \\ for a Higher-Order Call-By-Value Language}

\author[P.~A.~Jonsson]{Peter A. Jonsson}	%required
\address{Lule{\aa} University of Technology \\
Department of Computer Science and Electrical Engineering}
\email{\{pj,nordland\}@csee.ltu.se}  %optional
%\thanks{thanks 1, optional.}	%optional

\author[]{Johan Nordlander}	%optional
\address{\vskip-6 pt}
%\address{Lule{\aa} University of Technology \\
%Department of Computer Science and Electrical Engineering}
%\email{nordland@csee.ltu.se} 
%\thanks{thanks 2, optional.}	%optional

%% etc.

%% required for running head on odd and even pages, use suitable
%% abbreviations in case of long titles and many authors:

%% mandatory lists of keywords and classifications:
\keywords{supercompilation, deforestation, call-by-value}
\subjclass{D.3.4, D.3.2}
%\titlecomment{A previous version of this paper appeared in the proceedings for POPL 2009 \citep{Jonsson:2009:Positive}}
%%%%%%%%%%%%%%%%%%%%%%%%%%%%%%%%%%%%%%%%%%%%%%%%%%%%%%%%%%%%%%%%%%%%%%%%%%%

%% the abstract has to PRECEED the command \maketitle:
%% be sure not to issue the \maketitle command twice!

\begin{abstract}
  \noindent Previous deforestation and supercompilation algorithms may introduce
  accidental termination when applied to call-by-value programs. This
  hides looping bugs from the programmer, and changes the behavior of
  a program depending on whether it is optimized or not. We present a
  supercompilation algorithm for a higher-order call-by-value language
  and prove that the algorithm both terminates and preserves
  termination properties. This algorithm utilizes strictness
  information to decide whether to substitute or not and compares
  favorably with previous call-by-name transformations.
\end{abstract}

\maketitle

%% start the paper here:
\section{Introduction}\label{S:one}
% XXX: A listful style, etc?
Intermediate data structures such as lists allow functional
programmers to write clear and concise programs, but carry a cost at
run-time since additional heap cells need to be both allocated and
garbage collected. Both deforestation
\citep{Wadler:1990:Deforestation} and supercompilation
\citep{Soerensen:1996:Positive} are automatic program transformations
which remove many of these intermediate structures. In a call-by-value
context these transformations are unsound, however, as they might hide
infinite recursion from the programmer. Consider the program
\begin{equation*}
(\lambda x.y)\, (\mbox{{\it fac}} \,z).
\end{equation*}

This program could loop, if the value of {\em z}
is negative. Applying Wadler's deforestation algorithm to the program will
result in \<y\>, which is sound under call-by-name or
call-by-need. Under call-by-value the non-termination in the original
program has been removed, and hence the meaning of the program has
been altered by the transformation.

This is unfortunate since removing intermediate structures in a call-by-value language is perhaps even
more important than in a lazy language since the entire intermediate
structure has to remain in memory during the computation. 

\citet{Ohori:2007:Lightweight} saw this need and presented a very
elegant algorithm for call-by-value languages that removes
intermediate structures.  Their algorithm sacrifices some
transformational power for algorithmic simplicity. In this article we explore a
different part of the design space: a more powerful transformation at
the cost of some algorithmic complexity.  The outcome is a
meaning-preserving supercompiler for pure call-by-value languages in
general, together measurements from an implementation in a compiler for Timber \citep{Timber}, a pure call-by-value language.

Our current work is a necessary first step towards supercompiling
impure call-by-value languages, of which there are many 
available today. Well-known examples are OCaml \citep{Ocaml}, Standard
ML \citep{SML97} and F\# \citep{Syme:2008:The}. Considering that F\#
is currently being turned into a product, it is quite likely that
strict functional languages will be even more popular in the future.

One might think that our result should be easy to obtain by modifying
a call-by-name algorithm to simply delay beta reduction until every
function argument has been specialized to a value. However, it turns
out that this strategy misses even simple opportunities to remove
intermediate structures. That is, eager specialization of function arguments
risks destroying {\em fold} opportunities that might otherwise appear,
something which may prohibit complexity improvements to the resulting
program.

The novelty of our supercompilation algorithm is that it concentrates
all call-by-value dependencies to a single rule that relies on the
result from a separate strictness analysis for correct behavior. In
effect, our transformation delays transformation of function arguments past
inlining, much like a call-by-name scheme does, although only as far
as allowed by call-by-value semantics.  The result is an algorithm
that is able to improve a wide range of illustrative examples like the
existing algorithms do, but without the risk of introducing artificial
termination.

The specific contributions of our work are:

\begin{enumerate}[$\bullet$]
\item We provide an algorithm for positive supercompilation including
  folding, for a strict and pure higher-order functional language
  (Section \ref{sec:pscp}).
\item We prove that the algorithm terminates and preserves the
  semantics of the program (Section \ref{sec:correctness}).
\item We show preliminary benchmarks from an implementation in the
  Timber compiler (Section \ref{sec:benchmarks}).
\end{enumerate}

We start out with some examples in Section \ref{sec:examples} to give
the reader an intuitive feel of how the algorithm behaves. Our
language of study is defined in Section \ref{sec:lang}, right before
the technical contributions are presented.

This article is an extended and improved version of a paper presented
at POPL 2009 \citep{Jonsson:2009:Positive}. As well as clarifying a
number of the examples and proofs, we give an improved
formulation of \<\asdr{}\aedr\> presented in
Section~\ref{sec:apprule}, and make a small change to how
let-expressions are handled by the driving algorithm.

\section{Examples}
\label{sec:examples}

\citet{Wadler:1990:Deforestation} uses the example \<append
 (append xs ys) zs\> and shows that his
deforestation algorithm transforms the program so that it saves one
traversal of the first list, thereby reducing the complexity
from $2|xs| + |ys|$ to $|xs| + |ys|$. 

If we na\"{i}vely change Wadler's algorithm to call-by-value semantics
by eagerly attempting to transform arguments before attacking the
body, we do not achieve this improvement. The definition for append is:
\begin{haskell}
append xs ys & = & \hscase{xs}{[] \to ys\\ (x':xs') \to x':append xs' ys} \\
\end{haskell}
\noindent and we give an example of a hypothetical call-by-value variant of
Wadler's deforestation algorithm that attacks arguments first:
 
\begin{tabular}{lp{0.9\textwidth}l}
~ \\
& \<
 append (append xs' ys') zs' 
\> \\
~ \\
& Inlining the body of the inner {\it append}
 and then pushing down the outer call into each branch gives \\
~ \\
 &
\<
 \hscase{xs'}{[] \to\!\!  append\,ys' zs' \\ (x:xs) \to\!\! append
 (x:append\, xs\, ys') zs'}
\>\\
~ \\
& Transformation of the first branch will create a new function \<h_1\> that is isomorphic to {\it append}, and call it. The second branch contains an embedding of the initial expression and blindly transforming it will lead to non-termination of the transformation algorithm.
  One must therefore split this expression in two parts: the subexpression \<x:append xs ys'\> which we call \<h_2\>, and the outer expression \<append z zs'\> where \<z\> is fresh. Continuing with \<x:append xs ys'\> and inlining \<append\> gives \\
~ \\
 &
\<
x:\hscase{xs}{[] \to ys' \\ (x':xs') \to x':append xs' ys' }
\>\\
~ \\
& The second branch contains a renaming of the expression we named \<h_2\>, so we simply replace it with a call to \<h_2\>. Moving back to \<append z zs'\> we notice that this expression is a renaming of the one called \<h_1\>, so we replace it with the call \<h_1 z zs'\> \\
~ \\
\end{tabular}

\noindent Assembling the pieces gives us the final result:
\begin{haskell*}
 & &  \LETREC h_1 xs ys = \hscase{xs}{[] \to ys \\ (x':xs') \to x':h_1 xs' ys} \\
& & \qquad \quad \quad \quad\quad \quad \quad \quad \quad h_2 x xs ys = x:\hscase{xs}{[] \to ys \\(x':xs') \to h_2 x' xs' ys} \\
 & & \IN \hscase{xs'}{[] \to h_1 ys' zs' \\ (x:xs) \to h_1 (h_2 x xs ys')
  zs'}  \\
\end{haskell*}
Notice that the intermediate structure from the input program is still there after
the transformation, and the complexity is still $2|xs| + |ys|$! This
can be compared to how the same example is transformed by Wadler's
algorithm as shown in Figure \ref{fig:wadlerappend}. The reason our
hypothetical call-by-value algorithm failed to improve the program is
that it had to split expressions too early during the transformation,
thereby preventing fold opportunities that occur in a
call-by-name setting.

\begin{figure*}
\begin{tabular}{lp{0.85\linewidth}l}
& \<
 append (append xs' ys') zs' 
\> \\
~ \\
& Naming the first expression \<h_1\> and inlining both occurrences of {\it append} gives \\
~ \\
 &
\<
 \hsccase{(\hscase{xs'}{[]\to ys' \\ (x_1:xs_1) \to x_1:append xs_1 ys') \OF}}{[] \to\!\!  zs' \\ (x:xs) \to\!\! x:append xs zs'} \>\\
~ \\
& Pushing down the outer case-expression into both branches of the inner one and reducing the resulting case-expression of a known constructor leads to \\
~ \\
 &
\<
 \hscase{xs'}{[]\to \hscase{ys'}{[] \to\!\!  zs' \\ (x:xs) \to\!\! x:append xs zs'} \\ (x_1:xs_1) \to  x_1:append (append xs_1 ys') zs'} \>\\
~ \\
& Transform each branch separately. Transformation of the second branch in the first branch will create a new function \<h_2\> that is isomorphic to {\it append}, and the second branch of the outer case is a renaming of our initial expression called \<h_1\>. Assembling all pieces yields the following result: \\
~ \\
 &\<\LETREC h_1 xs ys zs = \hscase{xs}{[] \to \hscase{ys}{[] \to zs\\(y':ys') \to y':h_2 ys' zs} \\ (x':xs') \to x':h_1 xs' ys zs} \>\\
 & \<\qquad \quad \quad \quad\quad \quad \quad \quad \quad  h_2 xs ys = \hscase{xs}{[] \to ys\\ (x':xs') \to x':h_2 xs' ys}\> \\
&  \<\IN h_1 xs' ys' zs' \> \\
\end{tabular}
\caption{Wadler's algorithm transforming {\it append (append xs' ys') zs'}}
\label{fig:wadlerappend}
\end{figure*}

However, changing the call-by-value algorithm to do the exact opposite --- that is, carefully delaying the
transformation of arguments to a function past the inlining of its body, but only as far as strictness allows
--- actually leads to the same result that Wadler obtains with \<append (append xs ys)  zs\>. This is a key observation
for obtaining deforestation under call-by-value without
altering the semantics, and our transformation exploits it.

Except for the fundamental reliance on strictness analysis, which is necessary to preserve
semantics, our transformation shares many of its rules with Wadler's
algorithm. The transformation that is commonly referred to as
case-of-case is crucial for our transformation, just like it is for a
call-by-name algorithm.  The case-of-case transformation is useful
when a case-expression appears in the head of another case-expression,
in which case the outer case context is duplicated and pushed into
all branches of the inner case-expression.  Our transformation also contains
rules that correspond to ordinary evaluation which eliminate case-expressions that have a known constructor in their head or adds two
primitive numbers. The mechanism that ensures termination basically
looks for ``similar'' terms to ones that have already been
transformed, and if a similar term is encountered, the transformation
will stop and split the term into smaller terms that are
transformed separately. The remaining rules of our transformation simply shifts
 focus to the proper subexpression and ensures that the
algorithm does not get stuck.

We claim that our transformation compares favorably with previous
call-by-name transformations, and we now proceed with demonstrating the
transformation on some common examples. The results of the
transformation on these examples are identical to the results of
Wadler's algorithm \citep{Wadler:1990:Deforestation}.  

This does not
hold in general, a counter-example is the transformation of the expression \<zip (map
 f xs) (map g ys)\> where Wadler's algorithm will eliminate both
intermediate structures and our transformation will only eliminate the first
intermediate structure. 

Our first example is transformation of \<sum (map square ys)\>, where the referenced functions are defined as:
\begin{haskell}
square x & = &\,\, x*x \\
map f xs & = & \hscase{xs}{[] \to ys \\ (x:xs) \to f x:map f xs} \\
sum xs & = &\hscase{xs}{[] \to 0 \\ (x:xs) \to x + sum xs}\\
\end{haskell}

We start our transformation by allocating a new fresh function name
\<h_0\> to the expression \<sum (map square ys)\>, inlining the body of \<sum\> and
substituting \<map square ys\> into the body of \<sum\>:
\begin{haskell*}
\hscase{map square ys}{[] \to 0\\ (x':xs') \to x' + sum xs' }
\end{haskell*}

\noindent After inlining \<map\> and substituting the arguments into
the body the result becomes:
\begin{haskell*}
\hsccase{(\hscase{ys}{[] \to []\\ (x':xs') \to (square x'):map square xs') \hskwd{of}}}{   [] \to 0 \\(x':xs') \to x' + sum xs'}
\end{haskell*}

We duplicate the outer case in each of the inner case branches, using
the expression in the branches as head of that
case-expression. Continuing the transformation on each branch with ordinary
reduction steps yields: 
\begin{haskell*}
\hscase{ys}{[] \to 0\\ (x':xs') \to square x' +  sum (map square xs') }
\end{haskell*}

At this point we inline the body of the first {\it square} occurrence and observe that
the second parameter to \<(+)\> is similar to the expression we started
with and therefore we replace it with \<h_0 xs'\>. The result of our
transformation is \<h_0 ys\>, with \<h_0\> defined as:
\begin{haskell*}
h_0 ys &=& \hscase{ys}{[] \to 0\\ (x':xs') \to x'*x' + h_0 xs'} \\
\end{haskell*}

This new function only traverses its input once, and no intermediate
structures are created. If the expression {\em sum (map square xs)}
or a renaming of it is detected elsewhere in the input, a call to
\<h_0\> will be inserted instead.

The work by \citet{Ohori:2007:Lightweight} cannot fuse two successive
applications of the same function, nor mutually recursive
functions. We show in the next two examples that our
transformation can handle these cases. We need the following new
function definitions:
\begin{haskell*}
mapsq xs & = &\hscase{xs}{ [] \to []\\ (x':xs') \to (x'*x'):mapsq xs' } \\
f xs & = &\hscase{xs}{ [] \to []\\ (x':xs') \to (2*x'):g xs'} \\
g xs & = &\hscase{xs}{ [] \to []\\ (x':xs') \to (3*x'):f xs'} \\
\end{haskell*}

Transforming \<mapsq (mapsq xs)\> will inline the outer
\<mapsq\>, substitute the argument in the function body and  inline
the inner call to \<mapsq\>:
\begin{haskell*}
\hsccase{(\hscase{xs}{[] \to []\\ (x':xs') \to (x'*x'):mapsq
  xs') \hskwd{of}}}{   [] \to [] \\(x':xs') \to (x'*x'):mapsq xs'  }
\end{haskell*}

As previously, we duplicate the outer case in each of the inner case
branches, using the expression in the branches as head of that
case-expression. Continuing the transformation on each branch by ordinary
reduction steps yields:

\begin{haskell*}
\hscase{xs}{[] \to []\\ (x':xs') \to  (x'*x'*x'*x'):mapsq (mapsq xs')  } \\
\end{haskell*}

Here we encounter a similar expression to what we started with, and
create a new function \<h_1\>. The final result of our
transformation is \<h_1 xs\>, with the new residual function \<h_1\>
that only traverses its input once defined as:
\begin{haskell*}
h_1 xs & = & \hscase{xs}{[] \to []\\ (x':xs') \to (x'*x'*x'*x'):h_1 xs'  } \\
\end{haskell*}

For an example of transforming mutually recursive functions, consider
the transformation of \<sum (f xs)\>. Inlining the body of \<sum\>, substituting
its arguments in the function body and inlining the body of \<f\>
yields:
\begin{haskell*}
\hsccase{(\hscase{xs}{[] \to []\\ (x':xs') \to (2*x'):g xs') \hskwd{of}}}{   [] \to 0 \\(x':xs') \to x' + sum xs'  }
\end{haskell*}

\noindent We now move down the outer case into each branch, and
perform reductions until we end up with: 
\begin{haskell*}
\hscase{xs}{[] \to 0 \\ (x':xs') \to (2*x') + sum (g xs')} \\
\end{haskell*}

We notice that unlike in previous examples, \<sum (g xs')\> is not
similar to what we started transforming and we can therefore continue the transformation. For space reasons, we focus
on the transformation of the rightmost expression in the last branch,
\<sum (g xs')\>, while keeping the functions already seen in
mind. We inline the body of \<sum\>, perform the substitution of its
arguments and inline the body of \<g\>:
\begin{haskell*}
\hsccase{(\hscase{xs'}{[] \to []\\ (x'':xs'') \to (3*x''):f  xs'') \hskwd{of}}}{   [] \to 0 \\(x':xs') \to x' + sum xs'  }
\end{haskell*}

\noindent We now move down the outer case into each branch, and perform
reductions:
\begin{haskell*}
 \hscase{xs'}{ [] \to 0 \\(x'':xs'') \to (3*x'') + sum (f xs'')   }
\end{haskell*}

We notice a familiar expression in \<sum (f xs'')\>, and fold when
reaching it. Combining the fragments together gives a new function
\<h_2\>:
\begin{haskell*}
h_2 xs & = & \hstcase{xs}{[] \to 0 \\ (x':xs') \to (2*x') + \hstcase{xs'}{ [] \to  0\\(x'':xs'') \to   (3*x'') + h_2 xs''} }\\
\end{haskell*}

The new function \<h_2\> consumes a list and returns a number, so our
algorithm has eliminated the intermediate list between \<f\> and \<sum\>.

\citet{Kort:1996:Deforestation} studied a ray-tracer written in
Haskell, and identified a critical function in the innermost loop of a
matrix multiplication, called \<vecDot\>:

\begin{haskell*}
vecDot xs ys &=& sum (zipWith (*) xs ys) 
\end{haskell*}

\noindent This is simplified by our positive supercompiler to:
\begin{haskell*}
vecDot xs ys &=& h_1 xs ys \\
h_1 xs ys &=& \hstcase{xs}{
              (x':xs') \to
                      \hstcase{ys}{
                       (y':ys') \to x' * y' + h_1 xs' ys'\\
                        \_ \to 0} \\
              \_ \to 0}
\end{haskell*}

The intermediate list between {\em sum} and
{\em zipWith} is transformed away, and the complexity is reduced from
$2|xs| + |ys|$ to $|xs| + |ys|$ (since this is matrix multiplication
$|xs| = |ys|$).

\section{Language}
\label{sec:lang}

Our language of study is a strict, higher-order functional language
with let-bindings and case-expressions. Its syntax for expressions,
values and patterns is shown in Figure \ref{fig:language}.

\begin{figure}[h]
\begin{tabular}{lll}
\multicolumn{3}{l}{Expressions} \\ \hline \\
  $e,\,f$ &::= & $n \alt x \alt g \alt f\, e \alt  \lambda x.e \alt k \, \overline{e} \alt e_1  \oplus  e_2 \alt \CASE e \,\OF \, \{ p_i \arr e_i \} $ \\ 
& $\alt$ & $\LET x = f\, \IN e \alt \LETREC g = v \IN e $ \\
\\
$p$ &::=& $n \alt k\, \overline{x}$ \\
\\
\multicolumn{3}{l}{Values} \\ \hline \\
  $v$ &::= & $n \alt \lambda x.e
    \alt k \, \overline{v}$ \\%   \alt \{ l_i = v_i \}$ & \\
\end{tabular}  
\caption{The language}
\label{fig:language}
\end{figure}

Here we let variables and constructor symbols be denoted by {\it x}
and {\it k}, respectively. The constructor symbols {\it k} range over
a set {\it K} and we also assume that there is a separate set
$\mathcal{G}$ of recursively defined function symbols, ranged over by
{\it g}. In what follows we will assume that the meaning of such
symbols is given by a recursive map {\it G} mapping symbols {\it g} to
their defined value.

The language contains integer values $n$ and arithmetic
operations $\oplus$, although these meta-variables can preferably be
understood as ranging over primitive values in general and arbitrary
operations on these. We let $+$ denote the semantic meaning of
$\oplus$.

A list of expressions $e_1 \ldots
e_n$ is abbreviated as $\overline{e}$, and a list of variables $x_1 \ldots x_n$ as
$\overline{x}$.

We denote the free variables of an expression {\em e} by $\fv(e)$, as
defined in Figure \ref{fig:fv}. Along the same lines we denote the
function names in an expression {\em e} as $\fn(e)$, defined in
Figure~\ref{fig:fn}.

\begin{figure}
$$
\begin{array}{lll}
\fv(x) & = & \{x\} \\
\fv(n) & = & \emptyset \\
\fv(g) & = & \emptyset \\
\fv(k\, \overline{e}) & = & \fv(\overline{e})  \\
\fv(\lambda x.e) & = & \fv(e) \backslash \{x\} \\
\fv(f\, \overline{e}) & = & \fv(f) \cup \fv(\overline{e}) \\
\fv(\LET x = e \IN f) & = & \fv(e) \cup (\fv(f) \backslash \{x\}) \\
\fv(\LETREC g = v \IN f) & = & \fv(v) \cup \fv(f) \\
\fv(\CASE e \OF \{p_i \arr e_i \}) & = & \fv(e) \cup (\bigcup (\fv(e_i)\backslash \fv(p_i))) \\ 
\fv(e_1 \oplus e_2) & = & \fv(e_1) \cup \fv(e_2) \\
\end{array}
$$
\caption{Free variables of an expression}
\label{fig:fv}
\end{figure}

\begin{figure}
$$
\begin{array}{lll}
\fn(x) & = & \emptyset \\
\fn(n) & = & \emptyset \\
\fn(g) & = & \{g\} \\
\fn(k\, \overline{e}) & = & \fn(\overline{e})  \\
\fn(\lambda x.e) & = & \fn(e) \\
\fn(f\, e) & = & \fn(f) \cup \fn(e) \\
\fn(\LET x = e \IN f) & = & \fn(e) \cup \fn(f) \\
\fn(\LETREC g = v \IN f) & = & (\fn(v) \cup \fn(f)) \backslash \{g\} \\
\fn(\CASE e \OF \{p_i \arr e_i \}) & = & \fn(e) \cup (\bigcup (\fn(e_i)) \\ 
\fn(e_1 \oplus e_2) & = & \fn(e_1) \cup \fn(e_2) \\
\end{array}
$$
\caption{Function names of an expression}
\label{fig:fn}
\end{figure}

We encode {\bf letrec} as an application containing {\it fix}, where {\it fix} is
defined as
\begin{equation*}
fix =\lambda f.f\,(\lambda n.fix\, f\,n)
\end{equation*}

\begin{defi} Letrec is defined as:
\begin{equation*}
\LETREC h = \lambda \overline{x}.e \IN e' \eqdef (\lambda h.e')\,(\lambda y.fix\,(\lambda h.\lambda \overline{x}.e)\,y)
\end{equation*}
\end{defi}

By defining letrec as syntactic sugar for other primitives we
introduce an implicit requirement that the right hand side of letrec
expressions must not contain any free variables except h. This is not a
limitation since functions that contain free variables can be lambda
lifted \citep{Johnsson:1985:Lambda} to the top level.

A program is an expression with no free variables and all function
names defined in {\it G}.  The intended operational
semantics is given in Figure \ref{fig:redsem}, where
$[\overline{e}/\overline{x}]e'$ is the capture-free substitution of
expressions $\overline{e}$ for variables $\overline{x}$ in $e'$. 

\begin{figure}
\begin{tabular}{lll}
\multicolumn{3}{l}{Reduction contexts} \\ \hline \\
$\E$  &::= & 
      $\boxempty
      \alt            \E\, e
      \alt            (\lambda x.e)\, \E
       \alt      k \, \overline{\E}
  \alt \E \oplus e \alt n \oplus \E 
        \alt  \CASE\, \E \, \OF \, \{ p_i \arr e_i \} \alt \LET x=\E \, \IN e$   \\
\\
\end{tabular} \\
\begin{tabular}{lllr}
\multicolumn{4}{l}{Evaluation relation} \\ \hline \\
$\econ{g}$ & $\eval$ & $\econ{v}$, if $(g, v) \in ${\it G}   & (Global) \\
$\econ{(\lambda x.e) \,v}$ & $\eval$ & $\econ{[v/x]e}$ &  (App) \\
$\econ{\LET x=v \IN e}$ & $\eval$ & $\econ{[v/x]e}$ & (Let) \\
 $\econ{\CASE k \, \overline{v} \OF \, \{ k_i\,\overline{x}_i \arr e_i \}}\!\!\!\!\!$ & $\eval$ & $\econ{[\overline{v}/\overline{x}_j]e_j}$, if $k = k_j$ & (KCase) \\
$\econ{\CASE n \OF \,\{ n_i \arr e_i \} }$ & $\eval$ & $\econ{e_j}$, if $n = n_j$ & (NCase) \\
 $\econ{n_1 \oplus n_2}$ & $\eval$ & $\econ{n}$, if $n = n_1 + n_2$  & (Arith) \\
\end{tabular}
\caption{Reduction semantics}
\label{fig:redsem}
\end{figure}

A reduction context $\E$ is a term containing a single hole, $\boxempty$,
which indicates the next expression to be reduced. The expression
$\econ{e}$ is the term obtained by replacing the hole in $\E$ with
$e$. $\overline{\E}$ denotes a list of terms with just a single hole,
evaluated from left to right.

If a variable appears no more than once in a term, that term is said
to be $\linear$ with respect to that variable. Like 
\citet{Wadler:1990:Deforestation}, we extend the definition slightly
for linear case-expressions: no variable may appear in both the head
and a branch, although a variable may appear in more than one
branch. For example, the definition of {\em append} is linear is linear with respect to {\it ys}, although {\it ys}
appears in both branches.

\section{Higher Order Positive Supercompilation}
\label{sec:pscp}

It is time to make the intuition developed in Section
\ref{sec:examples} more formal.  Our supercompiler is defined as a set
of rewrite rules that pattern-match on expressions. This algorithm is
called the {\em driving} algorithm, and is defined in Figure
\ref{fig:drivalg}. Three additional parameters appear as subscripts to
the rewrite rules: a driving context
$\R$, the set of global function definitions {\it G} and a memoization list $\rho$. The memoization list holds information about expressions already
traversed and is explained more in detail in Section
\ref{sec:apprule}. The driving context $\R$ is smaller than $\E$, and
is defined as follows:
\begin{equation*}
\R ::= \boxempty
      \alt            \R \, e
      \alt   	      \CASE \R \, \OF\, \{ p_i \arr e_i \}
      \alt \R \oplus e
      \alt e \oplus \R
\end{equation*}
Interestingly, this definition coincides with the evaluation contexts
for a call-by-name language.  The reason our transformation still preserves
a call-by-value semantics is that beta reduction (rule R9) results in
a let-binding, whose further specialization in rule R13 depends on
whether the body expression $f$ is strict in the bound variable $x$ or
not. 

Our let-rule (R13) might change the order of computations, but since
non-termination is commutative this does not matter in
practice. Supercompiling impure languages requires stronger conditions
for the let-rule, since expressions might contain effects other than
non-termination. The difficulty of supercompiling an impure language
is to find sufficient conditions that preserve soundness while still
allowing the maximum amount of reordering of expressions.

\begin{figure*}
\begin{tabular}{lllr}
$\dr{n}{\R}$  & = & $\con{n}$ & (R1) \\
$\dr{x}{\R}$ & = & $\con{x}$ & (R2) \\
$\dr{g}{\R}$ & = & $\adr{g}{\R}$ & (R3) \\
$\dr{k \, \overline{e}}{\boxempty}$ & = & $k \, \dr{\overline{e}}{\boxempty}$ & (R4) \\
$\dr{x \, \overline{e}}{\R}$ & = & $\con{x \,\dr{\overline{e}}{\boxempty}}$ & (R5) \\
$\dr{\lambda \overline{x}.e}{\boxempty}$ & = & $(\lambda \overline{x}.\dr{e}{\boxempty})$ & (R6) \\
$\dr{n_1 \oplus n_2}{\R}$ & = & $\dr{\con{n}}{\boxempty}$, where $n = n_1 + n_2$ & (R7) \\
 $\dr{e_1 \oplus e_2}{\R}$ & = & $\dr{e_1}{\boxempty} \oplus \dr{e_2}{\boxempty}$, if $e_1 \oplus e_2 = a$ & (R8) \\
  & & $\dr{e_2}{\con{e_1 \oplus \boxempty}}$, if $e_1 = n$ or $e_1 = a$ \\
  & & $\dr{e_1}{\con{\boxempty\oplus e_2}}$, otherwise \\
%$\dr{g \, \overline{e}}{\R}$  & = & $\adr{g , \overline{e}}{\R}$ & (R9) \\
$\dr{(\lambda \overline{x}.f)\, \overline{e}}{\R}$ & = & $\dr{\LET \overline{x} = \overline{e} \IN f}{\R}$ & (R9) \\
$\dr{e\, e'}{\R}$ & = & $\dr{e}{\con{\boxempty \, e'}}$ & (R10) \\
$\dr{\LET x=n \IN f}{\R}$ & = & $\dr{\con{[n/x]f}}{\boxempty}$ & (R11) \\ 
$\dr{\LET x=y \IN f}{\R}$ & = & $\dr{\con{[y/x]f}}{\boxempty}$,  if $y$ not freshly generated & (R12) \\ 
$\dr{\LET x=e \IN f}{\R}$ & = & $\dr{\con{[e/x]f}}{\boxempty}$, if $x \in \strict(f)$ and  & (R13) \\ 
& & \hspace{9.4em}$x \in \linear(f)$ \\
   & & $\LET x = \dr{e}{\boxempty}\, \IN \dr{\con{f}}{\boxempty}$, otherwise & \\
$\dr{\LETREC g = v \IN e}{\R}$ & = & $\drg{\con{e}}{\boxempty}$,  where {\it G}' = {\it G} $\cup (g, v)$  & (R14) \\
$\dr{\CASE x\, \OF \{p_i \arr e_i \}}{\R}$ & = & $\CASE x\, \OF \{p_i \arr \dr{[p_i/x]\con{e_i}}{\boxempty} \}$ & (R15) \\
$\dr{\CASE k_j\, \overline{e} \OF \{k_i\,\overline{x}_i \arr e_i  \}}{\R}$ & = & $\dr{\con{\LET \overline{x}_j = \overline{e} \IN  e_j}}{\boxempty}$ & (R16)\\
$\dr{\CASE n_j\, \OF \{n_i \arr e_i \}}{\R}$ & = & $\dr{\con{e_j}}{\boxempty}$ & (R17)\\
$\dr{\CASE a\, \OF  \{p_i \arr e_i \} }{\R}$ & = & $\CASE \dr{a}{\boxempty} \OF  \{p_i  \arr \dr{\con{e_i}}{\boxempty} \} $  & (R18) \\
$\dr{\CASE e\, \OF  \{p_i \arr e_i \} }{\R}$ & = & $\dr{e}{\con{\CASE \boxempty \OF  \{p_i  \arr e_i \}}}$ & (R19) \\
$\dr{e}{\R}$  & = & $\con{e}$ & (R20) \\
\end{tabular}
\caption{Driving algorithm}
\label{fig:drivalg}
\end{figure*}

In principle, an expression $e$ is strict with regards to a variable
$x$ if evaluation of {\it e} eventually requires the value of $x$; in other words, if $e \eval \ldots
\eval \econ{x}$.  Such information is not computable in general, 
although call-by-value semantics allows for reasonably tight
approximations.  One such approximation is given in Figure
\ref{fig:strictvars}, where the strict variables of an expression $e$
are defined as all free variables of $e$ except those that only appear
under a lambda or not inside all branches of a case. 

\begin{figure}
$$
\begin{array}{lll}
\strict(x) & = & \!\!\!\! \{x\} \\
\strict(n) & = & \!\!\!\! \emptyset \\
\strict(g) & = & \!\!\!\! \emptyset \\
\strict(k\, \overline{e}) & = & \!\!\!\! \strict(\overline{e})  \\
\strict(\lambda x.e) & = & \!\!\!\! \emptyset \\
\strict(f\, e) & = & \!\!\!\! \strict(f) \cup \strict(e) \\
\strict(\LET x = e \IN f) & = & \!\!\!\! \strict(e) \cup (\strict(f) \backslash \{x\}) \\
\strict(\LETREC g = v \IN f) & = & \!\!\!\! \strict(f) \\
\strict(\CASE e \OF \{p_i \arr e_i \}) \!\!\!\!\! & = & \!\!\!\! \strict(e) \cup  
(\bigcap (\strict(e_i)\backslash \fv(p_i))) \\ 
\strict(e_1 \oplus e_2) & = & \!\!\!\! \strict(e_1) \cup \strict(e_2) \\
\end{array}
$$
\caption{The strict variables of an expression}
\label{fig:strictvars}
\end{figure}

There is an ordering between the driving rules; i.e., all rules must be tried in
the order they appear. Rule R10 is the default fallback case for
applications and rule R19 is the default fallback case for case
expressions. These rules extend the driving context $\R$ and
zoom in on the next expression to be driven. The program is turned
``inside-out'' by moving the surrounding context $\R$ into all
branches of the case-expression through rules R15 and R18. Rule R13 has
a similar mechanism for let-expressions.  Notice how the context is
moved out of the recursive call in rule R5, whereas rule R7
recursively applies the driving algorithm to the full new term
$\con{n}$, forcing a re-traversal of the new term in search for for further
reduction opportunities. Rule R12 is only allowed to match if the variable {\em y}
is not freshly generated by the
splitting mechanism described in the next section. Meta-variable {\em
  a} in rules R8 and R18 stands for an ``annoying'' expression;
i.e., an expression that would be further reducible were it not for a
free variable getting in the way.  The grammar for annoying
expressions is:
$$
\begin{array}{lll}
a &::=& x \alt n \oplus a \alt a \oplus n \alt a \oplus a \alt a\, \overline{e} \\
\end{array}
$$

Some expressions should be handled differently depending on
context. If a constructor application appears in an empty context,
there is not much we can do but to drive the argument expressions
(rule R4). On the other hand - if the application occurs at the head
of a case-expression, we may choose a branch on basis of the
constructor and leave the arguments unevaluated in the hope of finding
fold opportunities further down the syntax tree (rule R16).

The argumentation is analogous for lambda abstractions: if there is
a surrounding application context we perform a beta reduction, otherwise we proceed by driving the abstraction itself.

Notice that the primitive operations ranged over by $\oplus$ cannot
be unfolded and transformed like ordinary functions can. If the
arguments of a primitive operation are annoying, our transformation will
simply leave the primitive operation in place (rule R8).

If we had a perfect strictness analysis and could decide whether an
arbitrary expression will terminate or not, the only difference in
results between our transformation and a call-by-name counterpart would be
for the non-terminating cases. In practice, we have to settle for an
approximation, such as the simple analysis defined in Figure
\ref{fig:strictvars}. One might speculate whether the transformations
thus missed will have adverse effects on the usefulness of our
transformation in practice.  We believe we have seen clear indications that
this is not the case, and that the crucial factor is the
ability to inline function bodies irrespective of whether arguments
are values or not.

Our transformation always inlines functions unless the algorithm detects a
risk of non-termination. Supero
\citep[Sec. 3.2]{Mitchell:2008:ASupercompiler} has a more advanced
inlining strategy.

\subsection{Application Rule}

\label{sec:apprule}

In the driving algorithm rule R3 refers to \<\asdr{}\aedr\>,
defined in Figure~\ref{fig:adrivalg}.  \<\asdr{}\aedr\> can be inlined
in the definition of the driving algorithm, it is merely given a
separate name to improve the clarity of the presentation. Figure
\ref{fig:adrivalg} contains some new notation: we use $\sigma$ for a
variable to variable substitution and \<=\> for syntactic
equivalence of expressions.

\begin{figure*}
\begin{tabular}{llllr}
$\adr{g}{\R}$ &$\! =\, $& $\!\!h \, \overline{x}$ & if $\exists(h, e_1)\! \in \!\rho\,.\, \sigma e_1 =  \con{g}$  & (1)\\
 \multicolumn{4}{l}{\hspace{2em} where $\overline{x} = \sigma(\fv(e_1))$} \\ 
$\adr{g}{\R}$ &$\! =\, $& $\!\!\con{g}$ & if $\exists(h, e_1)\! \in\! \rho\,.\,e_1  \homemb \con{g}$ and $\con{g} \homemb e_1$\hspace{-1em} & (2) \\
$\adr{g}{\R}$ &$\! =\, $& $\!\! [\dr{\overline{f}}{\boxempty}/\overline{y}] \dr{f_g}{\boxempty}$  & if $\exists(h, e_1)\! \in \!\rho\,.\, e_1 \homemb \con{g}$ & (3)\\
 \multicolumn{4}{l}{\hspace{2em} where $(f_g, \overline{f}, \overline{y}) = split(\con{g}, e_1)$ } \\ 
$\adr{g}{\R}$ &$\! =\, $& $\!\! [\dr{\overline{f}}{\boxempty}/\overline{y}] \dr{f_g}{\boxempty}$ &if $\exists e_1 \in e \,.\,e_1 \homemb \con{g}$ & (4a)\\
& & $\!\!\LETREC h = \lambda \overline{x}.e \IN h\,\overline{x}$ & if $h \in \fn(e)$ & (4b) \\
 & & $\!\!e$ & otherwise & (4c) \\
 \multicolumn{5}{l}{\hspace{2em} where $(g, v) \in $ {\it G},} \\
 \multicolumn{5}{l}{\hspace{5.4em}$e = \drp{\con{v}}{\boxempty}$,} \\
 \multicolumn{5}{l}{\hspace{5.4em}$\rho' = \rho \cup (h, \con{g})$,} \\
\multicolumn{5}{l}{\hspace{5.4em}{\em h} fresh, }\\
 \multicolumn{5}{l}{\hspace{5.4em}$\overline{x} = \fv(\con{g})$,} \\
\multicolumn{5}{l}{\hspace{5.4em}$(f_g, \overline{f}, \overline{y}) = split(\con{g}, e_1)$} \\
\end{tabular}
\caption{Driving of applications}
\label{fig:adrivalg}
\end{figure*}

Care needs to be taken to ensure that recursive functions are not
inlined forever. The driving algorithm keeps track of previously
seen function applications in the memoization list $\rho$, which also associates a unique function name to each such expression. Whenever an expression that is equivalent up to renaming of variables to a previous application, a call to the associated function symbol is inserted instead. 
 This is not sufficient to guarantee
termination of the algorithm, but the mechanism is crucial for the
complexity improvements mentioned in Section \ref{sec:examples}.

To ensure termination, we use the homeomorphic embedding relation
$\homemb$ to define a predicate called ``the whistle''. When the
predicate holds for an expression we say that the whistle blows on
that expression. The intuition is that when $e \homemb f$, {\em f}
contains all subexpressions of {\em e}, possibly embedded in other
expressions. For any infinite sequence $e_0, e_1, \ldots$ there must exist an {\em
  i} and a {\em j} such that $i < j$ and $e_i \homemb e_j$. This
condition is sufficient to ensure termination.

In order to define the homeomorphic embedding we need a definition of
uniform terms analogous to the one defined by
\citet{Soerensen:1995:AnAlgorithm}. We slightly adjust their version
to fit our language.

\begin{defi}[Uniform terms]
\label{def:unifterm}
Let {\it s} range over the set $\mathcal{G} \cup K \cup \{\hskwd{caseof},
\hskwd{let}, \hskwd{letrec},$ $\hskwd{primop},\hskwd{lambda},
\hskwd{apply} \}$, and let
$\hskwd{caseof}(\overline{e}),\hskwd{let}(\overline{e}),
\hskwd{letrec}(\overline{v}, e), \hskwd{primop}(\overline{e}),
\hskwd{lambda}(e)$, and $\hskwd{apply}(\overline{e})$ denote a case,
let, recursive let, primitive operation, lambda abstraction or
application for all subexpressions $\overline{e}, e$ and
$\overline{v}$.  The set of terms {\it T} is the smallest set of arity
respecting symbol applications $s(\overline{e})$.
\end{defi}

\begin{defi}[Homeomorphic embedding]
  Define $\homemb$ as the smallest relation on $T$
  satisfying: 
$$
x \homemb y\qquad   n_1 \homemb n_2 \qquad \frac{e \homemb f_i \text{ for
     some i}}{e \homemb s(f_1,\ldots,f_n)}  \qquad
\frac{e_1 \homemb f_1,
   \ldots, e_n \homemb f_n}{s(e_1,\ldots,e_n) \homemb
   s(f_1,\ldots,f_n)}
$$
\end{defi}

Whenever the whistle blows, our transformation splits the input
expression into strictly smaller terms that are driven separately in
the empty context. This might expose new folding opportunities, and
allows the algorithm to remove intermediate structures in
subexpressions.  The design follows the positive supercompilation algorithm 
outlined by \citet{Soerensen:2000:Convergence}, except that we need to
reassemble the transformed subexpressions into a term of the original form instead of pulling them out as let-definitions, in order to preserve strictness. Our
transformation is also more complicated because we perform the program
extraction immediately, rather than constructing a large tree of terms and
extracting the program in a separate pass.

Splitting expressions is rather intricate, and two mechanisms are
needed; the first is the most specific generalization ({\it msg}).

\begin{defi}[Most specific generalization] ~
\begin{enumerate}[$\bullet$]
\item An {\em instance} of a term {\em e} is a term of the form
  $\theta e$ for some substitution $\theta$.
\item A {\em generalization} of two terms {\em e} and {\em f} is a triple
  $(t_g, \theta_1, \theta_2)$, where $\theta_1, \theta_2$ are
  substitutions such that $\theta_1 t_g \equiv e$ and $\theta_2 t_g
  \equiv f$.
\item A {\em most specific generalization} (msg) of two terms {\em e}
  and {\em f} is a generalization $(t_g, \theta_1, \theta_2)$ such
  that for every other generalization $(t_g', \theta_1', \theta_2')$
  of {\em e} and {\em f} it holds that $t_g$ is an instance of $t_g'$.
\end{enumerate}
\end{defi}

For background information and
an algorithm to compute most specific generalizations, see 
\citet{Lassez:1988:Unification}. Figure \ref{fig:hommsg} contains
examples of the homeomorphic embedding and the msg.

\begin{figure*}
\begin{center}
\begin{tabular*}{0.8\textwidth}{@{\extracolsep{\fill}}ccccccc}
e & & f & $t_g$ & $\theta_1$ & $\theta_2$ \\
\hline
\<e\> & $\homemb$ & \<Just e\> & \<x\> & $[e/x]$ & $[Just\, e/x]$ \\
\<Right e\> & $\homemb$ & \<Right (e, e')\> & \<Right x\> & $[e/x]$ & $[(e, e')/x]$ \\
\<fac y\> & $\homemb$ & \<fac (y - 1)\> & \<fac x\> & $[y/x]$ & $[(y-1)/x]$ \\
\end{tabular*}
\end{center}
\caption{Examples of the homeomorphic embedding and the msg}
\label{fig:hommsg}
\end{figure*}

The most specific generalization is not always sufficient to split
expressions. For expressions differing already in their roots, {\it msg} will return just a variable and substitutions equal to the input terms on that variable.  If
this happens we need to split expressions in a different way. We
therefore define our function {\it split} using two alternatives; one that applies when there is a non-trivial most specific generalization, and one that just splits along the spine of the first term in the other case.

\begin{defi}[Split]
For $t \in T$ we define $split(t_1, t_2)$ by: \\
\begin{tabular}{llll}
$split(s(\overline{e}_1), s'(\overline{e}_2))$ & = &  $(t_g, \rng (\theta_1), \dom(\theta_1))$ & if $s = s'$\\
& = & $ (s(\overline{x}), \overline{e}_1, \overline{x})$   & otherwise \\
\end{tabular}\\
with $(t_g, \theta_1, \theta_2) = msg(s(\overline{e}_1), s'(\overline{e}_2))$ and $\overline{x}$ fresh.
\end{defi}

Alternatives 2 and 4a of \<\asdr{}\aedr\> is for upwards
generalization, and alternative 3 is for downwards
generalization. This is exemplified below. All the examples of how
our transformation works in Section \ref{sec:examples} eventually terminate
through a combination of alternative 1 and alternative 4b of \<\asdr{}\aedr\>.

The second alternative of \<\asdr{}\aedr\> in combination with 4a is
useful when transforming function calls that have the same parameter
appearing twice, for example {\em append xs xs} as shown in Figure
\ref{fig:appxsxs}.

\begin{figure}
\begin{tabular}{lp{0.8\linewidth}l}
~ \\
& \<
 \sdr{}append xs xs\edr
\> (*)\\
~ \\
& (By rule 4 of \<\asdr{}\aedr\>, put ($h_0$, \<append xs xs\>) in $\rho$ and transform 
   according to the rules of the algorithm) \\
~ \\
= &
\<\hscase{xs}{[] \to xs \\ (x':xs') \to \sdr{}x'\edr :\sdr{}append xs' xs\edr}
\> \\
~ \\
& (Focus on \<\sdr append xs' xs\edr\> and recall that 
   $\rho$ contains \<append xs xs\> so alternative 2 of \<\asdr
   \aedr\> is triggered and the transformation returns 
   \<append xs' xs\>. This returns all the way up to the start (*)
   and the transformation continues there through alternative 4a)
   \\
~ \\
= & \<
 \sdr{}append xs xs\edr
\> \\
~ \\
& (Generalize the expression with \<append xs' xs\>)  \\
~ \\
= & \<
 [\sdr{}xs\edr/x, \sdr{}xs\edr/y] \sdr{}append x y\edr
\> \\
~ \\
= & \<
 [xs/x, xs/y] \hscase{x}{[] \to y \\ (x':xs') \to \sdr{}x'\edr :\sdr{}append xs' y\edr}
\> \\
~ \\
= & \<
 [xs/x, xs/y] \hscase{x}{[] \to y \\ (x':xs') \to x' :h_0 xs' y}
\> \\
~ \\
= & \<
\LETREC h_0 xs ys = \hscase{xs}{[] \to ys\\(x':xs') \to x':h_0  xs' ys} \> \\
& \<\IN h_0 xs xs\>
\end{tabular}
\caption{Example of upwards generalization}
\label{fig:appxsxs}
\end{figure}

The third alternative is used when terms are ``growing'' in some
sense. An example of {\em reverse} with an accumulating parameter is
shown in Figure \ref{fig:accpar}, assuming the standard definition of reverse.

% \begin{haskell}
% rev xs ys = \hscase{xs}{[] \to ys\\ (x':xs') \to rev xs' (x':ys)} \\
% \end{haskell}

\begin{figure}
\begin{tabular}{lp{0.8\linewidth}l}
~ \\
& \<
 \sdr{}rev xs []\edr
\> \\
~ \\
& (By rule 4 of \<\asdr{}\aedr\>, put ($h_0$, \<rev xs []\>) in $\rho$ and transform the program
   according to the rules of the algorithm) \\
~ \\
 &
\<\hscase{xs}{[] \to [] \\ (x':xs') \to \sdr{}rev xs' (x':[])\edr}
\> \\
~ \\
& (Focus on the second branch and recall that $\rho$ contains \<rev xs
  []\> so alternative 3 of \<\asdr \aedr\> is triggered and the
  expression is generalized) \\ 
~ \\
= &
\< \sdr{}rev xs' (x':[])\edr
\> \\
~ \\
 & (Generalize the expression with \<rev xs []\>)  \\
~ \\
= &
\< [\sdr{}(x':[])\edr/zs]\sdr{}rev xs' zs\edr
\> \\
~ \\
= & \< [(x':[])/zs]\sdr{}rev xs' zs\edr 
\> \\
~ \\
 & (Put  ($h_1$, \<rev xs' zs\>) in $\rho$ and transform 
   according to the rules of the algorithm) \\
~ \\
= & \<[(x':[])/zs]\LETREC h_1 xs ys = \hscase{xs}{[] \to ys\\ (x':xs') \to h_1 xs' (x':ys)}
\> \\
& \hspace{4.8em}\< \IN h_1 xs' (x':[])
\> \\
~ \\
= & \< \LETREC h_1 xs ys = \hscase{xs}{[] \to ys\\ (x':xs') \to h_1 xs' (x':ys)}
\> \\
& \< \IN h_1 xs' (x':[])
\> \\
~ \\
& (Putting the two parts together) \\
~ \\
& \< \hscase{xs}{[] \to [] \\ (x':xs') \to \LETREC h_1 xs ys = \hscase{xs}{[] \to\!\! ys\\ (x':xs') \to\!\! h_1 xs' (x':ys)} \\
\qquad \quad \quad \quad \quad \quad \quad \quad \quad \quad \quad \quad \quad \quad \quad \quad \quad \quad \quad \quad \quad \quad \quad \quad \quad \quad \IN h_1 xs' (x':[]) }
\> \\
\end{tabular}
\caption{Example of downwards generalization}
\label{fig:accpar}
\end{figure}

\section{Correctness}
\label{sec:correctness}

The problem with using previous deforestation and supercompilation
algorithms in a call-by-value context is that they might change
the termination properties of programs. In this section we prove that our supercompiler
both terminates itself, and preserves program termination behavior for all input. 

\subsection{Termination}

In order to prove that the algorithm terminates we show that each
recursive application of \<\sdr{}\edr\> in the right-hand sides of
Figure \ref{fig:drivalg} and \ref{fig:adrivalg} has a strictly
smaller weight than the left-hand side. 

The weight of an expression
is one plus the sum of the weight of its subexpressions, where
variables, primitive numbers and function names have weight two. The
weight of a fresh variable not in the initial input is one.

\begin{defi}
\label{def:expweight}
The weight of a variable {\it x} in the initial input, a primitive number {\it n},
and a function name {\it g} is 2. The weight of a fresh variable not in the
initial input is 1. The weight of any composite expression ($n
\geq 1$) is $|s(e_1,\ldots, e_n)| = 1 + \sum_{i = 1}^n |e_i| $.
\end{defi}

\begin{defi}
  Let {\it S} be a set with a relation $\leq$. Then $(S, \leq)$ is a
  quasi-order if $\leq$ is reflexive and transitive.
\end{defi}

\begin{defi}
  Let $(S, \leq)$ be a quasi-order. $(S, \leq)$ is a well-quasi-order
  if, for every infinite sequence $s_0,s_1, \ldots \in S$, there exist
  $i < j$ such that $s_i \leq s_j$
\end{defi}

The following lemma tells us that the set of finite sequences over a
well-quasi-ordered set is well-quasi-ordered, with one proof by
\citet{Nash-Williams:1963:On}:

\begin{lem}[Higman's lemma]
\label{lem:higman}
If a set $S$ is well-quasi-ordered, then the set $S^{*}$ of finite
sequences over $S$ is well-quasi-ordered. 
\end{lem}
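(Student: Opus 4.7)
The plan is to prove Higman's lemma by the classical Nash-Williams minimal bad sequence argument. Lift $\leq$ to $S^*$ by declaring $\langle x_1,\ldots,x_n\rangle \leq \langle y_1,\ldots,y_m\rangle$ iff there exist $i_1 < \cdots < i_n \leq m$ with $x_j \leq y_{i_j}$ for all $j$. Suppose, for contradiction, that $(S^*, \leq)$ is not well-quasi-ordered; then there exists a \emph{bad} sequence $(s_i)_{i \geq 0}$ with $s_i \not\leq s_j$ for all $i < j$. Using dependent choice, I would construct a \emph{minimal} bad sequence: pick $s_0$ of minimum length among first entries of any bad sequence, then $s_1$ of minimum length among second entries of bad sequences beginning with $s_0$, and so on.

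Since the empty sequence embeds into every sequence, no $s_i$ is empty, so write $s_i = \langle a_i \rangle \cdot t_i$ with $a_i \in S$ and $t_i \in S^*$. From the well-quasi-ordering of $S$ I would extract indices $i_0 < i_1 < \cdots$ with $a_{i_0} \leq a_{i_1} \leq \cdots$, using the standard auxiliary fact that any infinite sequence drawn from a well-quasi-ordered set contains an infinite non-decreasing subsequence (the set of ``terminal'' indices $i$, for which no later $j$ satisfies $a_i \leq a_j$, must itself be finite, for otherwise it would enumerate a bad sequence in $S$). Now consider the spliced sequence $s_0, s_1, \ldots, s_{i_0 - 1}, t_{i_0}, t_{i_1}, t_{i_2}, \ldots$. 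Since $|t_{i_0}| < |s_{i_0}|$, the minimality of the chosen bad sequence forces this new sequence to fail to be bad, so it admits a pair of positions $u < v$ with the $u$-th entry embedding into the $v$-th.

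A case split on where $u$ and $v$ fall then rules out every possibility. If both entries lie in the initial segment $s_0, \ldots, s_{i_0 - 1}$, we directly contradict the badness of $(s_i)$. If the first entry is some $s_j$ with $j < i_0$ and the second is $t_{i_k}$, then $s_j \leq t_{i_k} \leq s_{i_k}$ (since $t_{i_k}$ trivially embeds into $s_{i_k}$ by dropping the head), again contradicting badness. Finally, if both entries are tails $t_{i_k} \leq t_{i_l}$ with $k < l$, then prepending $a_{i_k} \leq a_{i_l}$ yields $s_{i_k} \leq s_{i_l}$, the final contradiction. The main obstacle is the minimal bad sequence construction itself, which tacitly invokes dependent choice and hinges on the fact that the lengths live in $\mathbb{N}$, so a minimum always exists at every step; once this setup is in place, the case analysis closing out the contradiction is short and essentially forced.
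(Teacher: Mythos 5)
Your proof is correct: it is the classical Nash-Williams minimal bad sequence argument, and this is precisely the proof the paper defers to, since it does not prove Higman's lemma itself but cites Nash-Williams (1963), whose argument is the one you give (and whose minimal-counterexample style the paper also mirrors in its proof of Kruskal's Tree Theorem). No gaps: the extraction of a non-decreasing subsequence of heads, the splice at $i_0$, and the three-way case analysis are exactly the standard steps.
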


The weight of the entire transformation is a triple that contains the
maximum length of the memoization list $\rho$ denoted by {\em N}, the
weight of the term being transformed and the weight of the current
term in focus. That such an {\em N} exists follows from Kruskal's Tree
Theorem \citep{Dershowitz:1987:Termination} and the homeomorphic
embedding relation being a well-quasi-order.

\begin{thm}[Kruskal's Tree Theorem]
\label{thm:krusktree}
  If {\it S} is a finite set of function symbols, then any infinite sequence
  $t_1, t_2, \ldots$ of terms from the set {\it S} contains
  two terms $t_i$ and $t_j$ with $i < j$ such that $t_i \homemb t_j$.
\end{thm}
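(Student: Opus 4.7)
The plan is to prove the theorem by contradiction using Nash-Williams' classical \emph{minimal bad sequence} argument. Call an infinite sequence $t_1, t_2, \ldots$ \emph{bad} if no pair of indices $i < j$ satisfies $t_i \homemb t_j$, and suppose for contradiction that some bad sequence over $S$ exists. The goal is to manufacture an embedding $t_i \homemb t_j$ from a hypothetical bad sequence, using Higman's Lemma on the immediate subterms.

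First I would construct a canonical bad sequence by greedy minimization: pick $t_1$ to be a term of minimum weight (in the sense of Definition~\ref{def:expweight}) such that some bad sequence begins with $t_1$; pick $t_2$ of minimum weight such that some bad sequence extends $t_1, t_2$; continue inductively. This construction is well-defined because weights are positive integers and the set of bad continuations is assumed nonempty at each stage. Let $B$ be the set of all proper subterms of the $t_i$ in this minimal sequence. I would then show that $(B, \homemb)$ is a well-quasi-order. Indeed, if $b_1, b_2, \ldots$ were a bad sequence drawn from $B$, let $k$ be the smallest index such that $b_1$ is a proper subterm of $t_k$; then $b_1$ has strictly smaller weight than $t_k$, and $t_1, \ldots, t_{k-1}, b_1, b_2, \ldots$ remains bad, because any embedding $t_i \homemb b_m$ would lift via the diving rule $e \homemb f_\ell \implies e \homemb s(f_1,\ldots,f_n)$ of Definition~7 to an embedding $t_i \homemb t_{k'}$ (where $t_{k'}$ is the original term containing $b_m$ as a subterm), contradicting the badness of the original minimal sequence. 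This contradicts the minimal choice of $t_k$.

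Now write each $t_i$ as $s_i(u_{i,1},\ldots,u_{i,n_i})$ with $s_i \in S$ and $u_{i,j} \in B$. Since $S$ is finite, the pigeonhole principle yields an infinite subsequence $t_{i_1}, t_{i_2}, \ldots$ whose root symbols all agree, and (by passing to a further infinite subsequence if necessary, using finiteness of arities of any fixed symbol over bounded weights) all arities agree as well. Since $(B, \homemb)$ is a well-quasi-order, Higman's Lemma (Lemma~\ref{lem:higman}) applied to the finite sequences of immediate subterms supplies indices $p < q$ with $(u_{i_p,1},\ldots,u_{i_p,n}) \leq^{*} (u_{i_q,1},\ldots,u_{i_q,n})$ pointwise under $\homemb$. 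Combining this pointwise embedding with the componentwise rule of Definition~7 produces $t_{i_p} \homemb t_{i_q}$, contradicting the badness of the minimal sequence. This contradiction establishes that no bad sequence exists, which is exactly the statement of the theorem.

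The main obstacle is the bookkeeping in the minimal bad sequence construction --- in particular, verifying that replacing a tail with subterms preserves badness (the diving-rule lift above) --- together with bridging Higman's \emph{subsequence} embedding and the \emph{componentwise/same-arity} structural rule of Definition~7. The cleanest way around the arity mismatch is to apply pigeonhole twice, once on root symbol and once on arity, which is harmless because the arity of a term is bounded by its weight; alternatively, one can strengthen the embedding rule to allow subsequence matching at each node, which the diving rule already supports. With these two technicalities handled, the argument reduces to the standard Kruskal proof and the theorem follows.
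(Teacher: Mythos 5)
Your proposal is correct and follows essentially the same route as the paper's proof: a Nash-Williams minimal bad sequence argument, using minimality to show the proper subterms are well-quasi-ordered, finiteness of $S$ to fix the root symbols, and Higman's Lemma on the sequences of immediate subterms to force an embedding $t_{i_p} \homemb t_{i_q}$. The only differences are cosmetic --- the paper collapses integers and variables to two $0$-ary constructors and speaks of a quasi-ascending chain of root symbols where you use pigeonhole, and your extra pigeonhole on arities is already subsumed by the paper's ``arity respecting'' assumption on uniform terms.
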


\begin{proof}[Proof (Similar to \citet{Dershowitz:1987:Termination})]
  Collapse all integers to a single 0-ary constructor, and all
  variables to a different 0-ary constructor.

Suppose the theorem were false. Let the infinite sequence
$\overline{t} = t_1, t_2, \ldots$ of terms be a minimal
counterexample, measured by the size of the $t_i$. By the minimality
hypothesis, the set of proper subterms of the $t_i$ must be
well-quasi-ordered, or else there would be a smaller counterexample
$t_1, t_2, \ldots, t_{l-1}, s_1, s_2, \ldots$, for some {\it l} such that $s_1$ is a subterm of $t_1$ and all $s_2,\ldots$ are subterms of one of $t_l, t_{l+1},\ldots$. (None of $t_1, t_2, \ldots, t_{l-1}$
can embed any of $s_1, s_2, \ldots$, since that would mean that $t_i$
also is embedded in some $t_j, i < l \leq j$).

Since the set {\it S} of function symbols is well-quasi-ordered by $\geq$,
there must exist an infinite subsequence $\overline{r}$ of
$\overline{t}$, the root (outermost) symbols of which constitute a
quasi-ascending chain under $\leq$. (Any infinite sequence
of elements of a well-quasi-ordered set must contain an infinite chain
of quasi-ascending elements). Since the set of proper subterms is
well-quasi-ordered, it follows by Lemma \ref{lem:higman} that the
set of finite sequences consisting of the immediate subterms of the
elements in $\overline{r}$ is also well-quasi-ordered. But then there would have to
be an embedding in $\overline{t}$ itself, in which case it would not
be a counterexample.
\end{proof}

We will show that each step of the driving algorithm will reduce the
weight of what is being transformed. The constant {\em N} in the
weight is the maximum length of the sequence of terms that are not
related to each other by the homeomorphic embedding. 

\begin{cor}
Any infinite sequence $t_1,t_2, \ldots \in T^*$ contains two terms $t_i$ and $t_j$ with $i < j$ such that $t_i \homemb t_j$.
\end{cor}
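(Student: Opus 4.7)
The plan is to read the corollary as a routine consequence of Kruskal's Tree Theorem (Theorem \ref{thm:krusktree}) just proved, possibly combined with Higman's Lemma (Lemma \ref{lem:higman}). Kruskal's theorem already delivers essentially the desired statement — in any infinite sequence of terms there are $i<j$ with $t_i \homemb t_j$ — but under the extra hypothesis that the underlying set $S$ of function symbols is finite, whereas the corollary drops that hypothesis.

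The first step I would take is to verify that the finiteness hypothesis is in fact met in our setting, so that Kruskal applies. By Definition \ref{def:unifterm}, the outermost symbol of a uniform term lies in $\mathcal{G} \cup K$ together with a fixed finite collection of syntactic-form symbols ($\hskwd{caseof}$, $\hskwd{let}$, $\hskwd{letrec}$, $\hskwd{primop}$, $\hskwd{lambda}$, $\hskwd{apply}$). For any given supercompilation run only finitely many constructor and function names actually occur in the input program and are reachable via inlining, so the effective $K$ and $\mathcal{G}$ are finite. The only symbols ever freshly introduced by the driving algorithm are new function names $h$ and new variables; these can be treated uniformly with the variables already collapsed in the proof of Theorem~\ref{thm:krusktree} (all variables folded to a single 0-ary constructor). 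Hence the symbol alphabet at play is finite, and Theorem~\ref{thm:krusktree} yields that $(T, \homemb)$ is a well-quasi-order.

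If $T^*$ is to be read as the set of finite sequences over $T$, a second step is needed: invoke Higman's Lemma to lift the well-quasi-order from $T$ to $T^*$ under the pointwise extension of $\homemb$, and then simply unfold the definition of well-quasi-order to extract the two indices $i<j$. If $T^*$ is instead just decorative notation for the term universe, the corollary is an immediate restatement of Kruskal's theorem once the finiteness observation above has been made.

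The main obstacle is not mathematical but bookkeeping: one must be explicit about why the a priori unbounded supplies of variables, integers, constructors and recursive function names cause no difficulty. Variables and integers are handled exactly as in the preceding proof; constructors and function names are handled by the observation that only those mentioned in the (finite) input program, together with the finitely many fresh $h$ generated along any concrete run, can ever appear. With that remark in place the corollary is immediate.
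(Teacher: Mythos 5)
Your proposal is correct and follows essentially the same route the paper intends: the corollary is an immediate consequence of Kruskal's Tree Theorem (with Higman's Lemma available if $T^*$ is read as finite sequences), once one notes that integers and variables are collapsed as in the theorem's proof and that only the finitely many constructor and function symbols of the program (plus finitely many fresh $h$) ever occur, so the symbol alphabet is effectively finite. Your extra bookkeeping about the freshly generated names merely makes explicit what the paper leaves implicit.
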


\begin{cor}
\label{cor:maxn}
There is a maximum {\it N} such that $t_1,t_2, \ldots, t_N \in T^*$ contains no terms $t_i$ and $t_j$ with $i < j$ and $t_i \homemb t_j$.
\end{cor}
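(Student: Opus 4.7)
The plan is to derive this corollary from the preceding one by contradiction, via a K\"onig-style compactness argument on the tree of bad prefixes.

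First, I would suppose for contradiction that no such maximum $N$ exists. Call a finite sequence $t_1,\ldots,t_k$ of terms \emph{bad} when, for all $i < j$, it is not the case that $t_i \homemb t_j$. The failure of the bound says precisely that bad sequences of every finite length exist. Let $\mathcal{B}$ denote the set of all bad sequences, ordered by the prefix relation; since every prefix of a bad sequence is itself bad, $\mathcal{B}$ forms a tree, and the supposition says this tree has paths of unbounded depth.

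Next, I would extract an infinite bad sequence from $\mathcal{B}$ by K\"onig's lemma. Every term under consideration is built from the finite signature of function symbols from Definition~\ref{def:unifterm} that can actually appear during the transformation of the fixed input program, and by Definition~\ref{def:expweight} only finitely many distinct terms can occur below any given weight bound. Hence only finitely many terms are available to occupy the $k$-th position of any bad sequence of interest, so $\mathcal{B}$ is finitely branching at each depth; being unbounded, it has an infinite branch $r_1, r_2, \ldots$ by K\"onig's lemma, and this branch is by construction a bad sequence. Applying the preceding corollary to $r_1, r_2, \ldots$ yields indices $i < j$ with $r_i \homemb r_j$, contradicting badness. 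Therefore a maximum $N$ must exist.

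The main obstacle is justifying the finite-branching step required to invoke K\"onig's lemma. In a general well-quasi-order, bad sequences can be of arbitrary finite length without any infinite bad sequence existing: the natural numbers under $\leq$ are a standard counter-example, admitting strictly decreasing bad sequences of every length but no infinite one. So the finite branching of $\mathcal{B}$ is not a purely order-theoretic consequence of $\homemb$ being a well-quasi-order; it must be read off from the concrete situation, namely the finite signature of Definition~\ref{def:unifterm} combined with the weight bound on terms actually generated by the driving algorithm on a fixed input.
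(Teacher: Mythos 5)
There is a genuine gap, and it sits exactly where you located the ``main obstacle'': the finite-branching step cannot be justified, and in fact the global reading of the statement that your argument targets is false. Since $e \homemb f$ forces $|e| \leq |f|$ (embedding is size-monotone), any strictly size-decreasing sequence of terms over the signature of Definition~\ref{def:unifterm} --- say nested lambdas of depths $L, L-1, \ldots, 1$ --- is a bad sequence, and such sequences exist at every finite length $L$. So over all of $T$ there is no maximum $N$, and the tree $\mathcal{B}$ is infinitely branching (infinitely many terms of the signature can occupy each position). Your attempted repair --- restricting to ``terms that can actually appear during the transformation of the fixed input program'' below ``any given weight bound'' --- is both a change of the statement and circular: no a priori weight bound on the terms produced by driving is available, because the lexicographic measure of Definition~\ref{def:drvweight} explicitly permits the term-weight components to grow whenever $\rho$ grows, and knowing that only finitely many distinct terms arise in a run is essentially the termination property that the constant $N$ is being introduced to prove. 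As you yourself observe with the $\mathbb{N}$ example, unbounded finite bad sequences are compatible with well-quasi-orderedness, so no compactness argument can bridge this without an extra finiteness hypothesis that the paper does not supply.

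For comparison, the paper offers no proof at all: the corollary is stated as immediate from the preceding one, and the surrounding text treats $N$ as ``the maximum length of the sequence of terms that are not related to each other by the homeomorphic embedding.'' The only reading under which the corollary is both true and immediate is the per-sequence one: given a fixed infinite sequence $t_1, t_2, \ldots$, the preceding corollary yields some $i < j$ with $t_i \homemb t_j$, so every bad prefix has length at most $j-1$ and there is a maximal such length $N$ (depending on the sequence). Under that reading no K\"onig-style argument is needed, but the resulting $N$ is not a constant fixed in advance of the run, which is a weakness of the paper's weight-based termination argument rather than something your proposal could have recovered.
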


\noindent We define the weight of driving a term as:
\begin{defi}
\label{def:drvweight}
The weight of a call to the driving algorithm is $|\dr{e}{\R}| = (N -
|\rho|, |\con{e}|, |e|)$
\end{defi}

Tuples must be ordered for us to tell whether the weight of a term
actually decreases from driving it. We use the standard 
lexical order between tuples.

\begin{defi}
\label{def:order}
The order between two tuples $(n_1,n_2, n_3)$ and $(m_1, m_2, m_3)$ is:
$$
\begin{array}{ll}
  (n_1, n_2, n_3) < (m_1, m_2, m_3) & \text{if $n_1 < m_1$}\\
  (n_1, n_2, n_3) < (m_1, m_2, m_3) & \text{if $n_1 = m_1$ and $n_2 < m_2$}\\
  (n_1, n_2, n_3) < (m_1, m_2, m_3) & \text{if $n_1 = m_1$, $n_2 = m_2$ and $n_3 < m_3$}\\
\end{array}
$$
\end{defi}

We also need to show that the memoization list $\rho$ only contains
elements that were in the initial input program:
\begin{lem}
\label{lem:extrho}
  The second component of the memoization list $\rho$, can only
  contain terms from the set {\it T}.
\end{lem}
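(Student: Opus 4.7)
The plan is to prove this by induction on the execution of the driving algorithm, strengthening the statement to a suitable invariant. The memoization list is only ever extended in alternative 4 of $\D_{app}$, where the pair $(h,\R\langle g\rangle)$ is appended. Hence it suffices to show that $\R\langle g\rangle$ is a term in $T$ at every such point. Since $g \in \mathcal{G}$ is itself a $0$-ary symbol application and therefore in $T$, the real content of the lemma is that the driving context $\R$ occurring in any recursive call is built only out of $T$-terms, so that plugging a $T$-term into its hole produces another $T$-term.

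Accordingly, I would strengthen the claim to the following invariant, maintained at every recursive invocation $\dr{e}{\R}$ and $\adr{g}{\R}$: the expression $e$ lies in $T$, and every expression appearing inside $\R$ lies in $T$ (so that $\R\langle t\rangle \in T$ for all $t \in T$). The base case is the initial call, where $\R = \boxempty$ (trivially built from $T$-terms) and $e$ is the input program, which is a language expression and hence in $T$.

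For the inductive step I would walk through each rule of Figures \ref{fig:drivalg} and \ref{fig:adrivalg}. The rules fall into three groups. The first group (R1--R3, R5, R7, R10, R15--R19, and alternatives 4a--4c of $\D_{app}$) either extends $\R$ by a language construct whose auxiliary sub-expressions come from $e$, or zooms in on an immediate sub-expression of $e$; in both cases the invariant follows directly from $e \in T$ and the fact that sub-terms of $T$-terms are themselves in $T$. The second group (R4, R6, R8, R14, R20) recurses on sub-expressions in an empty context, which is vacuously a $T$-context. The third group (R9, R11--R13, R16, and alternative 3 of $\D_{app}$) performs substitutions: here I would use a routine side-lemma stating that capture-free substitution of $T$-terms for variables in a $T$-term yields a $T$-term, together with the observation that freshly generated variables are still $0$-ary symbols in the uniform view used by Definition \ref{def:unifterm}.

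The main obstacle is not conceptual but bookkeeping: alternative 4 of $\D_{app}$ both extends $\rho$ \emph{and} makes a recursive call on $\drp{\con{v}}{\boxempty}$, where $v$ is fetched from the global map $G$. To handle this cleanly I would include in the invariant the additional assumption that every $v$ with $(g,v) \in G$ is a $T$-term, which is justified because $G$ is populated only from the original program (via the encoding of \kw{letrec} and rule R14), and these right-hand sides are all in $T$ by assumption on the input. With this observation, the inductive step for alternative 4 goes through, and the lemma follows.
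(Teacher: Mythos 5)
Your proposal is correct and follows essentially the same route as the paper: the paper's proof is exactly the observation that the fresh function names $h$ are the only non-$T$ material the algorithm creates, that they are introduced only via rule R3, and that in every alternative of $\D_{app}$ they are kept outside the recursive calls, so $\rho$ is only ever extended with $\con{g}$ terms built from original-program material (fresh variables and integers being harmless under the uniform-term view). Your strengthened invariant over all rules of Figures \ref{fig:drivalg} and \ref{fig:adrivalg}, including the condition on the range of {\it G}, is just a more systematic rendering of that same by-inspection argument.
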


\begin{proof}
  Integers and fresh variables are equal, up to $\homemb$, to the
  already existing integers and variables. Our only concern are the
  rules that introduce new terms that are not in {\it T}. The new function
  names $h$ are the only new terms introduced by the algorithm. By
  inspection of the rules it is clear that only rule R3 introduces
  such new terms. Inspection of the RHS of $\adr{}{}$:

  \begin{enumerate}[\bf\phantom01:]
    \item No recursive application in the RHS.
    \item No recursive application in the RHS.
    \item No new terms are created and the memoization list
        $\rho$ is not extended.
   \item[\bf 4a:] No new terms are created and the memoization list
        $\rho$ is not extended.
    \item[\bf 4b:] The newly created term $h\,\overline{x}$ is
      kept outside of the recursive call of the driving algorithm. The
      memoization list $\rho$, is extended with terms from {\it T}.
    \item[\bf 4c:] No new terms are created,
      and the memoization list $\rho$, is extended with terms from {\it T}.
  \end{enumerate}
\end{proof}

With these
definitions in place, we can formulate a lemma that claims the weight is
decreasing in each step of our transformation.

\begin{lem}
\label{lem:weightdec}
  For each rule $\dr{e}{\R} = e_1$ in Figure
  \ref{fig:drivalg} and Figure \ref{fig:adrivalg} and each recursive
  application $\D\llbracket e'\rrbracket_{\R',{\it G},\rho'}$ in $e_1$, 
  $|\D\llbracket e'\rrbracket_{\R',{\it G},\rho'}| < |\dr{e}{\R}|$
\end{lem}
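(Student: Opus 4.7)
The plan is to carry out an exhaustive case analysis on the twenty rewrite rules of Figure~\ref{fig:drivalg} and on the four alternatives of \<\asdr{}\aedr\> in Figure~\ref{fig:adrivalg}. For each rule I locate every recursive driving call on the right-hand side, compute its weight triple $(N - |\rho'|, |\R'\langle e'\rangle|, |e'|)$ (with the appropriate primes) and verify lexicographically, according to Definition~\ref{def:order}, that it is strictly smaller than the weight of the left-hand side $(N - |\rho|, |\con{e}|, |e|)$. Lemma~\ref{lem:extrho} ensures that only terms from $T$ ever enter $\rho$, so Corollary~\ref{cor:maxn} supplies a genuine constant $N$, and the first coordinate is well-defined.

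I would organise the cases by which coordinate is responsible for the strict decrease. First, alternative~4 of \<\asdr{}\aedr\> extends $\rho$ with $(h, \con{g})$ before recursing into $\con{v}$, so $|\rho'| = |\rho| + 1$ and the first coordinate strictly drops, regardless of what happens to the rest. Second, rules R10 and R19 rearrange context without reducing anything: the focused term is replaced by one of its proper subterms while the outer context gains one node, so the second coordinate is preserved and the third strictly decreases. Third, and most numerous, are the rules whose right-hand side decreases the second coordinate: the reduction rules R7, R9, R11, R12, the first branch of R13, R16 and R17 rewrite a redex of positive weight into a strictly smaller result while emptying the surrounding $\R$ into focus; the context-pushing rules R8, the second branch of R13, R15 and R18 replace one occurrence of $\R$ by an empty context on each recursive call; and R14 simply drops the $\LETREC$ wrapper before focusing on $\con{e}$. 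The trivial rules R1--R6, R20 and alternatives~1--2 of \<\asdr{}\aedr\> either have no recursive driving call or invoke driving on a proper subterm in an empty context, and are therefore immediate.

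The two genuinely delicate points are rules R15/R16 and alternative~3 of \<\asdr{}\aedr\>. For R15 I need $|\con{[p_i/x]e_i}| < |\con{\CASE x \OF \{p_j \arr e_j\}}|$: the discriminee $x$ has weight $2$ while the fresh pattern variables $\overline{y}$ in $p_i = k\,\overline{y}$ have weight $1$ by Definition~\ref{def:expweight}, so each substituted occurrence adds at most $|p_i| - 2$ to the weight, and this local growth is strictly compensated by the outer \kw{case}-node, the original discriminee, and the sibling branches that the rewrite discards. For alternative~3 I must argue that every component returned by $\mathit{split}(\con{g}, e_1)$ has weight strictly less than $|\con{g}|$; this holds because the alternative only fires when $e_1 \homemb \con{g}$ admits a non-trivial most specific generalization, in which case $f_g$ and each $f_j$ are proper subterms of $\con{g}$. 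The main obstacle is therefore the substitution accounting in R15/R16: a careful but entirely elementary inequality in the weight function, requiring no new machinery beyond Definitions~\ref{def:expweight} and~\ref{def:order}.
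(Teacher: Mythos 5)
The paper itself states Lemma~\ref{lem:weightdec} without proof, so your proposal has to stand on its own; the overall rule-by-rule plan is the natural one and most of your cases are fine, but two of the justifications you give are not. First, for alternative~3 of $\D_{app}$ (and for the split-generated calls in alternative~4a, which your first-coordinate remark does not cover, since those calls run under the \emph{unextended} $\rho$): $f_g$ is not a proper subterm of $\con{g}$. When the root symbols agree, $f_g$ is the msg $t_g$, i.e.\ $\con{g}$ with some subterms replaced by fresh variables; when they disagree, $f_g = s(\overline{x})$ with fresh $\overline{x}$ --- a case your sketch ignores. The intended source of the strict decrease is Definition~\ref{def:expweight}'s weight~$1$ for fresh variables: replacing a subterm of weight $\geq 2$ by a weight-$1$ variable shrinks the term, and one must still argue that something of weight $\geq 2$ actually gets replaced (e.g.\ because alternatives~1--2 would otherwise have fired). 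None of this follows from the ``proper subterm'' claim you make.

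Second, and more seriously, R15. The recursive call has second component $|[p_i/x]\con{e_i}|$, which exceeds $|\con{e_i}|$ by $m\cdot(|p_i|-2)$ where $m$ is the number of occurrences of $x$ in $\con{e_i}$; this grows linearly in $m$, whereas the compensation you invoke (the discarded $\CASE$ node, the discriminee and the sibling branches) is a constant independent of $m$. Moreover the pattern variables of $p_i$ are binders of the program being driven, hence weight $2$ by Definition~\ref{def:expweight}, not fresh weight-$1$ variables, so $|p_i|-2\geq 1$ already for unary constructors; with a non-nullary pattern and enough occurrences of $x$ in the branch body or in $\R$, the claimed inequality simply does not hold as you have set it up. Closing this case needs more than ``elementary bookkeeping'': either an additional invariant bounding the occurrences of the scrutinised variable, or a different accounting --- compare R13, whose substitution branch is guarded by $x\in\linear(f)$ precisely so that its weight calculation closes, while R15 carries no such guard. (Also note that R16 involves no substitution at all --- it merely builds a $\LET$ --- so it is not delicate.) In short, the step you defer as routine is exactly the point where the proof, as sketched, does not go through.
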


\begin{lem}[Totality]
\label{lem:totality}
For all expressions $\con{e}$, $\dr{e}{\R}$ is matched by a unique
rule in Figure \ref{fig:drivalg}.
\end{lem}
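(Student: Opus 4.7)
The plan is to proceed by case analysis on the structure of $e$ (per the grammar in Figure~\ref{fig:language}) together with the shape of the driving context $\R$. Since the rules of Figure~\ref{fig:drivalg} are stipulated to be tried in declaration order, uniqueness reduces entirely to completeness: if every pair $(e,\R)$ matches at least one rule, the first match is unambiguous. I will therefore focus on showing that the disjunction of the left-hand sides covers all possible inputs.

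For the atomic forms the matches are immediate: $e = n$ triggers R1, $e = x$ triggers R2, $e = g$ triggers R3, and a letrec triggers R14. For a constructor application $k\,\overline{e}$ and a lambda abstraction $\lambda\overline{x}.e'$, rules R4 and R6 fire whenever $\R = \boxempty$, and in any non-empty context they are caught by the general fallback R20.

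The compound forms with multiple candidate rules require a little more care, and for each I will argue that the guards partition the remaining space. For applications I split on the head: R5 for a variable head $x\,\overline{e}$, R9 for a lambda head $(\lambda\overline{x}.f)\,\overline{e}$, and R10 as the declared default for any other head. For primop applications R7 matches first when both operands are numerals; otherwise R8 applies, and its three ``if'' clauses exhaust the possibilities in terms of the annoying grammar $a$ (either the whole expression is annoying, or only the left operand is a numeral or annoying, or neither). For let-bindings, R11 handles a numeric right-hand side, R12 a non-fresh variable right-hand side, and R13 uses its explicit ``otherwise'' branch to catch everything that survives the strict-and-linear guard. For case-expressions I split on the scrutinee: R15 for a variable, R16 for a constructor application, R17 for a numeral, R18 for an annoying expression, and R19 as the declared default.

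The main obstacle I expect is bookkeeping rather than mathematics: I need to verify that the three clauses of R8 form a genuine disjoint cover of primop expressions once R7 has been excluded, that the freshness side-condition on R12 does not create a gap with R13, and that R20 really is the rule reached in the remaining corner cases (notably $k\,\overline{e}$ or $\lambda\overline{x}.e$ sitting in a non-empty $\R$). Once this is checked, every expression form demonstrably triggers at least one rule, and the ordered-match convention yields the unique first match required by the lemma.
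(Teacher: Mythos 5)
Your argument is correct, and it is in fact more detailed than what the paper offers: the paper states Lemma~\ref{lem:totality} without an explicit proof, treating it as evident from inspection of Figure~\ref{fig:drivalg}. The essential content is exactly the two observations you lead with, but note that the heavy structural case analysis is not really needed for totality itself: rule R20 has left-hand side $\dr{e}{\R} = \con{e}$ with no side condition, so it matches \emph{every} expression in \emph{every} context, which gives completeness in one line; uniqueness then comes solely from the convention that rules are tried in declaration order, so ``the'' matching rule means the first one. Your finer-grained analysis (R7/R8 with its ``otherwise'' clause, R11--R13 for lets, R15--R19 for cases, R20 catching constructor applications and lambdas in non-empty $\R$) is sound -- the worries you flag are all discharged by the explicit ``otherwise''/default clauses of R8, R13, R10 and R19 -- and it has the side benefit of identifying which rule actually fires for each shape of input, which is what the termination and correctness proofs implicitly rely on. So the proposal is a valid, slightly stronger elaboration of the paper's (omitted) argument rather than a different route.
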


\begin{thm}[Termination]
  The driving algorithm \<\sdr{}\edr\> terminates for all
   inputs.
\end{thm}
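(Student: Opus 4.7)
The plan is to prove termination by well-founded induction on the weight function of Definition~\ref{def:drvweight}, with essentially all the real work already delegated to Lemma~\ref{lem:weightdec}. Explicitly, I would show by induction on $|\dr{e}{\R}|$ under the order of Definition~\ref{def:order} that every call $\dr{e}{\R}$ terminates.

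First I would verify that the order of Definition~\ref{def:order} is well-founded on the range of the weight function. It is the lexicographic product of three copies of the natural-number order, so any infinite descending chain $(a_k, b_k, c_k)$ would have to stabilise in its first coordinate, then in its second, and then strictly descend in the third forever in $\mathbb{N}$, which is impossible. For this to apply, the first coordinate $N - |\rho|$ must actually lie in $\mathbb{N}$. That is precisely the content of Corollary~\ref{cor:maxn} together with Lemma~\ref{lem:extrho}: every entry of $\rho$ is a term of $T$, and by Kruskal's theorem any sequence of insertions beyond length $N$ would contain two entries $e_i \homemb e_j$ with $i < j$, at which point alternatives~1 or 2 of \<\asdr{}\aedr\> short-circuit the recursion rather than extending $\rho$ further. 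Hence $|\rho| \leq N$ throughout the run, and the weight function takes values in a well-founded order.

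With well-foundedness in hand the theorem follows at once. By Lemma~\ref{lem:totality} each call $\dr{e}{\R}$ matches a unique rule of Figure~\ref{fig:drivalg} or Figure~\ref{fig:adrivalg}, determining a finite (possibly empty) list of recursive subcalls read off its right-hand side. By Lemma~\ref{lem:weightdec} every such subcall carries a strictly smaller weight and so terminates by the induction hypothesis; hence the parent call terminates as well. The main obstacle is not in this theorem but in the supporting Lemma~\ref{lem:weightdec}, which demands a per-rule bookkeeping check across all twenty-odd driving rules --- distinguishing rules that shrink the third coordinate $|e|$ while preserving the second (R15, R18, R13, etc.), rules that shrink $|\con{e}|$ via reduction (R7, R11, R16, R17), and rules that consume one of the finite slots $N - |\rho|$ by enlarging $\rho$ (alternative~4b of \<\asdr{}\aedr\>). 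Once Lemma~\ref{lem:weightdec} is granted, the termination theorem itself reduces to a one-line appeal to well-founded induction.
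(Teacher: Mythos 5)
Your proposal follows essentially the same route as the paper's own proof: both rest on Lemma~\ref{lem:extrho}, Lemma~\ref{lem:weightdec} and Lemma~\ref{lem:totality}, with Kruskal's Tree Theorem and the well-quasi-order property ensuring the bound $N$ exists so that the weight of Definition~\ref{def:drvweight} is well-defined, and termination then following from well-foundedness of the lexicographic order on triples of naturals. Your write-up merely makes explicit the bookkeeping (that $|\rho|\leq N$ throughout, and the well-founded-induction scaffolding) that the paper leaves implicit, so it is correct and not a genuinely different argument.
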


\begin{proof}
  The weight of the transformation is defined because of Kruskal's
  Tree Theorem and the fact that the homeomorphic embedding is a
  well-quasi-order. Lemma \ref{lem:extrho} guarantees that the
  memoization list $\rho$ only contains terms from the initial
  input. By Lemma \ref{lem:weightdec} the weight of the transformation
  decreases for each step and by Lemma \ref{lem:totality} we know that
  each recursive application will match a rule.

  Since $<$ is well-founded over triples of natural numbers the system
  will eventually terminate.
\end{proof}

\subsection{Total Correctness}

The problem with previous deforestation and supercompilation
algorithms in a call-by-value context is that they might change
termination properties of programs. We prove that our supercompiler
does not change what the program computes, nor does it alter whether a program
terminates or not.

\citet{Sands:1996:Proving} shows how a transformation can change the
semantics in rather subtle ways~-- consider the function
\begin{haskell*}
f x = x + 42
\end{haskell*}

It is clear that \<f 0 \cong 42\> (where $\cong$ is semantic
equivalence with respect to the current definition). Using this
equality and replacing 42 in the function body with \<f 0\> yields:

\begin{haskell*}
f x = x + f 0
\end{haskell*}

This function will compute something entirely different than the
original definition of \<f\>. We need some tools to ensure that the
meaning of the original program is preserved and we therefore
introduce the standard notions of operational approximation and
equivalence. A general context {\em C} which is an expression with zero or
more holes in the place of some subexpressions is used, and we say that an
expression $C[e]$ is {\em closed} if there are no free variables in it.

\begin{defi}[Operational Approximation and Equivalence]
~ 
\begin{enumerate}[$\bullet$]
\item  $e$ operationally approximates $e'$, $e \opapp e'$, if for all contexts
  $C$ such that $C[e]$, $C[e']$ are closed, if evaluation of $C[e]$ terminates
  then so does evaluation of $C[e']$.
\item  $e$ is operationally equivalent to $e'$, $e \opeq e'$, if $e \opapp
  e'$ and $e' \opapp e$
\end{enumerate}
\end{defi}

The correctness of deforestation in a call-by-name setting has
previously been shown by  \citet{Sands:1996:Proving} using his
improvement theory. We use Sands's definitions for improvement and
strong improvement:

\begin{defi}[Improvement, Strong Improvement]
~ 
\begin{enumerate}[$\bullet$]
\item  $e$ is improved by $e'$, $e \costrighteq e'$, if for all contexts $C$ such
  that $C[e]$, $C[e']$ are closed, if computation of $C[e]$ terminates using
  {\it n} function calls, then computation of $C[e']$ also terminates, and
  uses no more than {\it n} function calls.
\item  $e$ is strongly improved by $e'$, $e \costrighteq_s e'$, iff $e
  \costrighteq e'$ and $e \opeq e'$.
\end{enumerate}
\end{defi}

Note that improvement, $\costrighteq$, is
not the same as the homeomorphic embedding, $\homemb$, defined
previously.

We use $e \eval^{k} v$ to denote that {\em e} evaluates to {\em v}
using {\em k} function calls (and any other reduction rule as many
times as it needs) and $e' \eval^{\leq k} v'$ to denote that {\em e'}
evaluates to {\em v'} with at most {\em k} function calls and using any
other reduction rule as many times as needed.

To state the Improvement Theorem  we view a transformation as the
introduction of some new functions from a given set of definitions. We
let $\{g_i\}_{i \in I}$ be a set of functions indexed by
some set {\em I}, where each function has a fixed arity $\alpha_i$ and are given by some definitions
\begin{equation*}
\{g_i = \lambda x_1\ldots x_{\alpha_i} . e_i\}_{i \in I}
\end{equation*}
and let $\{e_i'\}_{i \in I}$ be a set of expressions such that for
each $i \in I, \fv(e_i') \subseteq \{x_1\ldots x_{\alpha_i}\}$. 
The following results relate to the transformation of the functions
$g_i$ using the expressions $e_i'$: let $\{h_i\}_{i \in I}$ be a set
of new functions given by the definitions
\begin{equation*}
\{h_i = [\overline{h}/\overline{g}]\lambda x_1\ldots x_{\alpha_i} . e_i'\}_{i \in I}
\end{equation*}

\begin{thm}[Sands Improvement theorem]
\label{thm:sandsimp}
 If $g = e$ and $e \costrighteq C[g]$ then $g \costrighteq h$ where $h
 = C[h]$. 
\end{thm}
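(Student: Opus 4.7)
The plan is to exploit the two key structural properties of the improvement relation $\costrighteq$ developed by Sands: it is a precongruence under arbitrary program contexts, and it admits a fixed-point induction principle for recursively defined functions.

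First I would establish the basic unfolding improvement $g \costrighteq e$: every time a reduction sequence reaches $g$ in an evaluation position, exactly one function call is spent via the (Global) rule to unfold $g$ into its definition $e$, so evaluating $e$ directly in place of $g$ can only save function calls and never add any. Combining this one-step observation with the hypothesis $e \costrighteq C[g]$ by the transitivity of $\costrighteq$ immediately yields $g \costrighteq C[g]$, which says precisely that $g$ is already a pre-fixed point of the functional $F(x) = C[x]$ with respect to improvement.

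Next I would iterate this pre-fixed-point property using precongruence: applying the context $C[\cdot]$ to both sides of $g \costrighteq C[g]$ gives $C[g] \costrighteq C[C[g]]$, and an easy induction on $n$ together with transitivity yields $g \costrighteq C^n[g]$ for every $n \in \mathbb{N}$. This produces an infinite family of finite approximants bounding $g$ from above in the improvement ordering.

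The main obstacle is passing from these finite approximants to the genuine fixed point $h = C[h]$. Here I would invoke Sands's fixed-point induction principle for improvement: for any closing context $D$, a terminating computation of $D[h]$ performs only finitely many unfoldings of $h$, say $k$, and therefore has the same cost as a computation in which only $k$ nested layers of $C[\cdot]$ are ever reached. Applying precongruence to $g \costrighteq C^k[g]$ then shows that $D[g]$ fits within the same function-call budget as $D[h]$, giving $g \costrighteq h$ as required. The technical delicacy is handling potentially many, and possibly non-linear, occurrences of $h$ inside $C[\cdot]$ uniformly with the bound on unfolding depth, which requires a careful simultaneous induction on evaluation length and nesting depth; this is exactly the content of Sands's original proof (\citet{Sands:1996:Proving}), whose argument I would follow in detail.
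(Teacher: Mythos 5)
The paper does not actually prove this theorem: it is imported from \citet{Sands:1996:Proving} (and merely instantiated to the call-by-value language), so there is no in-paper proof to compare against; your proposal therefore has to stand on its own, and as it stands it has a genuine gap in the crucial step. Your first two stages are fine: $g \costrighteq e$ (each unfolding of $g$ costs exactly one call), hence $g \costrighteq C[g]$ by transitivity, and $g \costrighteq C^{n}[g]$ for all $n$ by precongruence and induction. The problem is the passage to $h = C[h]$. To establish $g \costrighteq h$ you must show: whenever $D[g]$ terminates using $n$ calls, $D[h]$ terminates using at most $n$ calls. Your argument instead \emph{starts} from ``a terminating computation of $D[h]$'' --- but termination of $D[h]$ is precisely part of what has to be proved, and the conclusion you draw (``$D[g]$ fits within the same function-call budget as $D[h]$'') is the converse inequality, i.e.\ something in the direction of $h \costrighteq g$ (or of the partial-correctness statement, Theorem \ref{thm:partcorr}), not $g \costrighteq h$. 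The termination-transfer direction is exactly what distinguishes the Improvement Theorem from partial correctness, so it cannot be obtained by conditioning on $D[h]\Downarrow$.

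Nor can the step be repaired simply by flipping the roles: starting from $D[g]\Downarrow$ with $n$ calls, the idea ``only $k$ unfoldings, hence only $k$ nested layers of $C$ are ever reached'' does not give a usable bound, because penetrating a layer of $C$ need not cost a function call (the hole of $C$ may sit in evaluation position behind call-free reductions), so the number of $C$-layers consumed by the computation of $D[C^{m}[g]]$ is not bounded by the call count $n$. This is exactly why the family of approximants $g \costrighteq C^{n}[g]$ does not by itself yield $g \costrighteq h$, and why Sands's proof is not a limit argument but a genuine unwinding/improvement-simulation induction on the cost of the \emph{source} computation $D[g]$, applying the premise $e \costrighteq C[g]$ at each unfolding of $g$ and matching it against the unfolding of $h$ into $C[h]$. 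Deferring that step with ``this is exactly the content of Sands's original proof'' concedes the only hard part of the theorem rather than supplying it; the surrounding reductions you give (unfolding improvement, transitivity, precongruence) are the easy bookkeeping around it.
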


\begin{thm}[Cost-equivalence theorem]
\label{thm:sandscost}
If $e_i \costeq e_i'$ for all $i \in I$, then $g_i \costeq h_i$, $i
\in I$.
\end{thm}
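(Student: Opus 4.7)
The plan is to obtain this theorem as a direct corollary of Sands's Improvement Theorem (Theorem~\ref{thm:sandsimp}) by applying it twice---once for each direction of the cost equivalence. I begin by unfolding the hypothesis: $e_i \costeq e_i'$ splits into $e_i \costrighteq e_i'$ and $e_i' \costrighteq e_i$, and the goal $g_i \costeq h_i$ similarly splits into $g_i \costrighteq h_i$ and $h_i \costrighteq g_i$, for every $i \in I$.

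For the forward direction, I view $e_i'$ as a context $C_i[\overline{g}]$ whose holes are the positions of recursive calls to the $g_j$'s. Since $g_i = e_i$ and the hypothesis gives $e_i \costrighteq C_i[\overline{g}]$, an application of the (evident mutually-recursive generalization of) Theorem~\ref{thm:sandsimp} yields $g_i \costrighteq h_i$, where the newly introduced family is $h_i = C_i[\overline{h}] = [\overline{h}/\overline{g}]e_i'$---precisely the $h_i$ of the theorem statement.

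For the backward direction $h_i \costrighteq g_i$, I run the argument symmetrically. The defining equation $h_i = C_i[\overline{h}]$ for the $h_i$ is structurally identical to that of $g_i$ modulo a renaming of function symbols, and the hypothesis supplies $e_i' \costrighteq e_i$. By swapping the roles of $\overline{g}$ and $\overline{h}$---viewing the $g_i$ as the ``new'' family introduced from the $h_i$ by an improving transformation of their bodies---a second application of Theorem~\ref{thm:sandsimp} delivers $h_i \costrighteq g_i$. Conjoining with the forward direction gives $g_i \costeq h_i$.

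The main obstacle is rigorously justifying the symmetric application, since Theorem~\ref{thm:sandsimp} is stated asymmetrically: it introduces a new function from the body of an existing one together with a one-sided improvement. To make the swap airtight, I would either invoke a strengthened form of the Improvement Theorem phrased directly in terms of $\costeq$ on mutually recursive definition systems (which Sands establishes), or carry out a supplementary fixed-point argument showing that $g_i$ and $h_i$ are cost-equivalent solutions of cost-equivalent recursive equations. The former route, which is how the cost-equivalence corollary appears in Sands's original development, is the cleanest: cite that corollary and verify that our $\{g_i\},\{h_i\}$ and $\{e_i\},\{e_i'\}$ satisfy its hypotheses.
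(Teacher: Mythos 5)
The paper never proves this statement: like Theorem~\ref{thm:sandsimp} and Theorem~\ref{thm:partcorr}, it is one of Sands's results, imported and instantiated to the language at hand. So your fallback option --- cite Sands's cost-equivalence result and check that the $\{g_i\}$, $\{h_i\}$, $\{e_i\}$, $\{e_i'\}$ match its hypotheses --- is exactly what the paper does, and is the defensible route.

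Your primary plan, however, has a genuine gap, and it is precisely at the point you flag. The forward half is fine. But to run Theorem~\ref{thm:sandsimp} in reverse you must take the $h_i$ (with bodies $[\overline{h}/\overline{g}]e_i'$) as the existing functions and a fresh copy of the $g$-system as the functions being introduced; the context must then have its holes at the $h$-occurrences, so the premise required is $[\overline{h}/\overline{g}]e_i' \costrighteq [\overline{h}/\overline{g}]e_i$, i.e., the hypothesis \emph{with $\overline{h}$ substituted for $\overline{g}$}. That substituted premise does not follow from $e_i' \costrighteq e_i$: improvement is a statement about the particular expressions containing $\overline{g}$, and the only congruence reasoning available uses the already-established $g_j \costrighteq h_j$, which yields $e_i \costrighteq [\overline{h}/\overline{g}]e_i$ and $e_i' \costrighteq [\overline{h}/\overline{g}]e_i'$ --- both pointing the wrong way for the chain you need; getting the needed direction amounts to assuming $h_j \costrighteq g_j$, which is the conclusion. (Even with that premise one still needs the small extra observation that $k_i = [\overline{k}/\overline{g}]e_i$ is merely a renaming of the $g$-system, hence cost equivalent to it.) So the backward direction cannot be obtained by a second black-box application of the Improvement Theorem as stated; it requires either Sands's own cost-equivalence theorem or a rerun of the simulation machinery underlying the Improvement Theorem in the opposite direction --- which is why the paper simply cites the result rather than deriving it.
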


We need a standard partial correctness result
\citep{Sands:1996:Proving} associated with unfold-fold
transformations

\begin{thm}[Partial Correctness]
\label{thm:partcorr}
If $e_i \cong e_i'$ for all $i \in I$ then $h_i \opapp g_i$, $i \in
I$. 
\end{thm}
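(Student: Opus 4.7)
The plan is to use a standard unfold--fold correctness argument in the style of \citet{Sands:1996:Proving}. The hypothesis $e_i \opeq e_i'$, combined with the fact that $\opeq$ is a congruence, will let me link the recursive behavior of the new functions $h_i$ to that of the original $g_i$.

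First, I would observe that the $h_i$s arise as a syntactic fixed point of a functional $F_h$ built from the $e_i'$s (substituting $\overline{h}/\overline{g}$), while the $g_i$s are a fixed point of the corresponding functional $F_g$ built from the $e_i$s. By $e_i \opeq e_i'$ and congruence of $\opeq$, these two functionals agree up to operational equivalence when applied to equivalent arguments: whenever $\overline{g} \opeq \overline{g}'$ we get $F_g(\overline{g}) \opeq F_h(\overline{g}')$. So the $h_i$ system is, morally, the ``same'' recursive system as the $g_i$ system, just presented through the equivalent bodies $e_i'$.

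Next, I would introduce syntactic finite approximants $h_i^{(n)}$ that unfold the recursive definition of $h_i$ exactly $n$ times and replace any further recursive call by a divergent term $\Omega$. The goal is to establish $h_i^{(n)} \opapp g_i$ by induction on $n$. The base case $n=0$ is immediate since $\Omega \opapp e$ holds for any $e$. The inductive step uses congruence of $\opapp$ together with $e_i \opeq e_i'$ to replace one level of $h_j$-calls by $g_j$-calls inside the corresponding bodies. Passing from the approximants to the full $h_i$ is done by syntactic continuity: any terminating reduction of $C[h_i]$ performs only finitely many unfoldings of each $h_j$, so $C[h_i] \!\Downarrow$ entails $C[h_i^{(n)}] \!\Downarrow$ for some $n$, and combining with $h_i^{(n)} \opapp g_i$ yields $C[g_i] \!\Downarrow$, which is exactly $h_i \opapp g_i$.

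The main obstacle is establishing the syntactic continuity step in this higher-order call-by-value language, because unfolding approximants interact with both substitution and the call-by-value evaluation order (in particular, $\Omega$ in an argument position can block termination that a fully defined $h_j$ would allow). Since the theorem is explicitly identified as a standard unfold--fold result, the cleanest route is to show that our definitional scheme for $\{g_i\}$ and $\{h_i\}$ fits the transformational schema of \citet{Sands:1996:Proving} and then invoke his Partial Correctness Theorem directly, so that the technical continuity argument is inherited from his development rather than replayed here in full.
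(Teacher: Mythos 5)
The paper offers no proof of this theorem at all: it is imported as a standard unfold--fold partial correctness result from Sands (1996), instantiated to the call-by-value setting via Sands's operational metatheory, exactly as your final paragraph recommends. So your proposal matches the paper's approach, and your approximants-plus-syntactic-continuity sketch is a faithful outline of the standard argument underlying that citation rather than a divergence from it.
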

\noindent which we combine with Theorem \ref{thm:sandsimp} 
to get total correctness for a
transformation:

\begin{cor}
\label{cor:simp}
If we have $e_i \costrighteq_s e_i'$ for all $i \in I$, then $g_i
\costrighteq_s h_i$, $i \in I$.
\end{cor}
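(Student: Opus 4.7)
The plan is to unpack strong improvement into its two constituents and then reassemble them for the pair $(g_i, h_i)$ using the Improvement and Partial Correctness theorems separately. By definition of $\costrighteq_s$, the hypothesis $e_i \costrighteq_s e_i'$ gives both $e_i \costrighteq e_i'$ and $e_i \opeq e_i'$ for each $i \in I$, and the goal $g_i \costrighteq_s h_i$ similarly breaks into $g_i \costrighteq h_i$ and $g_i \opeq h_i$.

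First I would apply the Improvement Theorem (Theorem~\ref{thm:sandsimp}) to the family of hypotheses $e_i \costrighteq e_i'$, yielding $g_i \costrighteq h_i$ for all $i \in I$. This delivers the cost-preservation half of strong improvement. We rely here on the family formulation of the Improvement Theorem, exactly analogous to the way Theorem~\ref{thm:sandscost} generalizes from a single function to the indexed case of mutually defined families.

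Second, from $e_i \opeq e_i'$, which by the definition of the $\opeq$ notation coincides with $e_i \cong e_i'$, Theorem~\ref{thm:partcorr} directly gives $h_i \opapp g_i$. For the converse approximation $g_i \opapp h_i$, I would observe that improvement is strictly stronger than operational approximation: unfolding $\costrighteq$, if $C[g_i]$ terminates (using some $n$ function calls) then so does $C[h_i]$ (using at most $n$), which is exactly what $g_i \opapp h_i$ requires after forgetting the cost bound. Combining $g_i \opapp h_i$ with $h_i \opapp g_i$ yields $g_i \opeq h_i$, and together with the already-established $g_i \costrighteq h_i$ this is precisely $g_i \costrighteq_s h_i$.

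The only potentially delicate point is the appeal to the indexed-family version of Theorem~\ref{thm:sandsimp}; the rest is a mechanical unfolding of the definitions of $\costrighteq$, $\opapp$, $\opeq$, and $\costrighteq_s$ combined with the off-the-shelf results already stated, so no further work is required.
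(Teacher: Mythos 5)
Your proposal is correct and follows essentially the same route as the paper, which obtains the corollary precisely by combining the Improvement Theorem (Theorem~\ref{thm:sandsimp}) with the Partial Correctness result (Theorem~\ref{thm:partcorr}); your extra steps (the indexed-family reading of the Improvement Theorem and the observation that improvement implies operational approximation, i.e.\ Lemma~\ref{lem:implaws}(5)) just make explicit what the paper leaves implicit.
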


Improvement theory in a call-by-value setting requires Sands
operational metatheory for functional languages
\citep{Sands:1997:From}, where the improvement theory is a simple
corollary over the well-founded resource structure $\langle
\mathbb{N}, 0, +, \geq \rangle$. For simplicity of presentation we
instantiate Sands's theorems to our language.  We use $\equiv$ to
denote expressions equal up to renaming of bound variables and borrow
a set of improvement laws that will be useful for our proof:

\begin{lem}[\citet{Sands:1996:Total}]
\label{lem:implaws}
Improvement laws
\begin{enumerate}[\em(1)]
\item If $e \costrighteq e'$ then $C[e] \costrighteq C[e']$.
\item If $e \equiv e'$ then $e \costrighteq e'$.
\item If $e \costrighteq e'$ and $e' \costrighteq e''$ then $e \costrighteq e''$
\item If $e \eval e'$ then $e \costrighteq e'$.
\item If $e \costrighteq e'$ then $e \opapp e'$.
\end{enumerate}
\end{lem}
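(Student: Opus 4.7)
The plan is to verify each of the five items in turn directly from the definition of improvement and the reduction semantics of Figure~\ref{fig:redsem}, observing that only the (Global) rule contributes to the function-call cost measure. Since the lemma is attributed to Sands, the work is essentially to check that his arguments carry over unchanged to our call-by-value language; the five items decompose cleanly and can be dispatched independently.

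For law (1), I would note that if $C'$ is any closing context for $C[e]$ and $C[e']$, then $C'[C[\cdot]]$ is itself a closing context for $e$ and $e'$, so the hypothesis $e \costrighteq e'$ applied to this combined context immediately yields the required cost bound on $C'[C[e']]$. Law (2) follows from the fact that the reduction relation is invariant under renaming of bound variables, so computations of $C[e]$ and $C[e']$ are in one-to-one correspondence and use exactly the same number of (Global) steps. Law (3) is just the chaining of two cost inequalities at a common closing context.

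For law (4), the key observation is that of the reduction rules in Figure~\ref{fig:redsem} only (Global) is counted as a function call; the remaining rules (App), (Let), (KCase), (NCase), and (Arith) are free. Hence if $e \eval e'$ by one of the free rules, the computations of $C[e]$ and $C[e']$ have the same function-call cost, while if $e \eval e'$ by (Global), the cost of $C[e']$ is exactly one less than that of $C[e]$. In either case the improvement bound holds, uniformly over all closing contexts $C$. Law (5) is then an immediate weakening: if termination of $C[e]$ implies termination of $C[e']$ within a finite number of function calls, then in particular it implies termination simpliciter, which is exactly $e \opapp e'$.

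The step requiring most care is (4), where one must pin down precisely what counts as a ``function call'' and verify that each non-function-call reduction preserves the count exactly rather than merely bounding it; everything else is bookkeeping on the universal quantifier over closing contexts that sits in the definition of $\costrighteq$. The only subtle point beyond that is that in law (1) we must permit $C$ to contain free variables (so that $C[e]$ need not itself be closed), which is why the definition of improvement quantifies over \emph{closing} contexts rather than over arbitrary closed instantiations.
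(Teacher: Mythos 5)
The paper does not actually prove this lemma --- it is imported verbatim from Sands's work (the point of citing \citet{Sands:1996:Total} and \citet{Sands:1997:From} is precisely to avoid redoing his operational metatheory) --- so your attempt has to stand on its own, and it has a genuine gap at law (4). Your argument assumes that a single step $e \eval e'$ induces a step-for-step correspondence between the computations of $C[e]$ and $C[e']$ for an \emph{arbitrary} closing context $C$, with the cost of $C[e']$ ``exactly one less'' when the step is (Global). That is not true as stated: the hole of $C$ may sit under a lambda or inside a case branch, so the redex in $e$ is not a redex of $C[e]$ at all and is only reduced later, on substitution instances $\theta e$ (possibly many times, if the hole is duplicated, or never, if that branch is discarded). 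So the cost difference can be $0$, $1$, or arbitrarily large, and --- more importantly --- the claimed correspondence between the two computations is exactly the thing that needs proof. Making it precise requires either a context lemma / CIU-style characterization of $\costrighteq$ (reducing the quantification over all closing contexts to reduction contexts $\E$ together with closing value substitutions, plus a substitution lemma showing $\theta e \eval \theta e'$ with the same number of (Global) steps), or an explicit cost-indexed simulation argument. That machinery is the substance of Sands's improvement theory; without invoking it, your paragraph on law (4) is an assertion, not a proof. The same machinery is what licenses transferring his call-by-name/general results to this call-by-value language, which the paper handles by instantiation rather than reproof.

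Your treatments of laws (1), (2), (3) and (5) are fine: (1) is pure composition of contexts (the definition of $\costrighteq$ already quantifies over all closing contexts, so precongruence is immediate), (2) is invariance of the reduction relation under $\alpha$-renaming, (3) is chaining of bounds at a common closing context, and (5) is a straightforward weakening of the cost bound to mere termination. The one point to repair, then, is (4); either cite Sands's context lemma explicitly or supply the simulation argument, and in doing so drop the claim that the cost decreases by exactly one.
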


It is sometimes convenient to show that two expressions are related by
showing that what they evaluate to is related. 

\begin{lem}[\citet{Sands:1996:Proving}]
\label{lem:steps}
If $e_1 \eval^r e_1'$ and $e_2 \eval^r e_2'$ then 
($e_1' \costeq e_2' \Leftrightarrow e_1 \costeq e_2$).
\end{lem}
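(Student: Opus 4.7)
The lemma is a cost-shifting property in Sands's improvement theory, and the plan is to reduce it to the additivity of the function-call resource $\langle \mathbb{N}, 0, +, \geq \rangle$ under the reduction relation $\eval^r$. Since both $e_1 \eval^r e_1'$ and $e_2 \eval^r e_2'$ use the same number $r$ of function calls, the cost shifts introduced on each side are identical, and therefore cancel from any cost-equivalence comparison.

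For the forward direction ($e_1' \costeq e_2' \Rightarrow e_1 \costeq e_2$) I would first prove $e_1 \costrighteq e_2$; the direction $e_2 \costrighteq e_1$ is symmetric. Fix a closing context $C$ and suppose $C[e_1]$ terminates using $n$ function calls. The key claim is that each time the hole of $C$ is actively evaluated, the fixed prefix $e_1 \eval^r e_1'$ is executed, contributing exactly $r$ of those $n$ calls. Writing $k$ for the number of such activations, $C[e_1']$ therefore terminates in $n - r\cdot k$ function calls. The assumption $e_1' \costrighteq e_2'$ then gives that $C[e_2']$ terminates in at most $n - r\cdot k$ function calls, and reversing the cost accounting using $e_2 \eval^r e_2'$ yields $C[e_2]$ terminating in at most $n - r\cdot k + r\cdot k = n$ function calls, as required.

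The backward direction ($e_1 \costeq e_2 \Rightarrow e_1' \costeq e_2'$) is entirely symmetric: the same cost $r$ is subtracted, rather than added, on each side, and the intermediate step appeals to $e_1 \costeq e_2$ in the role played above by $e_1' \costeq e_2'$.

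The main obstacle is justifying that the number $k$ of hole activations is the same for $C[e_1]$ as for $C[e_2]$ (and correspondingly for the primed versions), since this is what allows the additive cost $r\cdot k$ to cancel cleanly. This is not self-evident: it requires that the two expressions exhibit the same observational behaviour inside $C$, which must itself be bootstrapped from the cost-equivalence hypothesis via Sands's improvement-bisimulation machinery~\citep{Sands:1997:From}. That framework is also what underlies the laws collected in Lemma~\ref{lem:implaws}, so I would appeal to it rather than reprove it here.
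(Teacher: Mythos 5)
First, note that the paper does not prove this lemma at all: it is imported verbatim from \citet{Sands:1996:Proving} (adapted to call-by-value via \citet{Sands:1997:From}), so there is no in-paper argument for your proposal to match; it has to stand on its own, and as it stands it has a genuine gap. The gap is exactly the step you flag at the end and then discharge by appeal to ``Sands's improvement-bisimulation machinery.'' That appeal is circular in spirit: this lemma is itself one of the elementary bricks of that machinery, used (as in this paper) to establish cost equivalences before any bisimulation-style reasoning is available. Concretely, two things in your counting argument are unjustified. First, the claim that ``each time the hole of $C$ is actively evaluated, the fixed prefix $e_1 \eval^r e_1'$ is executed'' is not literally true for arbitrary contexts: the hole may sit under binders, so what reaches redex position is a substitution instance $\theta e_1$ (and possibly many distinct copies of it, created by value substitution), so you at least need a substitution lemma that $\theta e_1 \eval^r \theta e_1'$ for value substitutions, plus an argument that the global, deterministic evaluation of $C[e_1]$ factors through these local prefixes. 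Second, and more seriously, your final accounting adds back $r\cdot k$ using the \emph{same} $k$ that was subtracted, but the $k$ you subtract counts hole activations in the run of $C[e_1]$, while the $k$ you must add back counts activations in the run of $C[e_2']$/$C[e_2]$; without already knowing that the two programs have matching (or at least dominated) activation counts, the inequality $m + r k_2 \le n$ does not follow from $m \le n - r k_1$. Establishing that correspondence is precisely the hard content of the lemma, so it cannot be assumed.

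The standard proof avoids counting hole activations in arbitrary contexts altogether. One first proves a context lemma (a ciu-style result, part of the operational metatheory of \citet{Sands:1997:From}) reducing the quantification over all closing contexts $C$ to reduction contexts $\E$ together with closing value substitutions; in that restricted setting the evaluation of $\econ{\theta e_1}$ visibly begins with the $r$ calls of $\theta e_1 \eval^r \theta e_1'$ and determinism gives exact cost bookkeeping. From this one derives a ``tick'' lemma: if $e \eval^{1} e'$ then $e \costeq id\, e'$, where $id$ is a dummy function whose only effect is to cost one call. The statement then follows by induction on $r$ using only congruence, transitivity, and cancellation of ticks: $e_1 \costeq id^{r}\, e_1'$ and $e_2 \costeq id^{r}\, e_2'$, and $id^{r}\,e_1' \costeq id^{r}\,e_2'$ holds iff $e_1' \costeq e_2'$. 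If you want a self-contained proof rather than a citation, that is the route to take; your additive-cost intuition is the right picture, but it only becomes a proof after the context lemma removes the problematic quantification over general contexts.
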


We need to show strong improvement in order to prove total correctness. Since strong
improvement is improvement in one direction and operational
approximation in the other direction, a set of approximation laws that
correspond to the improvement laws in Lemma \ref{lem:implaws} is
necessary.

\begin{lem}
\label{lem:applaws}
Approximation laws
\begin{enumerate}[\em(1)]
\item If $e \bopapp e'$ then $C[e] \bopapp C[e']$.
\item If $e \equiv e'$ then $e \bopapp e'$.
\item If $e \bopapp e'$ and $e' \bopapp e''$ then $e \bopapp e''$
\item If $e \eval e'$ then $e \bopapp e'$.
\end{enumerate}
\end{lem}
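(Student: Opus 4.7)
The plan is to mirror the improvement laws of Lemma \ref{lem:implaws} at the coarser level of $\opapp$: each clause has a direct counterpart, and the argument strips away the function-call bookkeeping. Laws (1)--(3) will follow almost mechanically from the definition of $\opapp$, while law (4) is where the real work lives.

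For law (1), I would unfold $e \bopapp e'$ as $e' \opapp e$ and observe that for any closing context $C'$ of $C[e]$ and $C[e']$, the composite $C'[C[\cdot]]$ is itself a closing context for $e$ and $e'$; instantiating the hypothesis at this composite context yields $C[e] \bopapp C[e']$ immediately. Law (2) is a routine structural induction on $C$: alpha-equivalent expressions plug into any $C$ to give alpha-equivalent terms, and the reduction semantics of Figure \ref{fig:redsem} respects alpha-equivalence, so termination behaviours coincide. Law (3) follows by composing the two termination implications.

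The main obstacle is law (4): assuming $e \eval e'$, show $e' \opapp e$, i.e., that termination of $C[e']$ entails termination of $C[e]$ for every closing $C$. Note that this is the reverse direction of the corresponding improvement law, so it cannot be derived by composition of Lemma \ref{lem:implaws}(4) with \ref{lem:implaws}(5); it needs its own argument. The intuition is that $C[e]$ can always perform the single step $e \mapsto e'$ (once the CBV evaluation order brings it into focus, or within a substitution produced by an outer application) and then proceed identically to $C[e']$. To make this formal, I would first establish a commutation lemma: if $e \eval e'$ then $\theta e \eval \theta e'$ for any capture-avoiding substitution $\theta$ of values for variables; this handles the case where $e$ occurs beneath binders in $C$ that are later instantiated by the outer reduction. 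Given that lemma, I would induct on the length $k$ of a terminating reduction $C[e'] \eval^{k} v$ and build a matching terminating sequence from $C[e]$ by case analysis on each step: either the step does not touch any residue of $e$ and can be replayed verbatim on the $C[e]$ side, or it reaches a position where the residue is in focus, at which point we prepend the single commuted step $e \eval e'$ (using the commutation lemma) and then continue to replay. Either way $C[e]$ terminates, giving $e \bopapp e'$.
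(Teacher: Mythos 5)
The paper never actually proves this lemma: it is stated as the $\opapp$-analogue of Sands's improvement laws (Lemma \ref{lem:implaws}) and treated as standard operational metatheory, so there is no in-paper proof for your attempt to be measured against, and what you supply is strictly more than the paper offers. Your reading of (1)--(3) as direct unfoldings of the definition matches the intended justification (for (3) there is the usual wrinkle that a context closing $C[e]$ and $C[e'']$ need not close $C[e']$; the standard remedies---closing substitutions or padding the context---apply, and the paper's own Lemma \ref{lem:implaws}(3) glosses the same point). Your key observation about (4) is exactly right and worth making explicit: it is the converse termination direction to what Lemma \ref{lem:implaws}(4) and (5) give, so it genuinely needs its own argument. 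The replay proof you sketch is the standard residual-chasing argument and does work, but to push it through you should strengthen the induction invariant from ``the residue of $e$'' to a relation permitting arbitrarily many, possibly nested, differing occurrences: because call-by-value steps substitute values, a lambda enclosing the original occurrence can be duplicated or discarded, so the two terms under comparison are identical except at several positions holding $\theta_i e$ on one side and $\theta_i e'$ on the other, and the invariant must be closed under value substitution and duplication. With that generalization your two cases (next redex outside every differing occurrence: replay it verbatim, noting that the contraction may copy or drop occurrences; next redex inside an outermost differing occurrence: first insert the commuted step $\theta_i e \eval \theta_i e'$, which your substitution lemma licenses, then replay) turn a terminating run of $C[e']$ into one of $C[e]$, which is precisely clause (4).
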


Combining Lemma \ref{lem:implaws} and Lemma \ref{lem:applaws} gives us the
final tools we need to prove strong improvement: 

\begin{lem}
\label{lem:simplaws}
Strong Improvement laws
\begin{enumerate}[\em(1)]
\item If $e \costrighteq_s e'$ then $C[e] \costrighteq_s C[e']$.
\item If $e \equiv e'$ then $e \costrighteq_s e'$.
\item If $e \costrighteq_s e'$ and $e' \costrighteq_s e''$ then $e \costrighteq_s e''$
\item If $e \eval e'$ then $e \costrighteq_s e'$.
\end{enumerate}
\end{lem}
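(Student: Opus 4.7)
The plan is to derive each of the four laws by unpacking the definition $e \costrighteq_s e' \Leftrightarrow e \costrighteq e' \wedge e \opeq e'$ and handling the two conjuncts separately: the improvement half via the corresponding item of Lemma \ref{lem:implaws}, and the operational-equivalence half via the corresponding item of Lemma \ref{lem:applaws}, invoked in both orientations since $\opeq$ is the symmetric closure of $\opapp$ (equivalently, of $\bopapp$).

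For part (1), starting from $e \costrighteq_s e'$ I would extract $e \costrighteq e'$, $e \opapp e'$, and $e' \opapp e$. Lemma \ref{lem:implaws}(1) immediately delivers $C[e] \costrighteq C[e']$. Reading $\bopapp$ as the converse of $\opapp$, Lemma \ref{lem:applaws}(1) is the congruence rule for $\opapp$ written in reversed orientation; applying it to $e \opapp e'$ yields $C[e] \opapp C[e']$, and applying it to $e' \opapp e$ yields $C[e'] \opapp C[e]$. Together these give $C[e] \opeq C[e']$, and combined with the improvement we conclude $C[e] \costrighteq_s C[e']$.

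Parts (2) and (4) follow the same pattern: invoke Lemma \ref{lem:implaws}(2) (respectively (4)) for the $\costrighteq$-component, and apply Lemma \ref{lem:applaws}(2) (respectively (4)) twice, once in each orientation, to obtain $\opeq$. Part (3) is transitivity: from $e \costrighteq_s e'$ and $e' \costrighteq_s e''$ I extract improvement in a chain and use Lemma \ref{lem:implaws}(3) to compose; simultaneously I extract both $e \opapp e' \opapp e''$ and $e'' \opapp e' \opapp e$ and use Lemma \ref{lem:applaws}(3) in both directions to get $e \opeq e''$.

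I do not anticipate any genuine obstacle; the whole proof amounts to orientational bookkeeping between $\opapp$ and $\bopapp$ combined with pairing off the two input lemmas. The mildly delicate point—if there is one—is to be explicit that each item of Lemma \ref{lem:applaws} must be invoked \emph{twice} (once per direction of $\opeq$) rather than once, so that the resulting operational relation is symmetric and genuinely equals $\opeq$, which is what strong improvement demands on top of plain improvement.
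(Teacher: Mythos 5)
Your overall plan—unpack $e \costrighteq_s e'$ into the improvement half and the operational-equivalence half, discharging the former with Lemma~\ref{lem:implaws} and the latter with Lemma~\ref{lem:applaws}—is exactly the combination the paper intends (it offers no explicit proof beyond ``combining Lemma~\ref{lem:implaws} and Lemma~\ref{lem:applaws}''), and it goes through without trouble for laws (1), (2) and (3), where the hypothesis is symmetric enough to feed Lemma~\ref{lem:applaws} in both orientations.

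However, the step you single out as the ``mildly delicate point'' is precisely where the argument breaks for law (4). There the hypothesis is $e \eval e'$, and the evaluation relation is not symmetric: Lemma~\ref{lem:applaws}(4) applied to $e \eval e'$ gives $e \bopapp e'$, i.e.\ $e' \opapp e$, but the second, ``reversed'' invocation you prescribe would need the premise $e' \eval e$, which you do not have. So the forward approximation $e \opapp e'$ cannot be obtained from Lemma~\ref{lem:applaws}(4) at all; it must come from the improvement side, namely Lemma~\ref{lem:implaws}(4) giving $e \costrighteq e'$ followed by Lemma~\ref{lem:implaws}(5), which states that improvement implies operational approximation. (Indeed, this is the general shape of the argument: since Lemma~\ref{lem:implaws}(5) already converts the $\costrighteq$-component into $\opapp$, Lemma~\ref{lem:applaws} is only ever needed for the backward direction $\bopapp$ --- which is why the paper lists approximation laws for $\bopapp$ only and includes item (5) among the improvement laws, an item your proposal never uses.) With that correction, each of the four laws follows as you describe.
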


If two expressions are improvements of each other, they are considered
cost equivalent. Cost equivalence also implies strong improvement,
which will be useful in many parts of our proof of total correctness
for our supercompiler.

\begin{defi}[Cost equivalence]
  The expressions $e$ and $e'$ are cost equivalent, $e \costeq e'$ iff $e \costrighteq e'$
  and $e' \costrighteq e$ 
\end{defi}

A local form of the improvement theorem which deals with local
expression-level recursion expressed with a fixed-point combinator or
with a letrec definition is necessary. This is analogous to the work
by \citet{Sands:1996:Proving}, with slight modifications for
call-by-value.

We need to relate local recursion expressed using {\it fix} and the
recursive definitions which the improvement theorem is defined for.
This is solved by a technical lemma that relates the cost of terms on
a certain form to their recursive counterparts.

\begin{thm}
\label{prop:fix}
For all expressions $e$, if $\lambda g.e$ is closed, then $fix\,
(\lambda g.e) \costeq h$, where {\it h} is a new function defined by
$h = [\lambda n.h\,n/g]e$.
\end{thm}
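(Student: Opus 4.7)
The plan is to exhibit both $fix\,(\lambda g.e)$ and $h$ as solutions, up to cost-equivalence, of the same recursion equation $X \costeq [\lambda n.X\,n/g]e$, and then invoke Sands's Cost-equivalence Theorem (Theorem~\ref{thm:sandscost}) --- which may be viewed as a uniqueness-of-solutions result for such equations modulo $\costeq$ --- to conclude that the two must themselves be cost-equivalent.

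The first step is to unfold the left-hand side. By the definition $fix = \lambda f.f\,(\lambda n.fix\,f\,n)$ and two (App) reductions, we obtain
\begin{align*}
fix\,(\lambda g.e) &\;\eval\; (\lambda g.e)\,(\lambda n.fix\,(\lambda g.e)\,n) \\
   &\;\eval\; [\lambda n.fix\,(\lambda g.e)\,n/g]e.
\end{align*}
Since (App) is not a function-call reduction, Lemma~\ref{lem:steps} (applied with both sides reducing the same number of non-counted steps) promotes these evaluations to cost-equivalences, yielding $fix\,(\lambda g.e) \costeq [\lambda n.fix\,(\lambda g.e)\,n/g]e$. By construction, $h$ satisfies the analogous equation on the nose: $h = [\lambda n.h\,n/g]e$. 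Hence $fix\,(\lambda g.e)$ and $h$ are both cost-equivalent solutions of one and the same recursion, with the recursion variable occurring only under a $\lambda n.(\cdot)\,n$ guard so the unfolding is well-defined.

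Finally, I would instantiate Theorem~\ref{thm:sandscost} with a single nullary recursive function: $g_1 \eqdef fix\,(\lambda g.e)$ plays the role of the ``old'' definition with recursion body $e_1 = [\lambda n.g_1\,n/g]e$, and $h_1 \eqdef h$ plays the role of the ``new'' definition with body $e_1' = [\lambda n.h_1\,n/g]e$. The bodies are syntactically identical up to the name of the recursion symbol, so $e_1 \costeq e_1'$ is immediate, and the theorem delivers $g_1 \costeq h_1$, i.e., $fix\,(\lambda g.e) \costeq h$. I expect the main obstacle to be the bookkeeping required to squeeze $fix\,(\lambda g.e)$ into the form of a named recursive definition demanded by Sands's theorem: one must carefully verify that the operational unfolding of $fix\,(\lambda g.e)$ genuinely produces the same recursion body (modulo cost) that $h$ is defined by, and that the App steps used to justify this truly carry no function-call cost, so that the cost-equivalence theorem applies in its stated form rather than requiring a dedicated local-recursion variant.
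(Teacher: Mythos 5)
There is a genuine gap, and it sits exactly where you deferred the ``bookkeeping''. First, your intermediate claim $fix\,(\lambda g.e) \costeq [\lambda n.fix\,(\lambda g.e)\,n/g]e$ does not hold: the first step of the reduction you write down is not an (App) step but the (Global) unfolding of the recursively defined symbol $fix$, and that is precisely a counted function call ($fix\,(\lambda g.e) \eval^{1} \cdots$). Lemma~\ref{lem:steps} requires \emph{both} expressions to reach cost-equivalent terms in the \emph{same} number of counted calls; you apply it with one counted call on the left and zero on the right, so at best you get the one-directional improvement $fix\,(\lambda g.e) \costrighteq [\lambda n.fix\,(\lambda g.e)\,n/g]e$ (Lemma~\ref{lem:implaws}(4)), not cost equivalence -- and in a context that forces the hole once the two sides genuinely differ by a call, so the claimed $\costeq$ is false, not merely unproved. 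Second, Theorem~\ref{thm:sandscost} cannot be instantiated with $g_1 \eqdef fix\,(\lambda g.e)$: the theorem is stated for \emph{named} recursive definitions $g_i = \lambda x_1\ldots x_{\alpha_i}.e_i$, and its hypothesis $e_i \costeq e_i'$ compares two bodies both expressed over the \emph{old} names, with $h_i$ then obtained by substituting the new names into $e_i'$. Your choice $e_1' = [\lambda n.h_1\,n/g]e$ already mentions the new name, and declaring $e_1 \costeq e_1'$ ``immediate up to the name of the recursion symbol'' begs the question -- relating the old and new names modulo cost is exactly what the theorem is supposed to deliver.

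The paper closes both gaps with one auxiliary definition: introduce a fresh function symbol $h^{-} = [\lambda n.fix\,(\lambda g.e)\,n/g]e$. Then $fix\,(\lambda g.e)$ and $h^{-}$ each reach the common term $[\lambda n.fix\,(\lambda g.e)\,n/g]e$ in exactly one counted call, so Lemma~\ref{lem:steps} legitimately gives $fix\,(\lambda g.e) \costeq h^{-}$; by congruence this yields $[\lambda n.fix\,(\lambda g.e)\,n/g]e \costeq [\lambda n.h^{-}\,n/g]e$, which is the hypothesis Theorem~\ref{thm:sandscost} actually needs for the genuine definition $h^{-}$, producing $h^{-} \costeq h$ with $h = [h/h^{-}][\lambda n.h^{-}\,n/g]e = [\lambda n.h\,n/g]e$, and transitivity finishes the proof. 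So your overall strategy (unfold $fix$, then invoke the cost-equivalence theorem) is the right one, but without the helper symbol $h^{-}$ neither of your two key steps goes through as written.
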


\begin{proof}[Proof (Similar to \citet{Sands:1996:Proving})]
  Define a helper function $h^{-} = [\lambda n.fix\,(\lambda
  g.e)\,n/g]e$. Since $fix\,(\lambda g.e) \eval^1 $ $(\lambda
  f.f\,(\lambda n.fix\, f\,n))\, (\lambda g.e) \eval (\lambda
  g.e)\,(\lambda n.fix\, (\lambda g.e)\,n) \eval [\lambda n.fix\,
  (\lambda g.e)\,n/g]e$ and $h^{-} \eval^1 [\lambda n.fix\,(\lambda
  g.e)\,n/g]e$ it follows by Lemma \ref{lem:steps} that $fix\,
  (\lambda g.e) \costeq h^{-}$. Since cost equivalence is a
  congruence relation we have that $[\lambda n.h^{-}\,n/g]e \costeq
  [\lambda n.fix\,(\lambda g.e)\,n/g]e$, and so by Theorem
  \ref{thm:sandscost}, we have a cost-equivalent transformation from
  $h^{-}$ to $h$, where $h = [h/h^{-}][\lambda n.h^{-}\,n/g]e =
  [\lambda n.h\,n/g]e$.
\end{proof}

\noindent We state some simple properties that will be useful for proving our local
improvement theorem

\begin{thm}
\label{prop:letreceq} Consequences of the letrec definition
\begin{enumerate}[\bf i):]
\item $\LETREC h = \lambda \overline{x}.e \IN e' \costeq [\lambda n.fix\,(\lambda h.\lambda \overline{x}.e)\,n/h]e'$
\item $\LETREC h = \lambda \overline{x}.e \IN h \costeq \lambda n.fix\,(\lambda h.\lambda \overline{x}.e)\,n$
\item $\LETREC h = \lambda \overline{x}.e \IN e' \costeq [\LETREC h =
\lambda \overline{x}.e \IN h/h]e'$
\end{enumerate}
\end{thm}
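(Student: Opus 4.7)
My plan is to unfold the syntactic-sugar definition of \textbf{letrec} and then use the improvement laws (Lemma~\ref{lem:implaws}) and Lemma~\ref{lem:steps} to discharge each clause in turn. Because only the (Global) rule counts as a function call in Sands's cost measure, any single (App) step is a free reduction that can be turned into a two-sided cost equivalence via Lemma~\ref{lem:steps}.

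For clause \textbf{(i)}, unfold the definition to get
$\LETREC h = \lambda \overline{x}.e \IN e' \;=\; (\lambda h.e')\,(\lambda y.fix\,(\lambda h.\lambda \overline{x}.e)\,y).$
The argument $(\lambda y.fix\,(\lambda h.\lambda \overline{x}.e)\,y)$ is a value, so one (App) step reduces the whole term to $[\lambda y.fix\,(\lambda h.\lambda \overline{x}.e)\,y/h]e'$, which after $\alpha$-renaming $y$ to $n$ is exactly the RHS. Since (App) is not a function call, this is a $0$-cost reduction; Lemma~\ref{lem:steps} (applied with $r=0$, using that the RHS $\eval^0$ itself) together with the $\alpha$-equivalence of the endpoints (Lemma~\ref{lem:implaws}(2)) yields the cost equivalence.

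Clause \textbf{(ii)} is just clause (i) instantiated with $e' = h$: the substitution $[\lambda n.fix\,(\lambda h.\lambda \overline{x}.e)\,n/h]h$ evaluates textually to $\lambda n.fix\,(\lambda h.\lambda \overline{x}.e)\,n$.

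For clause \textbf{(iii)}, apply (i) to the whole LHS to obtain
$\LETREC h = \lambda \overline{x}.e \IN e' \;\costeq\; [\lambda n.fix\,(\lambda h.\lambda \overline{x}.e)\,n/h]e'.$
Next, apply (ii) and lift it through the substitution into $e'$ by the congruence property of cost equivalence: since $\costeq$ is both-way improvement, iterating Lemma~\ref{lem:implaws}(1) once per free occurrence of $h$ in $e'$ (replacing $\LETREC h = \lambda \overline{x}.e \IN h$ by $\lambda n.fix\,(\lambda h.\lambda \overline{x}.e)\,n$ in each hole) gives
$[\LETREC h = \lambda \overline{x}.e \IN h/h]e' \;\costeq\; [\lambda n.fix\,(\lambda h.\lambda \overline{x}.e)\,n/h]e'.$
Transitivity (Lemma~\ref{lem:implaws}(3)) across these two equivalences, which share a common middle term, yields the desired (iii).

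The only real subtlety I anticipate is the congruence step in (iii): Lemma~\ref{lem:implaws}(1) is phrased for a single-hole context, whereas $h$ may occur several times in $e'$. I would justify the multi-position replacement by a short induction on the number of free occurrences of $h$, applying the one-hole law at each step, or simply by citing the standard fact that $\costeq$ is a (substitutive) congruence in Sands's metatheory \citep{Sands:1997:From}. Apart from this bookkeeping, the argument is a direct syntactic unfolding plus two appeals to the improvement laws.
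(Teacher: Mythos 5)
Your proposal is correct and follows essentially the same route as the paper: unfold the \textbf{letrec} sugar, take the single cost-free (App) step and conclude cost equivalence from syntactic ($\alpha$-)equality for (i), instantiate $e'=h$ for (ii), and combine (i) and (ii) via congruence of $\costeq$ under substitution for (iii). Your extra care about lifting the one-hole congruence law through multiple occurrences of $h$ is the same appeal to cost equivalence being a congruence that the paper itself makes (e.g.\ in the proof of Theorem~\ref{prop:fix}), so no genuine gap remains.
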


\begin{proof}
  For i), expand the definition of letrec in the LHS, $(\lambda
  h.e')\,(\lambda n.fix\, (\lambda h.\lambda \overline{x}.e)\,n)$ and
  evaluate it one step to  $[\lambda n.fix\, (\lambda h.\lambda
  \overline{x}.e)\,n/h]e'$. This is syntactically equivalent to the
  RHS, hence cost equivalent. For ii), set $e' = h$ and perform the
  substitution from i). For iii), use the RHS of ii) in the
  substitution and notice it is equivalent to i).
\end{proof}

\noindent This allows us to state the local version of the improvement theorem:

\begin{thm}[Local improvement theorem]
\label{thm:letrec-simp}
If variables {\it h} and $\overline{x}$ include all the free variables of
both $e_0$ and $e_1$, then if
\begin{equation*}
\LETREC h = \lambda \overline{x}.e_0 \IN e_0 \costrighteq_s \LETREC h
= \lambda \overline{x}.e_0 \IN e_1
\end{equation*}
then for all expressions e
\begin{equation*}
\LETREC h = \lambda \overline{x}.e_0 \IN e \costrighteq_s \LETREC h
= \lambda \overline{x}.e_1 \IN e
\end{equation*}
\end{thm}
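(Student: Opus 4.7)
The plan is to reduce this local, letrec-based claim to the global Corollary~\ref{cor:simp} by passing through the $fix$-encoding of letrec and then translating the result back. The strategy mirrors Sands's call-by-name proof, adapted to call-by-value through the $\eta$-expanded $fix$ encoding.

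First I rewrite both sides of the hypothesis using Theorem~\ref{prop:letreceq}(i). Since $h$ and $\overline{x}$ include all free variables of $e_0$, the abstraction $\lambda h.\lambda \overline{x}.e_0$ is closed, so Theorem~\ref{prop:fix} yields a fresh top-level function $g_0$ with $fix(\lambda h.\lambda \overline{x}.e_0) \costeq g_0$ and $g_0 = \lambda \overline{x}.[\lambda n.g_0\,n/h]e_0$. Congruence (Lemma~\ref{lem:simplaws}(1)) together with transitivity then turn the hypothesis into
\begin{equation*}
[\lambda n.g_0\,n/h]e_0 \;\costrighteq_s\; [\lambda n.g_0\,n/h]e_1.
\end{equation*}
Introducing a second fresh top-level function $g_1$ defined by $g_1 = \lambda \overline{x}.[\lambda n.g_1\,n/h]e_1$, I put this situation into the shape required by Corollary~\ref{cor:simp} using a singleton index set, original body $[\lambda n.g_0\,n/h]e_0$, and transformed body $[\lambda n.g_0\,n/h]e_1$; the substitution $[g_1/g_0]$ demanded by the corollary converts each occurrence of $\lambda n.g_0\,n$ into $\lambda n.g_1\,n$, reproducing exactly the definition of $g_1$. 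Note that $\fv([\lambda n.g_0\,n/h]e_i) \subseteq \{\overline{x}\}$, so the hypothesis of the corollary is properly satisfied. The corollary therefore delivers $g_0 \costrighteq_s g_1$.

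Finally, for any expression $e$ I apply Theorem~\ref{prop:letreceq}(i) and Theorem~\ref{prop:fix} once more to obtain the cost-equivalences $\LETREC h = \lambda \overline{x}.e_i \IN e \,\costeq\, [\lambda n.g_i\,n/h]e$ for $i = 0,1$. Congruence (Lemma~\ref{lem:simplaws}(1)) lifts $g_0 \costrighteq_s g_1$ first to $\lambda n.g_0\,n \costrighteq_s \lambda n.g_1\,n$ and then to $[\lambda n.g_0\,n/h]e \costrighteq_s [\lambda n.g_1\,n/h]e$, so transitivity concludes. The main obstacle will be the bookkeeping around the $\eta$-expanded $fix$ encoding: verifying that the substitution $[g_1/g_0]$ prescribed by the Improvement Theorem really does collapse to the clean form of $g_1$ stated above, and that congruence of $\costrighteq_s$ propagates correctly through substitution for a variable $h$ with arbitrarily many occurrences in the outer body $e$.
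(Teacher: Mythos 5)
Your proposal is correct and follows essentially the same route as the paper's proof: unfold the letrec via Theorem~\ref{prop:letreceq}, use Theorem~\ref{prop:fix} to replace $fix$ by recursively defined functions $g_0$ and $g_1$, apply Corollary~\ref{cor:simp} to obtain $g_0 \costrighteq_s g_1$, and transfer back through the encoding by congruence and transitivity. The only deviations (applying the corollary to the substituted bodies directly rather than first abstracting over $\overline{x}$, and concluding via Theorem~\ref{prop:letreceq}(i) plus congruence instead of parts (ii) and (iii)) are cosmetic.
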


\begin{proof}
  Define a new function $g = [\lambda n.g\,n/h]\lambda
  \overline{x}.e_0$. By Proposition \ref{prop:fix} $g \costeq fix\,
  (\lambda h.\lambda \overline{x}.e_0)$. Use this, the congruence
  properties, and the properties listed in Proposition
  \ref{prop:letreceq} to transform the premise of the theorem:
\begin{align*}
  \LETREC h = \lambda \overline{x}.e_0 \IN e_0 & \costrighteq_s \LETREC h = \lambda \overline{x}.e_0 \IN e_1 \\
  [\lambda n.fix\,(\lambda h.\lambda \overline{x}.e_0)\,n/h]e_0  &\costrighteq_s [\lambda n.fix\,(\lambda h.\lambda \overline{x}.e_0)\,n/h]e_1 \\
  \lambda \overline{x}.[\lambda n.fix\,(\lambda h.\lambda \overline{x}.e_0)\,n/h]e_0  & \costrighteq_s  \lambda \overline{x}.[\lambda n.fix\,(\lambda h.\lambda \overline{x}.e_0)\,n/h]e_1 \\
  [\lambda n.fix\,(\lambda h.\lambda \overline{x}.e_0)\,n/h]\lambda \overline{x}.e_0  &\costrighteq_s [\lambda n.fix\,(\lambda h.\lambda \overline{x}.e_0)\,n/h]\lambda \overline{x}.e_1 \\
  [\lambda n.g\,n/h]\lambda \overline{x}.e_0  &\costrighteq_s [\lambda n.g\,n/h]\lambda \overline{x}.e_1 \\
\end{align*}

So by Corollary \ref{cor:simp}, $g \costrighteq_s g'$ where $g' =
[g'/g][\lambda n.g\,n/h]\lambda \overline{x}.e_1 = [\lambda
n.g\,n/h]\lambda \overline{x}.e_1$. Hence by Proposition
\ref{prop:fix}, $g' \costeq fix\, (\lambda h.\lambda
\overline{x}.e_1)$. Adding it all together yields $fix\,(\lambda
h.\lambda \overline{x}.e_0)$ $\costeq g \costrighteq_s g' \costeq
fix\, (\lambda h.\lambda \overline{x}.e_1)$. From the transitivity and
congruence properties of improvement we can deduce that $\lambda
n.fix\,(\lambda h.\lambda \overline{x}.e_0) \costrighteq_s \lambda
n.fix\,(\lambda h.\lambda \overline{x}.e_1)$. By Proposition
\ref{prop:letreceq} we get $\LETREC h = \lambda \overline{x}.e_0 \IN h $ $
\costrighteq_s \LETREC h = \lambda \overline{x}.e_1 \IN h$, which can
be further expanded by congruency properties of improvement to
$[\LETREC h = \lambda \overline{x}.e_0 \IN h/h]e \costrighteq_s
[\LETREC h = \lambda \overline{x}.e_1\IN h/h]e$. Using Proposition
\ref{prop:letreceq} one more time yields $\LETREC h = \lambda
\overline{x}.e_0 \IN e \costrighteq_s \LETREC h = \lambda
\overline{x}.e_1 \IN e$, which proves our theorem.
\end{proof}

\noindent This allows us to state the total correctness theorem for our transformation:

\begin{thm}[Total Correctness]
\label{prop:totcorr}
Let $\con{e}$ be an expression, {\it G} a recursive map, and $\rho$ an environment such that
\begin{enumerate}[$\bullet$]
\item the range of $\rho$ contains only closed expressions, and
\item $\fv(\con{e}) \cap dom(\rho) = \emptyset$, and
\end{enumerate}
then $\con{e} \costrighteq_s \rho(\dr{e}{\R})$.
\end{thm}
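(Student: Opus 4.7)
The plan is to proceed by well-founded induction on the weight $|\dr{e}{\R}|$ of the driving call, using the lexicographic ordering on triples of Definitions \ref{def:drvweight} and \ref{def:order}, for which termination has already been established in the previous subsection. At each step we do a case analysis on the driving rule fired, showing that the right-hand side of that rule, after pushing $\R$ and applying $\rho$, is a strong improvement of the left-hand side in context. For this to go through I would first strengthen the statement with a semantic invariant on $\rho$: for each $(h,e_1)\in\rho$, the residual definition eventually attached to $h$ satisfies $e_1 \costeq h\,\overline{x}$ with $\overline{x}=\fv(e_1)$. This invariant is exactly what justifies alternative~1 of \<\asdr{}\aedr\>, where a folded call $h\,\overline{x}$ replaces a renamed previously-seen expression, via Lemma~\ref{lem:simplaws}(1) and renaming.

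For the ``bookkeeping'' rules R1--R2, R4--R6, R10, R14--R15, R18--R20, the right-hand side is either the identity or a context reshuffle, and they follow directly from the induction hypothesis together with congruence and reflexivity of strong improvement (Lemma~\ref{lem:simplaws}(1),(2)). Rules R7, R11, R12, R16, R17 correspond to single reduction steps of Figure~\ref{fig:redsem}, so clause~(4) of Lemma~\ref{lem:simplaws} gives $\con{e} \costrighteq_s \con{e'}$ where $e'$ is the reduct, and the induction hypothesis dispatches the residual driving call; Lemma~\ref{lem:steps} closes the gap on the cost side. Rule R9 is a cost-equivalence because, on the empty context, $(\lambda\overline{x}.f)\,\overline{e}$ and $\LET \overline{x}=\overline{e}\IN f$ evaluate identically under call-by-value (both forcing $\overline{e}$ left-to-right before performing substitution), so again Lemma~\ref{lem:steps} applies.

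The delicate cases are R13 and the clauses of \<\asdr{}\aedr\>. For R13, the inlining branch requires $x\in\strict(f)\cap\linear(f)$. Soundness of the strictness approximation of Figure~\ref{fig:strictvars} (a sound under-approximation of semantic strictness) guarantees that $f$ diverges whenever $e$ does; this is what gives the $\opapp$-direction of strong improvement, since substituting $e$ inside $f$ cannot remove divergence of the original \<let\>. Linearity ensures that $e$ is evaluated at most once after substitution, preserving the improvement ($\costrighteq$) direction and preventing duplicated work. The ``otherwise'' branch merely drives $e$ and $\con{f}$ independently under the \<let\> skeleton and follows from congruence. For R3, alternatives 1 and 2 are immediate from the $\rho$-invariant and reflexivity; alternative~4c is a single unfolding step of the (Global) rule and follows as in R11/R16; alternatives 3 and 4a use \emph{split}, for which the defining property $\theta_1 t_g \equiv \con{g}$ of the most specific generalization is a \emph{syntactic} equality, so the substitution is cost-equivalent by Lemma~\ref{lem:simplaws}(2), and the component drivings $\dr{\overline{f}}{\boxempty}$ and $\dr{f_g}{\boxempty}$ are handled by the induction hypothesis. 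Alternative~4b is where the recursive knot is tied: after unfolding $g$ to $e = \drp{\con{v}}{\boxempty}$ under the extended $\rho'=\rho\cup\{(h,\con{g})\}$, any fold to $h$ inside $e$ is legitimate precisely when $h$ is then defined by $\LETREC h=\lambda\overline{x}.e\IN h\,\overline{x}$; the Local Improvement Theorem~\ref{thm:letrec-simp} then lifts the strong improvement established for the body of $h$ under the extended invariant to the closed letrec form.

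The main obstacle will be arranging the induction so that the $\rho$-invariant is preserved across the recursive call in alternative~4b: inside that recursive call the residual $h$-definition does not yet exist, so the invariant ``$e_1 \costeq h\,\overline{x}$ in the final residual program'' must be stated hypothetically, and only discharged \emph{after} the induction on the body returns, at which point Theorem~\ref{thm:letrec-simp} converts the inner strong improvement into one that holds under the letrec binder. This is the direct analogue of the Sands-style argument for call-by-name deforestation correctness, but it must be executed carefully in the call-by-value setting so that the strictness-driven side-condition in R13 is compatible with the improvement direction as well as the approximation direction; concretely, without linearity of $x$ in $f$, the improvement direction could fail, and without strictness of $f$ in $x$, the approximation direction could fail, so both conjuncts in R13 are essential and must be used explicitly in the proof.
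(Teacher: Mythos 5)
Your proposal follows essentially the same route as the paper: well-founded induction on the transformation (the paper phrases it as induction on the structure of expressions, but justifies the key recursive call in alternative 4b by its being ``a shorter transformation'', i.e.\ exactly your weight from Definition \ref{def:drvweight}), a rule-by-rule case analysis, Sands-style improvement laws with Lemma \ref{lem:steps} for the reduction-like rules, strictness plus linearity for R13 (your diagnosis of which conjunct protects which direction matches the paper's step-matching argument, where strictness supplies the reductions $\con{[e/x]f}\eval^s\econ{e}$ and linearity is what makes the two resulting terms syntactically equal), and the local improvement Theorem \ref{thm:letrec-simp} for tying the knot in alternative 4b. The one genuine difference is your treatment of the memoization list: you posit a hypothetical semantic invariant relating each $(h,e_1)\in\rho$ to the \emph{residual} definition eventually attached to $h$, to be discharged only after the inner induction returns. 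The paper avoids this circularity: in the theorem, $\rho$ maps each $h$ to the closed \emph{original} expression $\lambda\overline{x}.\con{g}$ and is applied as a substitution to the driven output, so alternative 1 is an immediate cost equivalence ($\rho(h\,\overline{x})=(\lambda\overline{x}.\con{g})\,\overline{x}\costeq\con{g}$), and alternative 4b is discharged by Lemmas \ref{lem:letrec} and \ref{lem:drp}, which convert between the $\rho'$-substituted body and the letrec form before Theorem \ref{thm:letrec-simp} and the induction hypothesis are invoked. The paper's device buys a statement that is meaningful at every recursion depth without forward reference to residual code; your invariant can be made rigorous, but doing so essentially amounts to reformulating it into the paper's $\rho$-as-substitution form (your reflexivity argument for alternative 2 is, incidentally, cleaner than the paper's remark that the term is discarded higher up).
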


\noindent The proof is in Appendix \ref{sec:startbev} to Appendix 
\ref{sec:slutbev}.

\section{Benchmarks}
\label{sec:benchmarks}

In this section we provide measurements on a set of common examples from the
literature on deforestation and perform a detailed analysis for each
example. We show that our positive supercompiler removes intermediate
structures and can improve the performance by an order of magnitude
for certain benchmarks. The supercompiler was implemented as a pass in
the Timber compiler \citep{Timber}. Timber is a pure
functional call-by-value language which is very close to the language
we describe in Section \ref{sec:lang}, and for the scope of this article
it can be thought of as a strict variant of Haskell. We have left out the full details of the
instrumentation of the run-time system but it is available in a
separate report \citep{Jonsson:2008:Positive}.

All measurements were performed on an idle machine running in an
{\it xterm} terminal environment. Each test was run 10 consecutive times and the best result was
selected because the programs are deterministic and the best result must appear under the minimum of other activity. The number of
allocations and the total allocation sizes remained constant over all runs.

Raw data for the time and size measurements before and after supercompilation are shown in Table
\ref{fig:tsmeasurements}, and allocation measures in Table
\ref{fig:ameasurements}. Compilation times are shown in Table
\ref{fig:cmeasurements}. The time column is the number of clock ticks
obtained from the {\it RDTSC} instruction available on Intel/AMD processors, and the
binary size is in bytes. The total number of allocations and the total
memory size in bytes allocated by the program are displayed in their
respective column. The compilation times are measured in seconds and
times from left to right are for producing an object file,
producing an executable binary, and the corresponding operations with
supercompilation turned on.

Binary sizes are slightly increased by the supercompiler, but all
run-times are faster. The main reason for the performance improvement
is the removal of intermediate structures, reducing the number of
memory allocations. Compilation times are increased by 10-15\% when 
enabling the supercompiler. 

The supercompiled results on these particular benchmarks are identical
to the results reported in previous work for call-by-name languages by
\citet{Wadler:1990:Deforestation} and
\citet{Soerensen:1996:Positive}. We do not provide any execution-time comparisons with these, though, since for identical intermediate representations after supercompilation, such measurements would only illustrate differences caused by back-end implementation techniques.

The work on Supero by \citet{Mitchell:2008:ASupercompiler} shows that
there remain open problems when supercompiling large Haskell
programs. These problems are mainly related to speed, both of the
compiler and of the transformed program. When profiling Supero, Mitchell and Runciman found that a majority of the time was spent on their homeomorphic
embedding test. Our transformation performs the corresponding test on a smaller part of
the abstract syntax tree, so there is reason to believe that this will result in less time spent
on testing homeomorphic embedding even on large programs for our transformation. The complexity of the homeomorphic
embedding relation has been investigated  by
\citet{Narendran:1987:On}, and they give an algorithm of complexity
$O(size(e) \times size(f))$ for deciding whether $e \costlefteq f$.  We
expect essentially the same problems that Mitchell and Runciman observed to appear
in a call-by-value context as well, and intend to investigate them now
that we have a theoretical foundation for our transformation.

\begin{table*}
\begin{tabular}{lr@{\hspace{2em}}r@{\hspace{2em}}r@{\hspace{2em}}r}
\hline
   & \multicolumn{2}{c}{Time} &  \multicolumn{2}{c}{Binary size} \\
  Benchmark & Before & After  & Before & After  \\
\hline
Double Append &  105,844,704 & 89,820,912 & 89,484 & 90,800 \\
Factorial & 21,552 & 21,024 & 88,968 & 88,968 \\
Flip a Tree & 2,131,188 & 237,168  & 95,452 & 104,704 \\ 
Sum of Squares of a Tree & 276,102,012 & 28,737,648 & 95,452 & 104,912 \\
Kort's Raytracer & 12,050,880 & 7,969,224 &  91,968 & 91,460 \\
\end{tabular}
\caption{Time and size measurements}
\label{fig:tsmeasurements}
\end{table*}
\begin{table*}
\begin{tabular}{lr@{\hspace{2em}}r@{\hspace{2em}}r@{\hspace{2em}}r}
\hline
   & \multicolumn{2}{c}{Allocations} & \multicolumn{2}{c}{Alloc Size}  \\
  Benchmark &  Before & After  & Before & After \\
\hline
Double Append &  270,035 & 180,032 & 2,160,280 & 1,440,256  \\
Factorial & 9 & 9 & 68 & 68  \\
Flip a Tree & 20,504 & 57 & 180,480 & 620\\
Sum of Squares of a Tree &  4,194,338 & 91 & 29,360,496 & 908\\
Kort's Raytracer &  60,021 & 17 & 320,144 & 124 \\
\end{tabular}
\caption{Allocation measurements}
\label{fig:ameasurements}
\end{table*}

\begin{table*}
\begin{tabular}{lrrrr}
\hline
   & \multicolumn{2}{c}{Not Supercompiled} & \multicolumn{2}{c}{Supercompiled}\\
  Benchmark &  -c & --make & -c -S  & --make -S \\
\hline
Double Append &  0.183 & 0.300 & 0.202 & 0.319 \\
Factorial & 0.095 & 0.213 &  0.097 & 0.216 \\
Flip a Tree & 0.211 & 0.223 & 0.230 & 0.347 \\
Sum of Squares of a Tree & 0.214 &  0.332 & 0.234 & 0.349\\
Kort's Raytracer &  0.239 & 0.359 & 0.278 & 0.399 \\
\end{tabular}
\caption{Compilation times}
\label{fig:cmeasurements}
\end{table*}

\subsection{Double Append}
As previously seen, supercompiling the appending of three lists saves one traversal over the
first list. This is an example by \citet{Wadler:1990:Deforestation},
and the intermediate structure is fused away by our
supercompiler. The program is:

\begin{haskell}
append xs ys &=& \hscase{xs}{[] \to ys\\ (x':xs') \to x':(append xs' ys)} \\
~ \\
main xs ys zs &=&\,\, append (append xs ys) zs
\end{haskell}

\noindent Supercompiling this program gives the same result that we obtained manually 
in Section \ref{sec:examples}:

\begin{haskell}
h_1 xs_1 ys_1 zs_1 &=& \hscase{xs_1}{
              [] \to \hscase{ys_1}{
                    [] \to zs_1 \\
                    (y_1':ys_1') \to  y_1':(h_2 ys_1' zs_1)} \\
              (x_1':xs_1') \to  x_1':(h_1 xs_1' ys_1 zs_1)} \\
h_2 xs_2 ys_2 &=& \hscase{xs_2}{
              [] \to ys_2\\
              (x_2':xs_2')  \to   x_2':(h_2 xs_2' ys_2)} \\
~ \\
main xs ys zs &=& h_1 xs ys zs
\end{haskell}

In this measurement, three strings of 9000 characters each were appended to
each other into a 27 000 character string.  As can be seen in Table \ref{fig:ameasurements}, the number of 
allocations goes down as one iteration over the first string is
avoided. The binary size increases 1316 bytes, on a binary of roughly
90k.

\subsection{Factorial}

There are no intermediate lists created in a standard implementation of a factorial function,
so any performance improvements must come from inlining or static reductions.

\begin{haskell}
fac 0 & = & 1 \\
fac n & = & n*fac (n-1) \\
~ \\
main &=& show (fac 3) \\
\end{haskell}

\noindent The program is transformed to:

\begin{haskell}
h 0 & = & 1 \\
h n & = & n * h (n - 1)\\
~ \\
main &=& show (3*h 2) \\
\end{haskell}

One recursion and a couple of reductions are eliminated, thereby
slightly reducing the run-time. The allocations remain the same and the
final binary size remains unchanged.

\subsection{Flip a Tree}

Flipping a tree is another example by
\citet{Wadler:1990:Deforestation}, and just like Wadler we perform a
double flip (thus restoring the original tree) before printing the
total sum of all leaves. 

\begin{haskell}
data Tree a = Leaf a | Branch (Tree a) (Tree a) \\
~ \\
sumtr (Leaf a) = a \\
sumtr (Branch l r) = sumtr l + sumtr r \\
~ \\
flip (Leaf x) = Leaf x \\
flip (Branch l r) = Branch (flip r) (flip l) \\
~ \\
main xs = let ys = (flip (flip xs)) in show (sumtr ys) \\
\end{haskell}

\noindent This is transformed into:

\begin{haskell}
h t = \hscase{t}{Leaf d \to d \\
              Branch l r \to (h l) + (h r)} \\
~ \\
main xs = show (\hscase{xs}{
                                 Leaf d \to d \\
                                 Branch l r \to  (h l) + (h r) )}
\end{haskell}

A binary tree of depth 12 was used in the measurement. The function
\<h\> is isomorphic to \<sumtr\> in the input program, and the double
flip has been eliminated. Both the total number of allocations and the
total size of allocations is reduced. The run-time is reduced by an
order of magnitude. The binary size increases by about 10\%, though.

\subsection{Sum of Squares of a Tree}

Computing the sum of the squares of the data members of a tree is the
final example by \citet{Wadler:1990:Deforestation}. 

\begin{haskell}
data Tree a = Leaf a | Branch (Tree a) (Tree a) \\
~ \\
square :: Int \to Int \\
square x = x*x \\
~ \\
sumtr (Leaf x) = x \\
sumtr (Branch l r) = sumtr l + sumtr r \\
~ \\
squaretr (Leaf x) = Leaf (square x) \\
squaretr (Branch l r) = Branch (squaretr l) (squaretr r) \\
~ \\
main xs = show (sumtr (squaretr xs)) 
\end{haskell}

\noindent This is transformed to:

\begin{haskell}
h t = \hscase{t}{
              Leaf d \to d * d \\
              Branch l r \to (h l) + (h r) } \\
~ \\
main xs = show (\hscase{xs}{
                            Leaf d \to d * d \\
                            Branch l r \to (h l) + (h r) } \\
\end{haskell}

Almost all
allocations are removed by our supercompiler, but the binary size is
increased by nearly 10\%. The run-time is 
improved by an order of magnitude.

\subsection{Kort's Raytracer}

The inner loop of a raytracer \citep{Kort:1996:Deforestation} written
in Haskell is extracted and transformed. 

\begin{haskell}
zipWith f (x:xs) (y:ys) = (f x y):zipWith f xs ys \\
zipWith \_ \_ \_ = [] \\
~ \\
sum :: [Int] \to Int \\
sum [] = 0 \\
sum (x:xs) = x + sum xs \\
~ \\
main xs ys = sum (zipWith (*) xs ys)
\end{haskell}

\noindent The transformed result is:

\begin{haskell}
h xs ys = \hscase{xs}{(x':xs') \to
                      \hscase{ys}{(y':ys') \to (x' * y') +
                        (h xs' ys') \\
                        \_ \to 0} \\
              \_ \to  0} \\
~ \\
main xs ys = h xs ys \\
\end{haskell}

The total run-time, the number of
allocations, the total size of allocations and the binary size all
decrease.

\section{Related Work}

There is much literature concerning algorithms that remove
intermediate structures in functional programs. However, most of these works are in the the context of call-by-name or call-by-need languages, which makes the task of supercompilation a
different, yet difficult, problem. We therefore start our survey of
related work with one call-by-value transformation and then look at
the related transformations from a call-by-name or call-by-need
perspective.

\subsection{Lightweight Fusion}

Ohori's and Sasano's Lightweight Fusion \citep{Ohori:2007:Lightweight} works by promoting functions through the fix-point operator and guarantees termination by
limiting inlining to at most once per function.  They implement their
transformation in a compiler for a variant of Standard ML and
present some benchmarks. The algorithm is proven correct for a
call-by-name language.  It is explicitly mentioned that their goal is
to extend the transformation to work for an impure call-by-value
functional language.

Comparing lightweight fusion to our positive supercompiler is somewhat
difficult, the algorithms themselves are not very similar. Comparing
results of the algorithms is more straightforward -- the restriction
to only inline functions once makes lightweight fusion unable to
handle successive applications of the same function or mutually
recursive functions, something the positive supercompiler handles
gracefully.

Despite the early stage of their work, Ohori and Sasano are proposing
an interesting approach that appears quite powerful.

\subsection{Deforestation}

Deforestation was pioneered by \citet{Wadler:1990:Deforestation} for a
first order language more than fifteen years ago. The function macros
supported by the initial deforestation algorithm were not capable of
fully emulating higher-order functions.

\citet{Marlow:1992:Deforestation} addressed the first-order
restriction in a subsequent article
\citep{Marlow:1992:Deforestation}. This work was refined in Marlow's
\citeyearpar{Marlow:1995:Deforestation} dissertation, where he also
related deforestation to the cut-elimination principle of
logic. \citet{Chin:1994:Safe} has also generalised Wadler's
deforestation to higher-order functional programs by using syntactic
properties to decide which terms that can be fused.

Both \citet{Hamilton:1995:Higher} and
\citet{Marlow:1995:Deforestation} have proven that their deforestation
algorithms terminate.  More recent work by
\citet{Hamilton:2006:Higher} extends deforestation with a treeless
form that is easy to recognise and handles a wide range of functions,
giving more transparency for the programmer.

\citet{Alimarine:2005:Improved} have improved the producer and
consumer analyses in Chin's \citeyearpar{Chin:1994:Safe} algorithm to
be based on semantics rather than syntax. They show that their algorithm
can remove much of the overhead introduced by generic programming
\citep{Hinze:2000:Generic}.

While these works are algorithmically rather close to ours due to the
close relationship between deforestation and positive supercompilation, it supposes either a call-by-name or call-by-need context, and is thus not applicable to the kind of languages we target.

\subsection{Supercompilation}

Closely related to deforestation is {\em supercompilation}
\citep{Turchin:1979:ASupercompiler, Turchin:1980:SemanticDefinitions,
  Turchin:1986:ProgramTransformation,
  Turchin:1986:TheConcept}. Supercompilation both removes
intermediate structures and achieves partial evaluation, as well as some
other optimisations. In partial evaluation terminology, the decision
of when to inline is taken online. The initial studies on
supercompilation were for the functional language Refal
\citep{Turchin:1989:Refal-5}. The supercompiler Scp4 \citep{Nemytykh:2003:Supercompiler} is implemented in Refal and is the most well-known implementation from this line of work. 

The {\em positive supercompiler} \citep{Soerensen:1996:Positive} is a
variant which only propagates positive information such as inferred equalities between terms. The propagation is done by unification and the work
highlights how similar deforestation and positive supercompilation
really are. Narrowing-driven partial evaluation
\citep{Alpuente:1998:Partial,Albert:2001:The} is the functional logic
programming equivalent of positive supercompilation but
formulated as a term rewriting system. Their approach also deals with
non-determinism from backtracking, which makes the corresponding algorithms more
complicated.

Strengthening the information propagation mechanism to propagate not
only positive, but also negative information, yields {\em perfect
supercompilation}
\citep{Secher:1999:Masters,Secher:1999:Perfect}. Negative information
is the opposite of positive information, namely inequalities. These
inequalities can be used to prune case-expression branches known not to be applicable, for example.

More recently, \citet{Mitchell:2008:ASupercompiler} have worked on
supercompiling Haskell. They report run-time reductions of up to 55\%
when their supercompiler is used in conjunction with GHC.

Supercompilation has seen applications beyond program optimization:
verification of cache coherence protocols
\citep{Lisitsa:2007:Verification} and proving term equivalence
\citep{Klyuchnikov:2009:Proving} are two examples. We do not believe
that our supercompiler is useful for these applications since it is inherently weaker
than the corresponding supercompiler with call-by-name semantics.

The positive supercompiler by \citet{Soerensen:1996:Positive} is the
immediate ancestor of our work, although we have extended it to a
higher-order language and converted it to work correctly for call-by-value
languages.

\subsection{Generalized Partial Computation}

GPC \citep{Futamura:1988:Generalized,Takano:1991:GeneralizedPartial} uses a theorem prover to extract
additional properties about the program being specialized. Among these
properties are the logical structure of a program, axioms for abstract
data types, and algebraic properties of primitive functions.

The theorem prover is applied whenever a
test is encountered, in order to determine which subset of the execution branches can actually be taken. Information about the predicate that was
tested is propagated along the branches that are left in the resulting
program. The reason GPC is such a powerful transformation is because
it assumes the unlimited power of a theorem prover. 

\citet{Futamura:2002:Program} have applied GPC in a call-by-value
setting in a system called WSDFU (Waseda
Simplify-Distribute-Fold-Unfold), and report many successful
experiments where optimal or near optimal residual programs are
produced. It is unclear whether WSDFU preserves termination behavior
or if it is a call-by-name transformation applied to a call-by-value
language.

We note that the rules for the first order language presented by
\citet{Takano:1991:GeneralizedPartial} are very similar to the
positive supercompiler, but the requirement for a theorem prover might
exclude the technique as a candidate for automatic compiler
optimisations. The lack of termination guarantees for the
transformation might be another obstacle. Considering the similarities
between GPC and positive supercompilation it should be straightforward to convert GPC to a call-by-value setting.

\subsection{Other Transformations}

Considering the vast amount of research conducted on program
transformations in general, we only briefly survey other related
transformations. 

\subsubsection{Partial Evaluation}

Partial evaluation \citep{Jones:1993:PartialEvaluation} is another
instance of Burstall and Darlington's
\citeyearpar{Burstall:77:ATransformation} informal class of
fold/unfold transformations. 

If partial evaluation is performed offline, the process is guided
by program annotations that tell when to fold, unfold, instantiate
and define functions. Binding-Time Analysis (BTA) is a program analysis that
annotates operations in the input program based on whether they are
statically known or not. 

Partial evaluation does not remove intermediate structures, something
we deem necessary to enable the programmer to write programs in the
clear and concise listful style.
Both deforestation and supercompilation simulate call-by-name
evaluation in the transformer, whereas partial evaluation simulates
call-by-value. It is suggested by
\citet{Soerensen:1994:TowardsUnifying} that this might affect the
strength of the transformation.

\subsubsection{Short Cut Fusion}

\label{sec:shortcut}

Short cut deforestation \citep{Gill:1993:AShortCut, Gill:1996:Cheap}
takes a different approach to deforestation, sacrificing some
generality by only working on lists. 

The idea is that the constructors {\em Nil} and {\em Cons} can be
replaced by a {\em foldr} consumer, and a special function {\em build}
is used to enable the transformation to recognize the producer and
enforce a type requirement. Lists using {\em build/foldr} can easily
be removed with the {\em foldr/build} rule:
\begin{haskell*}
foldr f c (build g) = g f c
\end{haskell*}

It is the responsibility of the programmer or compiler writer to make sure
list-traversing functions are written using {\em build} and {\em
  foldr}, thereby cluttering the code with information for the
optimiser and making it harder to read and understand for humans.

Gill implemented and measured short cut deforestation in GHC using the
nofib benchmark suite \citep{Partain:1992:nofib}. Around a dozen
benchmarks improved by more than 5\%, the average was 3\% and only one
example got noticeably worse, by 1\%. Heap allocations were reduced,
by half in one particular case. 

The main argument for short cut deforestation is its simplicity on the
compiler side compared to full-blown deforestation. GHC currently
contains a variant of the short cut deforestation implemented using
 rewrite rules \citep{Jones:2001:Playing}.

\citet{Takano:1995:Shortcut} generalized short cut
deforestation to work for any algebraic datatype through the acid rain
theorem. \citet{Ghani:2008:Short} have also generalized the {\em foldr/build} rule
to a {\em fold/superbuild} rule that can eliminate intermediate
structures of inductive types without disturbing the contexts in which
they are situated.

\citet{Launchbury:1995:Warm} worked on
automatically transforming programs into suitable form for shortcut
deforestation. \citet{Onoue:1997:Calculational} showed an
implementation of the acid rain theorem for Gofer where they could
automatically transform recursive functions into a form suitable for
shortcut fusion.

\citet{Chitil:2000:Type} used type-inference to transform the producer
of lists into the abstracted form required by short cut deforestation.
Given a type-inference algorithm which infers the most general type,
Chitil is able to determine the list constructors that need to be
replaced in one pass.

From the principal type property of the type inference algorithm
Chitil was also able to deduce completeness of the list abstraction
algorithm. This completeness guarantees that if a list can be
abstracted from a producer by abstracting its list constructors, then
the list abstraction algorithm will do so.

The implications of the completeness of the list abstraction algorithm
is that a {\it foldr} consumer can be fused with nearly any producer. One
reason list constructors might not be abstractable from a producer is
that they do not occur in the producer expression but in the
definition of a function which is called by the producer. A
worker/wrapper scheme proposed by Chitil ensures that these list constructors
are moved to the producer in order to make list abstraction possible.

The completeness property and the fact that the programmer does not
have to write any special code, in combination with the promising
results from measurements, suggest that short cut deforestation based on type-inference is a practical optimisation.

\citet{Takano:1995:Shortcut} noted that the {\it foldr}/{\it build} rule for short
cut deforestation has a dual. This is the {\em destroy/unfoldr} rule
used in Zip Fusion \citep{Svenningsson:2002:Shortcut}, which has some
interesting properties: it can remove all argument lists from a
function which consumes more than one list. The method described by
Svenningsson removes all intermediate lists in \<zip [1..n]
 [1..n]\>, addressing one of the main criticisms against the {\em
  foldr/build} rule.  The technique can also remove intermediate lists
from functions which consume their lists using accumulating parameters,
which is usually a problematic case. The {\em destroy/unfoldr} rule
is defined as:
\begin{haskell}
destroy g (unfoldr psi e) = g psi e
\end{haskell}

The Zip Fusion method is simple, and can be implemented in the same way as short cut
deforestation. It still suffers from the drawback that the programmer
or compiler writer has to make sure the list traversing functions are
written using {\em destroy} and {\em unfoldr}.

In more recent work \citet{Coutts:2007:Stream} have extended these
techniques to work on functions that handle nested lists, list
comprehensions and filter-like functions.

\section{Conclusions}
We have presented a positive supercompiler for a higher-order
call-by-value language and proven it correct with respect to
call-by-value semantics.
The adjustments required to preserve the termination properties of call-by-value evaluation are new and work well for many examples in the literature intended to show the usefulness of call-by-name transformations. 

\subsection{Future Work}
We believe that the linearity restriction of rule R14 is not necessary
for the soundness of our transformation, but have not yet found a way to prove
this. This is a natural topic for future work, as is an investigation of whether the concept of an {\it inlining budget} may be used to control the balance between supercompilation benefits and code size.

More work could be done on the strictness analysis component of our
supercompiler. We do not intend to focus on that subject, though;
instead we hope that the modular dependency on strictness analysis
will allow our supercompiler to readily take advantage of general
improvements in the area.

The supercompiler described in this article can be said to supersede
several of the standard transformations commonly implemented by
optimizing compilers, such as copy propagation, constant folding and
basic inlining. We conjecture that this range could be extended to
include transformations like common subexpression elimination as well,
by means of moderately small algorithm changes. An investigation of
the scope for such generalizations is an important area of future
research.

\section*{Acknowledgements}
The authors would like to thank Simon Marlow, Duncan Coutts and Neil
Mitchell for valuable discussions.
Thorsten Altenkirch contributed insights about non-termination. We
would also like to thank Viktor Leijon and the anonymous referees for
POPL'09 for providing useful comments that helped improve the
presentation and contents, and Germ{\'a}n Vidal for explaining
narrowing-driven partial evaluation to us.

%% in general the use of bibtex is encouraged

\bibliographystyle{alpha}
\bibliography{pj,partial-eval}

\appendix\overfullrule=2 pt

\section{Proofs}
\label{sec:Voj}

We borrow a couple of technical lemmas from
\citet{Sands:1996:Proving}, and adapt the proofs to be valid under call-by-value:

\begin{lem}[Sands, p. 24]
\label{lem:letrec}
For all expressions e and value substitutions $\theta$ such that $h
\notin dom(\theta)$, if $e_0 \eval^{1} e_1$ then
\begin{equation*}
  \LETREC h = \lambda \overline{x}.e_1 \IN [\theta (e_0)/z]e \costeq
  \LETREC h = \lambda \overline{x}.e_1 \IN [h\, \theta (\overline{x})/z]e
\end{equation*}
\end{lem}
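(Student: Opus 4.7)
The plan is to show that the two substituends $\theta(e_0)$ and $h\,\theta(\overline{x})$ both reduce, under the common letrec binding, to the same expression $\theta(e_1)$ in exactly one function call, and then to lift this cost-equivalence through the surrounding context by congruence. Concretely, I would first lift the hypothesis $e_0 \eval^1 e_1$ through $\theta$ to obtain $\theta(e_0) \eval^1 \theta(e_1)$. This is a routine substitutivity property for value substitutions: since values are normal forms with no redexes, applying $\theta$ commutes with reduction contexts and preserves the function-call count. A formal justification proceeds by case analysis on the single reduction step in $e_0 \eval^1 e_1$ (Global, App, Let, KCase, NCase, or Arith), noting that in each case the substitution $\theta$ either duplicates values into already-reducible positions or does not interact with the redex at all.

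Second, I would analyse the right-hand substituend under the letrec. Using Proposition~\ref{prop:letreceq}(i), inside the letrec the name $h$ is bound to $\lambda n.\,fix\,(\lambda h.\lambda \overline{x}.e_1)\,n$, so the call $h\,\theta(\overline{x})$ unfolds the fix combinator once and beta-reduces to $[\theta(\overline{x})/\overline{x}]e_1$. Because $h \notin \dom(\theta)$ and we may $\alpha$-rename so that $\overline{x} \cap \dom(\theta) = \emptyset$, this coincides with $\theta(e_1)$ under the same letrec. The whole reduction is designed to discharge exactly one function call, giving $\LETREC h = \lambda \overline{x}.e_1 \IN h\,\theta(\overline{x}) \eval^1 \LETREC h = \lambda \overline{x}.e_1 \IN \theta(e_1)$. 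Applying Lemma~\ref{lem:steps} with $r = 1$ in the context $\LETREC h = \lambda \overline{x}.e_1 \IN [{-}/z]e$, both sides of the desired equation reduce in one function call to $\LETREC h = \lambda \overline{x}.e_1 \IN [\theta(e_1)/z]e$; hence they are cost-equivalent. Congruence of $\costeq$, obtained by combining part (1) of Lemma~\ref{lem:implaws} in each direction, extends this to the full stated equation.

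The main obstacle is the precise accounting of function calls through the letrec-as-fix encoding: one has to verify that the $\lambda n.\,fix\ldots n$ wrapper plus the Global/App steps needed to retrieve the body $e_1$ of $h$ correspond to exactly one unit of $\eval$-cost, matching what $e_0 \eval^1 e_1$ charges on the other side. This requires being explicit about which reductions in the operational semantics are deemed ``function calls'' (as in the definition of $\eval^k$) and checking the encoding adds no spurious calls; it is essentially the same accounting that underlies Theorem~\ref{prop:fix} and Proposition~\ref{prop:letreceq}, and can be discharged by the same reduction calculation that proves those results.
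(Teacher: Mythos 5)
Your computational core matches the paper's: both substituends end up at $\theta(e_1)$, the left one because the step $e_0 \eval^1 e_1$ survives the value substitution, the right one because one unfolding of $h$ through the \emph{fix} encoding returns the body. The gap is in the step that lifts this to the stated equation. Lemma~\ref{lem:steps} speaks only about genuine top-level evaluation of the two whole terms; it cannot be ``applied in the context $\LETREC h = \lambda\overline{x}.e_1 \IN [\cdot/z]e$''. The occurrences of $z$ in $e$ are arbitrary (possibly under lambdas, inside case branches, several of them or none), so there is in general no reduction $\LETREC h = \lambda\overline{x}.e_1 \IN [\theta(e_0)/z]e \eval^1 \LETREC h = \lambda\overline{x}.e_1 \IN [\theta(e_1)/z]e$: the reduction semantics only fires at the unique position determined by $\E$, and in fact the first step of either side is the beta redex created by the letrec encoding, which \emph{erases} the letrec binder rather than preserving it. For the same reason your intermediate claim $\LETREC h = \lambda\overline{x}.e_1 \IN h\,\theta(\overline{x}) \eval^1 \LETREC h = \lambda\overline{x}.e_1 \IN \theta(e_1)$ is not a statement the operational semantics validates. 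The fallback reading --- prove the case $e = z$ and extend ``by congruence'' --- also fails: Lemma~\ref{lem:implaws}(1) wraps a common context around two terms that are already cost-equivalent, but the differing subterms $\theta(e_0)$ and $h\,\theta(\overline{x})$ are \emph{not} cost-equivalent on their own, since $h$ is free in the second and its meaning is supplied only by the surrounding letrec; and the stated equation is not of the form $C[\cdot]$ applied to the two letrec-terms of the special case, because the letrec occurs once, on the outside.

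The paper's proof resolves exactly this ordering problem: it first expands the letrec and performs the single App step on both sides, so that $h$ is replaced everywhere by the closed value $\lambda n.\mathit{fix}\,(\lambda h.\lambda\overline{x}.e_1)\,n$; only then does it split off the differing pieces by congruence, with common context $[\lambda n.\mathit{fix}\,(\lambda h.\lambda\overline{x}.e_1)\,n/h]e$ (holes at the occurrences of $z$) and self-contained fillings $[\lambda n.\mathit{fix}\,(\lambda h.\lambda\overline{x}.e_1)\,n/h]\theta(e_0)$ and $[\lambda n.\mathit{fix}\,(\lambda h.\lambda\overline{x}.e_1)\,n/h]\theta(h\,\overline{x})$ (using $h \notin \dom(\theta)$ to pull $\theta$ outside); these contain no free $h$, are evaluated to the common term $[\lambda n.\mathit{fix}\,(\lambda h.\lambda\overline{x}.e_1)\,n/h]\theta(e_1)$, and are related by Lemma~\ref{lem:steps}. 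If you reorganize your argument in that order --- beta-reduce the letrec away first, then congruence over the $h$-substituted context, then Lemma~\ref{lem:steps} on the pieces --- your calculation of the right-hand unfolding (your acknowledged ``main obstacle'' about call accounting) becomes exactly the evaluation chain the paper writes out, and the proof goes through.
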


\begin{proof}
Expanding both sides according to the definition of letrec yields: 
\begin{equation*}
(\lambda h.[\theta (e_0)/z]e)\,(\lambda n.fix\, (\lambda h.\lambda
\overline{x}.e_1)\, n) \costeq (\lambda h.[h\, \theta
(\overline{x})/z]e)\,(\lambda n.fix\, (\lambda h.\lambda
\overline{x}.e_1)\, n) 
\end{equation*}
and evaluating both sides one step $\eval$ gives:
\begin{equation*}
[\lambda n.fix\, (\lambda h.\lambda \overline{x}.e_1)\, n/h][\theta
(e_0)/z]e \costeq [\lambda n.fix\, (\lambda h.\lambda
\overline{x}.e_1)\, n/h][h\, \theta (\overline{x})/z]e
\end{equation*}
From this we can see that it is sufficient to prove:
\begin{equation*}
[\lambda n.fix\, (\lambda h.\lambda \overline{x}.e_1)\, n/h]\theta
e_0 \costeq [\lambda n.fix\, (\lambda h.\lambda \overline{x}.e_1)\,
n/h]h\, \theta (\overline{x})
\end{equation*}
The substitution $\theta$ can safely be moved out since $h \notin dom(\theta)$:
\begin{equation*}
[\lambda n.fix\, (\lambda h.\lambda \overline{x}.e_1)\, n/h]\theta
e_0 \costeq [\lambda n.fix\, (\lambda h.\lambda \overline{x}.e_1)\,
n/h] \theta(h\, \overline{x})
\end{equation*}
Performing evaluation steps on both sides  yield:
\begin{align*}
[\lambda n.fix\, (\lambda h.\lambda \overline{x}.e_1)\, n/h]\theta e_0  \costeq &[\lambda n.fix\, (\lambda h.\lambda \overline{x}.e_1)\,n/h]\theta (h\,  \overline{x}) \\
[\lambda n.fix\, (\lambda h.\lambda \overline{x}.e_1)\, n/h]\theta e_0  \costeq & [\lambda n.fix\, (\lambda h.\lambda \overline{x}.e_1)\,n/h]\theta ((\lambda n.fix\, (\lambda h.\lambda \overline{x}.e_1)\,n)\, \overline{x}) \\
[\lambda n.fix\, (\lambda h.\lambda \overline{x}.e_1)\, n/h]\theta e_0  \costeq & [\lambda n.fix\, (\lambda h.\lambda \overline{x}.e_1)\,n/h]\theta (fix\, (\lambda h.\lambda \overline{x}.e_1)\,\overline{x})  \\
[\lambda n.fix\, (\lambda h.\lambda \overline{x}.e_1)\, n/h]\theta e_1  \costeq & [\lambda n.fix\, (\lambda h.\lambda \overline{x}.e_1)\,n/h]\theta ((\lambda f.f\,(\lambda n.fix\,f\,n))\, (\lambda h.\lambda \overline{x}.e_1)\,\overline{x})  \\
[\lambda n.fix\, (\lambda h.\lambda \overline{x}.e_1)\, n/h]\theta e_1  \costeq & [\lambda n.fix\, (\lambda h.\lambda \overline{x}.e_1)\,n/h]\theta ((\lambda h.\lambda \overline{x}.e_1)\,(\lambda n.fix\,(\lambda h.\lambda \overline{x}.e_1)\,n)\,\overline{x})  \\
[\lambda n.fix\, (\lambda h.\lambda \overline{x}.e_1)\, n/h]\theta e_1  \costeq & [\lambda n.fix\, (\lambda h.\lambda \overline{x}.e_1)\,n/h]\theta ((\lambda \overline{x}.e_1)\,\overline{x})  \\
[\lambda n.fix\, (\lambda h.\lambda \overline{x}.e_1)\, n/h]\theta e_1  \costeq &[\overline{x}/\overline{x}][\lambda n.fix\, (\lambda h.\lambda \overline{x}.e_1)\,n/h]\theta e_1  \\
[\lambda n.fix\, (\lambda h.\lambda \overline{x}.e_1)\, n/h]\theta e_1  \costeq& [\lambda n.fix\, (\lambda h.\lambda \overline{x}.e_1)\,n/h]\theta e_1  \\
\end{align*}
The LHS and the RHS are cost equivalent, so by Lemma \ref{lem:steps}
the initial expressions are cost equivalent.
\end{proof}

\begin{lem}[Sands, p. 25]
\label{lem:drp}
$\rho'(\drp{\con{v}}{\boxempty}) \costeq \LETREC h = \lambda  \overline{x} .\con{v} \IN
\rho(\drp{\con{v}}{\boxempty})$
\end{lem}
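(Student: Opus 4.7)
The plan is to reduce this lemma to Lemma~\ref{lem:letrec} by exploiting the one-step reduction $\con{g} \eval^1 \con{v}$ (which is exactly the Global reduction rule, since we are in the case where rule R3 fires and $(g,v) \in G$) and inducting on the number of occurrences of $h$ introduced into $\drp{\con{v}}{\boxempty}$ by alternative (1) of $\adr{}{}$.

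First I would recall that $\rho' = \rho \cup \{(h, \con{g})\}$, so the only difference between the two sides is the treatment of $h$. The intuition is that each occurrence of $h\,\sigma(\overline{x})$ in the driven body, arising from folding via alternative (1) of the application rule, corresponds under $\rho'$ to a copy of $\sigma(\con{g})$, while on the right-hand side it remains as $h\,\sigma(\overline{x})$ with $h$ letrec-bound to $\lambda\overline{x}.\con{v}$. Because $\con{g} \eval^1 \con{v}$, Lemma~\ref{lem:letrec} applies with $e_0 := \con{g}$ and $e_1 := \con{v}$, and each instance of it flips one $h$-occurrence between the two forms while preserving cost-equivalence.

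The proof itself would be an induction on the number of $h$-applications in $\drp{\con{v}}{\boxempty}$. In the base case (no occurrences of $h$), $\rho$ and $\rho'$ agree on the expression, so the letrec on the right-hand side has no reference to $h$ in its body and collapses, via Proposition~\ref{prop:letreceq} together with the laws of Lemma~\ref{lem:simplaws}, to the left-hand side. In the inductive step, I would isolate one occurrence of $h\,\sigma(\overline{x})$ by exhibiting a context $e$ such that the driven body is $[h\,\sigma(\overline{x})/z]e$, apply Lemma~\ref{lem:letrec} with $\theta := \sigma$ to convert that occurrence into $\sigma(\con{g})$ inside the enclosing letrec, and then invoke the inductive hypothesis on the remaining $h$-applications. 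Transitivity and the congruence properties of cost-equivalence (Lemma~\ref{lem:implaws}) glue the rewrites together.

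The main obstacle will be bookkeeping rather than conceptual difficulty: the driven body can contain many occurrences of $h$ with different renaming substitutions $\sigma$, potentially nested under other constructs and under substitutions corresponding to earlier entries of $\rho$. Care is needed to enumerate these occurrences cleanly, to pick the context $e$ in Lemma~\ref{lem:letrec} so that $[\theta(e_0)/z]e$ targets exactly one subterm at a time, and to ensure the $\LETREC h$ scope correctly captures every self-reference introduced during driving (since the $\overline{x}$ in $h\,\overline{x}$ are precisely $\fv(\con{g})$ by construction in alternative (4) of $\adr{}{}$). Once this bookkeeping is handled, the result follows as an iterated application of Lemma~\ref{lem:letrec}.
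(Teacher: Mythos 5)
Your proposal is correct and follows essentially the same route as the paper: both rest on the observation that every free occurrence of $h$ in $\drp{\con{v}}{\boxempty}$ has the form $h\,\sigma(\overline{x})$ (introduced by folding), that $\con{g} \eval^1 \con{v}$ by the Global rule, and that Lemma~\ref{lem:letrec} then flips each such occurrence under the enclosing letrec, with $\rho$ and $\rho'$ differing only at $h$. The only difference is presentational: the paper treats all $k$ occurrences at once via a simultaneous substitution $[\theta_1 h\,\overline{x}\ldots\theta_k h\,\overline{x}/z_1\ldots z_k]e'$, whereas you iterate Lemma~\ref{lem:letrec} one occurrence at a time by induction (which, as you note, requires only routine bookkeeping, e.g.\ stating the claim for arbitrary terms whose $h$-occurrences have that shape).
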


\proof [Proof (Similar to \citet{Sands:1996:Proving})]
  By inspection of the rules for \<\sdr \edr\>, all free occurrences of
  {\em h} in $\drp{\con{v}}{\boxempty}$ must occur in sub-expressions of the form
  $h\, \overline{x}$. Suppose there are {\em k} such occurrences, which
  we can write as $\theta_1 h\,\overline{x}\ldots \theta_k
  h\,\overline{x}$, where the $\theta_i$ are just renamings of the
  variables $\overline{x}$. So $\drp{\con{v}}{\boxempty}$ can be written as
  $[\theta_1 h\,\overline{x}\ldots \theta_k h\,\overline{x}/z_1\ldots
  z_k]e'$, where $e'$ contains no free occurrences of {\em h}. Then
  (substitution associates to the right):
\begin{eqnarray*}
  \rho'(\drp{\con{v}}{\boxempty}) & \equiv & [\lambda \overline{x}.\con{g}/h]\rho(\drp{\con{v}}{\boxempty})\\
  & \costeq & [\lambda \overline{x}.\con{g}/h]\rho([\theta_1
  h\,\overline{x}\ldots \theta_k h\,\overline{x} /z_1\ldots z_k]e') \\
  & \costeq & \rho([\theta_1 \con{g}\ldots \theta_k \con{g}/z_1\ldots z_k]e')  \\
  & \costeq & \text{ (by Lemma \ref{lem:letrec}) } \\
 & & \LETREC h = \lambda \overline{x} . \con{v\,  \overline{e}} \IN \rho([\theta_1
 h\,\overline{x}\ldots \theta_k h\,\overline{x} /z_1\ldots z_k]e') \\
 & \equiv & \LETREC h = \lambda \overline{x}.\con{v} \IN
  \rho(\drp{\con{v}}{\boxempty})\rlap{\hbox to95 pt{\hfil\qEd}} \\
\end{eqnarray*}

\begin{lem}
\label{lem:eqlet2}
$\con{\LET x = e \IN f} \costrighteq_s \LET x = e \IN \con{f}$
\end{lem}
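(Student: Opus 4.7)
The plan is to proceed by structural induction on the reduction context $\R$. The base case $\R = \boxempty$ is immediate from Lemma \ref{lem:simplaws}(2), since both sides are syntactically identical and strong improvement is reflexive.

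For the inductive step, each production of $\R$ is handled by the same two-step schema: (i) apply the induction hypothesis to the inner context, and (ii) commute the $\LET$ past the outermost constructor. Concretely, for $\R = \R'\, e'$, the IH supplies
$$\R'\langle \LET x = e \IN f\rangle \costrighteq_s \LET x = e \IN \R'\langle f\rangle,$$
which by congruence (Lemma \ref{lem:simplaws}(1)) extends to
$$\R'\langle \LET x = e \IN f\rangle\, e' \costrighteq_s (\LET x = e \IN \R'\langle f\rangle)\, e'.$$
A single-step commutation $(\LET x = e \IN g)\, e' \costeq \LET x = e \IN g\, e'$ (valid once we alpha-rename so that $x \notin \fv(e')$) together with transitivity (Lemma \ref{lem:simplaws}(3)) completes this case. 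The productions $\CASE \R' \OF \{p_i \arr e_i\}$, $\R' \oplus e'$, and $e' \oplus \R'$ follow the same schema with the corresponding one-step commutation.

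The one-step commutations are justified by operational reasoning via Lemma \ref{lem:steps}: evaluating both sides up to the point where $e$ has been reduced to a value $v$ yields syntactically identical residual expressions (using $x \notin \fv(e')$ to avoid capture), and the number of function calls expended is the same on each side. Since this gives full cost equivalence $\costeq$, which implies $\costrighteq_s$, the inductive step goes through.

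The main obstacle is the $e' \oplus \R'$ case, where call-by-value semantics forces $e'$ to be evaluated before the $\LET$ on the left, whereas the right-hand side starts by evaluating $e$. I would handle this by exploiting the independence of $e$ and $e'$: since $x$ does not occur in $e'$ after alpha-renaming, the reductions of $e'$ and $e$ can be permuted without affecting the final value or the function-call count. Consequently termination of either side implies termination of the other with an equal number of function calls, yielding cost equivalence and hence strong improvement.
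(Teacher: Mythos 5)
Your proof is correct, but it takes a different route than the paper's. The paper proves the lemma in one shot, with no induction on $\R$: assuming $e \eval^k v$, it notes that the left-hand side reduces to $\con{[v/x]f}$ and the right-hand side to $[v/x]\con{f}$, which coincide because $x$ does not occur free in $\R$, and then concludes cost equivalence (hence strong improvement) by a single appeal to Lemma \ref{lem:steps}. You instead induct on the structure of $\R$, combining the induction hypothesis with congruence (Lemma \ref{lem:simplaws}:1), transitivity, and a one-step commutation of $\LET$ past each outer constructor, each commutation itself justified operationally via Lemma \ref{lem:steps}; the operational core is thus the same, but the scaffolding differs. The paper's argument is shorter, while your decomposition is more explicit about the one delicate production, $e' \oplus \R'$: there $\R$ is not a call-by-value reduction context (the grammar of $\E$ only admits $n \oplus \E$), so the paper's asserted sequence $\con{\LET x = e \IN f} \eval^k \con{\LET x = v \IN f}$ is not literally the deterministic evaluation order, and one needs exactly the permutation-of-independent-redexes argument you spell out (evaluation of $e'$ and of $e$ commute, preserving both termination behaviour and function-call counts). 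Both arguments in fact establish the stronger cost equivalence $\costeq$, and both rely on $\alpha$-renaming $x$ away from the free variables of $\R$, which the paper leaves implicit ("contexts do not bind variables") and you state explicitly.
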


\begin{proof}
  Notice that $\con{\LET x = \boxempty \IN f}$ is a redex, and assume $e
  \eval^k v$. The LHS evaluates in k steps $\con{\LET x = e \IN f}
  \eval^k \con{\LET x = v \IN f}$ $ \eval \con{[v/x]f}$, and the RHS
  evaluates in k steps $\LET x = e \IN \con{f} \eval^k \LET x = v \IN
  \con{f} \eval [v/x]\con{f}$. Since contexts do not bind variables
  these two terms are equivalent and by Lemma \ref{lem:steps} the
  initial terms are cost equivalent.
\end{proof}

\begin{lem}
\label{lem:eqletrec}
$\con{\LETREC g = v \IN e} \costrighteq_S \LETREC g = v \IN \con{e}$
\end{lem}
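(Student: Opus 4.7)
The plan is to mirror the structure of Lemma \ref{lem:eqlet2}, exploiting the fact that the right-hand side of a letrec is already a value (a lambda abstraction) once letrec is unfolded to its sugar-free form. This means we do not even need the auxiliary assumption $e \eval^k v$ that appeared in the let-case; one reduction step on each side will suffice.

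First, I would expand both sides using the definition of letrec. Writing $v = \lambda \overline{x}.e_0$, the LHS becomes $\con{(\lambda g.e)\,(\lambda y.fix\,(\lambda g.\lambda \overline{x}.e_0)\,y)}$ and the RHS becomes $(\lambda g.\con{e})\,(\lambda y.fix\,(\lambda g.\lambda \overline{x}.e_0)\,y)$. The crucial observation is that $\lambda y.fix\,(\lambda g.\lambda \overline{x}.e_0)\,y$ is syntactically a value, so both expressions sit at a genuine (App) redex inside the full evaluation context $\E \supseteq \R$. Performing exactly one reduction step on each side via rule (App) yields $\con{[\lambda y.fix\,(\lambda g.\lambda \overline{x}.e_0)\,y/g]e}$ on the left and $[\lambda y.fix\,(\lambda g.\lambda \overline{x}.e_0)\,y/g]\con{e}$ on the right. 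By the Barendregt convention (alpha-renaming $g$ if necessary so that $g \notin \fv(\R)$), these two terms are syntactically equivalent, since the substitution commutes with the capture-avoiding context insertion.

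With both sides reduced in the same number of steps to the same term (up to alpha-equivalence), Lemma \ref{lem:steps} gives $\con{\LETREC g = v \IN e} \costeq \LETREC g = v \IN \con{e}$. Since cost equivalence is symmetric improvement together with operational equivalence (and in particular implies $\costrighteq_s$ via Lemma \ref{lem:simplaws}), the desired strong improvement follows.

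The only real subtlety is variable capture: one must ensure $g$ is not free in $\R$ before claiming that $\con{[\cdot/g]e} = [\cdot/g]\con{e}$. I would discharge this at the start of the proof by invoking alpha-conversion on the binder $g$, exactly as is implicitly done in Lemma \ref{lem:eqlet2} for the let-bound variable $x$. Beyond that, the argument is entirely routine: unfold the sugar, take one (App) step on each side, and apply Lemma \ref{lem:steps}.
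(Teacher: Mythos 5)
Your proof is correct and follows essentially the same route as the paper's: unfold the letrec sugar on both sides, take the single (App) reduction step on each (the argument is already a value), observe the results coincide syntactically, and conclude cost equivalence via Lemma \ref{lem:steps}. Your explicit alpha-renaming of $g$ to avoid capture in $\R$ is a slightly more careful treatment of the point the paper dispatches with ``contexts do not bind variables,'' but the argument is the same.
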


\begin{proof}
  Translate both sides by the definition of letrec into
  $\con{(\lambda g.e)\, (\lambda n.fix\, (\lambda g.v)\,n)}$ $
  \costrighteq_S (\lambda g.\con{e})\, (\lambda n.fix\, (\lambda
  g.v)\,n)$. Notice that $\con{\boxempty}$ is a redex. The LHS evaluates in 0
  steps to $\con{(\lambda g.e)\, (\lambda n.fix\, (\lambda g.v)\,n)}
  \eval \con{[\lambda n.fix\, (\lambda g.v)\,n/g]e}$ and the RHS evaluates in 0 steps to 
  $(\lambda g.\con{e})\, (\lambda n.fix\, (\lambda g.v)\,n)$ $ \eval
  [\lambda n.fix\, (\lambda g.v)\,n/g]\con{e}$. Since our contexts do
  not bind variables these two terms are equivalent and by Lemma
  \ref{lem:steps} the initial terms are cost equivalent.
\end{proof}

\begin{lem}
\label{lem:eqcase}
$\con{\CASE e \OF \{ p_i \to e_i \}} \costrighteq_s \CASE e \OF \{ p_i
\to \con{e_i} \}$
\end{lem}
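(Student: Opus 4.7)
The plan is to mirror the structure of the two immediately preceding lemmas (Lemma \ref{lem:eqlet2} and Lemma \ref{lem:eqletrec}), both of which establish that a driving context $\R$ may be pushed past an outer binding form at the cost of strong improvement. The key observation is that $\con{\CASE \boxempty \OF \{p_i \to e_i\}}$ is an evaluation position: once the scrutinee is reduced to a value, either rule KCase or rule NCase fires, and the surrounding $\R$-context does not interfere.

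First I would assume that $e \eval^k v$ for some value $v$ and perform a case analysis on the shape of $v$. If $v = k_j\,\overline{v}$ for some constructor $k_j$ matching a pattern $p_j = k_j\,\overline{x}_j$, then the LHS evaluates in $k$ steps to $\con{\CASE k_j\,\overline{v} \OF \{p_i \to e_i\}}$ and in one additional step to $\con{[\overline{v}/\overline{x}_j]e_j}$ by KCase, while the RHS evaluates in $k$ steps to $\CASE k_j\,\overline{v} \OF \{p_i \to \con{e_i}\}$ and in one additional step to $[\overline{v}/\overline{x}_j]\con{e_j}$. Since driving contexts do not bind variables (and the pattern variables $\overline{x}_j$ may be taken fresh with respect to $\R$), these two resulting terms are syntactically equal. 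The integer case where $v = n_j$ and NCase applies is completely analogous and produces $\con{e_j}$ on both sides.

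Having driven both expressions $k+1$ steps to equivalent terms, I would invoke Lemma \ref{lem:steps} to conclude that the original LHS and RHS are cost equivalent ($\costeq$). By Definition of cost equivalence, cost equivalence is improvement in both directions; combined with Lemma \ref{lem:implaws}(5) this yields operational equivalence, so cost equivalence implies strong improvement (this is exactly how the same step is taken in the proofs of Lemma \ref{lem:eqlet2} and Lemma \ref{lem:eqletrec}).

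The main subtlety will be handling the case where $e$ fails to evaluate to a value: if $e$ diverges, then so do both LHS and RHS, since in both $\con{\CASE e \OF \ldots}$ and $\CASE e \OF \ldots$ the outermost position waiting for a value is $e$ itself (the evaluation contexts $\E$ include $\CASE \E \OF \ldots$, and $\R$ only nests inside further $\E$-contexts). Thus divergence is in lockstep and Lemma \ref{lem:steps} still applies uniformly. A minor but necessary bookkeeping point will be choosing the pattern-bound variables $\overline{x}_j$ fresh so that the substitution $[\overline{v}/\overline{x}_j]$ commutes with the $\R$-context, precisely as is tacitly done in the let and letrec cases.
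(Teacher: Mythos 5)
Your proposal is correct and follows essentially the same route as the paper: evaluate the scrutinee on both sides, observe that both reduce to syntactically equal terms (since contexts do not bind variables), and lift back to cost equivalence via Lemma \ref{lem:steps}, which gives strong improvement. In fact you are slightly more thorough than the paper's own proof, which only writes out the integer (NCase) scrutinee and leaves the constructor case and the divergence remark implicit.
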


\begin{proof}
  Notice that $\con{\CASE \boxempty \OF \{ p_i \to e_i \}}$ is a redex, and
  assume $e \eval^k n_j$. The LHS evaluates in k steps $\con{\CASE e
    \OF \{ p_i \to e_i \}} \eval^k \con{\CASE n_j \OF \{ p_i \to e_i
    \}} \eval \con{e_j}$, and the RHS evaluates in k steps $\CASE e
  \OF \{ p_i \to \con{e_i} \} \eval^k \CASE n_j \OF \{ p_i \to
  \con{e_i} \} \con{f} \eval \con{e_j}$. Since these two terms are
  equivalent the initial terms are cost equivalent by Lemma
  \ref{lem:steps}.
\end{proof}

We set out to prove the main theorem about total correctness:
\begin{thm}[Total Correctness]
Let $\con{e}$ be an expression, and $\rho$ an environment such that
\begin{enumerate}[$\bullet$]
\item the range of $\rho$ contains only closed expressions, and
\item $\fv(\con{e}) \cap dom(\rho) = \emptyset$, and
\end{enumerate}
then $\con{e} \costrighteq_s \rho(\dr{e}{\R})$.
\end{thm}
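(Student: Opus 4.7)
The plan is to proceed by well-founded induction on the weight $|\dr{e}{\R}|$ from Definition \ref{def:drvweight}, which the termination argument already showed to be well-founded on triples under the lexicographic order. By Lemma \ref{lem:totality} each driving call matches exactly one rule, so the argument reduces to a case analysis over the twenty rules of Figure \ref{fig:drivalg} together with the four alternatives of \<\asdr{}\aedr\> in Figure \ref{fig:adrivalg}. Lemma \ref{lem:weightdec} guarantees that the induction hypothesis applies to every recursive driving call on the right-hand side. At each step one also has to check that the two environment invariants — closedness of $\rng(\rho)$ and $\fv(\con{e}) \cap \dom(\rho) = \emptyset$ — are preserved across the recursive call; this is routine but must be done systematically.

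The straightforward rules (R1--R2, R4--R7, R10--R12, R15--R19) all reduce to one of two patterns: either the rewrite corresponds to a genuine evaluation step, in which case Lemma \ref{lem:simplaws}(4) turns the reduction into a strong improvement, or it decomposes the expression into subterms that are driven separately, in which case the congruence property (Lemma \ref{lem:simplaws}(1)) lifts the induction hypothesis to the surrounding context. The context-duplicating rules R15, R16, and R18 additionally rely on Lemmas \ref{lem:eqlet2}, \ref{lem:eqletrec}, and \ref{lem:eqcase} to move $\R$ across the let/letrec/case boundary; these lemmas are precisely the cost-equivalences that make the ``turning inside-out'' transformation sound.

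The two genuine obstacles are rule R13 and rule R3. For R13 the strict-linear branch requires showing
\[
\LET x = e \IN f \costrighteq_s [e/x]f
\]
under the assumption $x \in \strict(f) \cap \linear(f)$; strictness ensures that $e$ is evaluated in both sides and linearity ensures that the substitution does not duplicate any work, so the two sides use exactly the same number of function calls and terminate together. The non-strict branch instead keeps the let-binding in place and invokes the induction hypothesis under the binder, relying again on congruence. This is the only point at which the call-by-value semantics of the source language is essential, and it must be handled with care because the driving context $\R$ is inherited by $f$ but not by $e$.

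Rule R3 is the heart of the proof. Alternative 4b introduces a fresh function $h$ and emits $\LETREC h = \lambda \overline{x}.e \IN h\,\overline{x}$, where $e = \drp{\con{v}}{\boxempty}$ is obtained under an extended memoization list $\rho' = \rho \cup \{(h,\con{g})\}$; alternative 1, triggered by a recursive occurrence that is a renaming of $\con{g}$, folds the occurrence into a call $h\,\overline{x}$. The strategy for closing the loop is to use Lemma \ref{lem:drp} to rewrite $\rho'(\drp{\con{v}}{\boxempty})$ as $\LETREC h = \lambda \overline{x}.\con{v} \IN \rho(\drp{\con{v}}{\boxempty})$, then apply the induction hypothesis to obtain $\con{v} \costrighteq_s \rho(\drp{\con{v}}{\boxempty})$, and finally invoke the local improvement theorem (Theorem \ref{thm:letrec-simp}) together with Theorem \ref{thm:sandsimp} to conclude $\con{g} \costrighteq_s \LETREC h = \lambda \overline{x}.e \IN h\,\overline{x}$. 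Alternatives 2, 3, and 4a — the upward and downward generalizations — are handled by unfolding the definition of $\mathit{split}$ and $\mathit{msg}$: the two substitutions returned by $\mathit{msg}$ make the reassembled term syntactically equal to the input, so after driving each component separately (by induction) and reassembling via substitution, congruence yields strong improvement. Verifying this last step is the most delicate bookkeeping in the proof, since one must track how the memoization list $\rho$ and the freshly generated variables interact across the generalization point.
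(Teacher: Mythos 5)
Your plan is essentially the paper's own proof: the same rule-by-rule case analysis justified by Lemma \ref{lem:totality}, the same toolkit of Lemmas \ref{lem:simplaws}, \ref{lem:steps}, \ref{lem:eqlet2}, \ref{lem:eqletrec} and \ref{lem:eqcase} for the easy and context-moving rules, the same strictness-plus-linearity cost argument for R13, and the same closing of the recursive knot in alternative 4b of \<\asdr{}\aedr\> via Lemma \ref{lem:drp} and the local improvement theorem (Theorem \ref{thm:letrec-simp}); your explicit well-founded induction on the driving weight is just a sharper rendering of the paper's nominal ``induction on the structure of expressions'', whose case 4b in fact appeals to the shorter transformation. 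The only blemishes are bookkeeping: R16 is discharged by an evaluation/cost-equivalence argument through Lemma \ref{lem:steps} (Lemma \ref{lem:eqletrec} belongs to R14), and in case 4b the induction hypothesis yields $\con{v} \costrighteq_s \rho'(\drp{\con{v}}{\boxempty})$ with the \emph{extended} environment --- the passage from $\rho'$ to the letrec over $\rho$ being precisely the content of Lemma \ref{lem:drp} --- rather than $\con{v} \costrighteq_s \rho(\drp{\con{v}}{\boxempty})$ as you wrote it.
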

\noindent We reason by induction on
the structure of expressions, and since the algorithm is total (Lemma
\ref{lem:totality}) this coincides with inspection of each rule.

\subsection{R1}
\label{sec:startbev}

We have that $\rho(\dr{n}{\R}) = \rho(\con{n})$, and the conditions of
the proposition ensure that $\fv(\con{n}) \cap dom(\rho) = \emptyset$,
so $\rho(\con{n}) = \con{n}$ This is syntactically equivalent to the
input, and we conclude $\con{n} \costrighteq_s \rho(\dr{n}{\R})$.

\subsection{R2} 

We have that $\rho(\dr{x}{\R}) = \rho(\con{x})$, and the conditions of
the proposition ensure that $\fv(\con{x}) \cap dom(\rho) = \emptyset$,
so $\rho(\con{x}) = \con{x}$ This is syntactically equivalent to the
input, and we conclude $\con{x} \costrighteq_s \rho(\dr{x}{\R})$.

\subsection{R3}

\subsubsection{Case: (1)}

\subparagraph{Suppose $\exists h.\rho(h) \equiv \lambda
  \overline{x}.\con{g}$ and hence that $\dr{\con{g}}{\boxempty} = h\, \overline{x}$} ~\\

The conditions of the proposition ensure that $\overline{x} \cap
dom(\rho) = \emptyset$, so $\rho(\dr{\con{g}}{\boxempty}) = \rho(h\, \overline{x}) =
(\lambda \overline{x}.\con{g})\, \overline{x}$. However, $\con{g}$ and
$(\lambda \overline{x}.\con{g})\, \overline{x}$ are cost equivalent, which
implies strong improvement, and we conclude $\con{g} \costrighteq_s
\rho(\dr{\con{g}}{\boxempty})$ 

\subsubsection{Case: (2)}

\subparagraph{Suppose $\exists(h, t) \in \rho. t \costlefteq \con{g}$  and that $\con{g} \costlefteq t$,  hence $\dr{\con{g}}{\boxempty} = \con{g}$
} ~ \\
%  (x, [(h', Just\, \con{g\,\overline{e}})])$} ~\\

The term on the RHS is discarded and replaced with a new term higher
up in the tree, so it does not matter what the term is. 

\subsubsection{Case: (3)}

\subparagraph{Suppose $\exists(h, t) \in \rho. t \costlefteq \con{g}$ and hence that $\dr{\con{g}}{\boxempty} =
  [\dr{\overline{f}}{\boxempty}/\overline{y}] \dr{f_g}{\boxempty}$
} ~ \\

We have $\rho(\dr{g}{\R}) =
  \rho([\dr{\overline{f}}{\boxempty}/\overline{y}] \dr{f_g}{\boxempty})$ $ =
  [\rho(\dr{\overline{f}}{\boxempty})/\overline{x}]\rho(\dr{f_g}{\boxempty})$. By the induction
  hypothesis, $\overline{f} \costrighteq_s \rho(\dr{\overline{f}}{\boxempty}) $ and $f_g
  \costrighteq_s  \rho(\dr{f_g}{\boxempty})$
  and by congruence properties of strong improvement (Lemma
  \ref{lem:simplaws}:1) $\con{g} \costrighteq_s
  \rho(\dr{g}{\R})$.

\subsubsection{Case: (4a)}

Analogous to the previous case.

\subsubsection{Case: (4b)}

\subparagraph{If $\dr{\con{g}}{\boxempty} = \rho(\LETREC h = \lambda \overline{x} .  \drp{\con{v}}{\boxempty}
  \IN h\, \overline{x})$}~\\   
where $\rho' = \rho \cup  (h, \lambda \overline{x}.\con{g})$ and $h
\notin (\overline{x} \cup dom(\rho))$. We need to show that:

\begin{equation*}
\con{g} \costrighteq_s \rho(\LETREC h = \lambda \overline{x}. \drp{\con{v}}{\boxempty} \IN h\, \overline{x})
\end{equation*}

Since $h, \overline{x} \notin dom(\rho)$ we have that  $\rho(\LETREC h
= \lambda \overline{x} . \drp{\con{v}}{\boxempty} \IN h\, \overline{x}) \equiv 
\LETREC h = $ $\lambda \overline{x} .\rho(\drp{\con{v}}{\boxempty}) $ $\IN h\,
\overline{x}$. 

$\R$ is a reduction context, hence $\con{g} \eval^{1} \con{v}$. By Lemma
\ref{lem:letrec} we have that $\LETREC h = \lambda \overline{x} . \con{v} \IN
\con{g\, \overline{e}} \costeq \LETREC h = \lambda \overline{x} . \con{v} \IN h \,
\overline{x}$. Since $h \notin \fv(\con{g})$ this simplifies to $\con{g}
\costeq \LETREC h = \lambda \overline{x}. \con{v} \IN h \, \overline{x}$. It is
necessary and sufficient to prove that

\begin{equation*}
\LETREC h = \lambda  \overline{x}. \con{v} \IN h \, \overline{x} \costrighteq_s
\LETREC h = \lambda  \overline{x} . \rho(\drp{\con{v}}{\boxempty}) \IN h \, \overline{x}
\end{equation*}

\noindent By Theorem \ref{thm:letrec-simp} it is sufficient to show: 

\begin{equation*}
\LETREC h = \lambda   \overline{x} . \con{v} \IN \con{v} \costrighteq_s
\end{equation*}
\begin{equation*}
\LETREC h = \lambda  \overline{x} . \con{v} \IN \rho(\drp{\con{v}}{\boxempty}) 
\end{equation*}

\noindent By Lemma \ref{lem:drp} and $\LETREC h = \lambda
\overline{x}. \con{v} \IN \con{v} \costeq \con{v}$, this is equivalent to showing that

\begin{equation*}
\con{v} \costrighteq_s \rho'(\drp{\con{v}}{\boxempty})
\end{equation*}

\noindent Which follows from the induction hypothesis, since it is a shorter
transformation. 

\subsubsection{Case: (4c)}

We have that $\rho(\dr{g}{\R}) =
\rho(\dr{\con{v}}{\boxempty})$. By the induction hypothesis
$\con{v} \costrighteq_s
\rho(\dr{\con{v}}{\boxempty})$, and since
$\con{g} \eval^1 \con{v}$ it follows from
Lemma \ref{lem:simplaws}:4 that $\con{g} \costrighteq_s
\rho(\dr{g}{\R})$.

\subsection{R4}

We have that $\rho(\dr{k\,\overline{e}}{\boxempty}) =
\rho(k\,\dr{\overline{e}}{\boxempty})$, and the conditions of the proposition
ensure that $\fv(k\,\overline{e}) \cap dom(\rho) = \emptyset$, so
$\rho(k\,\dr{\overline{e}}{\boxempty}) = k\, \rho(\dr{\overline{e}}{\boxempty})$. By
the induction hypothesis, $\overline{e} \costrighteq_s
\rho(\dr{\overline{e}}{\boxempty})$, and by congruence properties of strong
improvement (Lemma \ref{lem:simplaws}:1) $k\,\overline{e}
\costrighteq_s \rho(\dr{k\,\overline{e}}{\boxempty})$.

\subsection{R5}

We have that $\rho(\dr{x\, \overline{e}}{\R}) =
\rho(\con{x\,\dr{\overline{e}}{\boxempty}})$, and the conditions of the
proposition ensure that $\fv(\con{x\, \overline{e}}) \cap dom(\rho) =
\emptyset$, so $\rho(\con{x\,\dr{\overline{e}}{\boxempty}}) =
\con{x\,\rho(\dr{\overline{e}}{\boxempty})}$. By the induction hypothesis,
$\overline{e} \costrighteq_s \rho(\dr{\overline{e}}{\boxempty})$, and by
congruence properties of strong improvement (Lemma
\ref{lem:simplaws}:1) $\con{x\,\overline{e}} \costrighteq_s
\rho(\dr{x\, \overline{e}}{\R})$. 

\subsection{R6} 

We have that $\rho(\dr{\lambda \overline{x}.e}{\boxempty}) = \rho(\lambda
\overline{x}.\dr{e}{\boxempty})$, and the conditions of the
proposition ensure that $\fv(\lambda \overline{x}.e) \cap  dom(\rho) =
\emptyset$, so $\rho(\lambda \overline{x}.\dr{e}{\boxempty}) = \lambda
\overline{x}.\rho(\dr{e}{\boxempty})$. By the induction hypothesis, $e
\costrighteq_s \rho(\dr{e}{\boxempty})$, and by
congruence properties of strong improvement (Lemma
\ref{lem:simplaws}:1) $\lambda \overline{x}.e \costrighteq_s
\rho(\dr{\lambda \overline{x}.e}{\boxempty})$. 

\subsection{R7}

We have that $\rho(\dr{n_1 \oplus n_2}{\R}) =
\rho(\dr{\con{n}}{\boxempty})$. By the induction hypothesis, $\con{n}
\costrighteq_s \rho(\dr{\con{n}}{\boxempty})$, and since $\con{n_1 \oplus
  n_2} \eval \con{n}$ it follows from Lemma \ref{lem:simplaws}:4 that
$\con{n_1 \oplus n_2} $ $\costrighteq_s \rho(\dr{n_1 \oplus n_2}{\R})$.

\subsection{R8}

\begin{enumerate}[$\bullet$]
  \item[a)] $e_1 \oplus e_2 = a$: 
We have that $\rho(\dr{e_1 \oplus e_2}{\R}) = \rho(\dr{e_1}{\boxempty}
  \oplus \dr{e_2}{\boxempty})$, by the given conditions 
 $\fv(\con{e_1 \oplus e_2}) \cap dom(\rho) = \emptyset$, so 
$\rho(\con{\dr{e_1}{\boxempty} \oplus \dr{e_2}{\boxempty}})  = $ \\ $
\con{\rho(\dr{e_1}{\boxempty}) \oplus \rho(\dr{e_2}{\boxempty})}$. By the induction
hypothesis $e_1 \costrighteq_s \rho(\dr{e_1}{\boxempty})$ and $e_2
\costrighteq_s \rho(\dr{e_2}{\boxempty})$, and by congruence properties of
strong improvement (Lemma \ref{lem:simplaws}:1) $\con{e_1 \oplus e_2}$ 
$\costrighteq_s\rho(\dr{e_1 \oplus e_2}{\R})$.
\item[b)] $e_1 = n$ or $e_1 = a$: We have $\rho(\dr{e_1 \oplus e_2}{\R}) = 
\rho(\dr{e_2}{\con{e_1 \oplus \boxempty}})$ and $\con{e_1 \oplus e_2}  \costrighteq_s
\rho(\dr{e_2}{\con{e_1 \oplus \boxempty}})$ follows  from the induction hypothesis.

\item[c)] otherwise: We have that $\rho(\dr{e_1 \oplus e_2}{\R}) = 
\rho(\dr{e_1}{\con{\boxempty \oplus e_2}})$ and $\con{e_1 \oplus e_2}
\costrighteq_s$ $\rho(\dr{e_1}{\con{\boxempty \oplus e_2}})$ follows  from the induction hypothesis.

\end{enumerate}
\subsection{R9}

We have that $\rho(\dr{(\lambda \overline{x}.f)\, \overline{e}}{\R}) =
\rho(\dr{\con{\LET \overline{x} = \overline{e} \IN
    f}}{\boxempty})$. Evaluating the input term  yields: $\con{(\lambda
\overline{x}.f)\, \overline{e}} \eval^r \con{(\lambda \overline{x}.f)\,
\overline{v}} \eval \con{[\overline{v}/\overline{x}]f}$, and evaluating the
input to the recursive call yields: $\con{\LET \overline{x} =
\overline{e} \IN f} \eval^r \con{\LET \overline{x} = \overline{v} \IN f}
\eval \con{[\overline{v}/\overline{x}]f}$. These two resulting terms are
syntactically equivalent, and therefore cost equivalent. By Lemma
\ref{lem:steps} their ancestor terms are cost equivalent, $\con{(\lambda
\overline{x}.f)\, \overline{e}} \costeq \con{\LET \overline{x} =
\overline{e} \IN f}$, and cost equivalence implies strong improvement.
By the induction hypothesis $\con{\LET \overline{x} = \overline{e} \IN
  f} \costrighteq_s \rho(\dr{\con{\LET \overline{x} = \overline{e} \IN
    f}}{\boxempty})$, and therefore $\con{(\lambda \overline{x}.f)\, \overline{e}}
\costrighteq_s \rho(\dr{(\lambda \overline{x}.f)\, \overline{e}}{\R})$.

\subsection{R10}
We have $\rho(\dr{e\,e'}{\R}) =
\rho(\dr{e}{\con{\boxempty\,e'}})$ and
$\con{e\,e'} \costrighteq_s
\rho(\dr{e}{\con{\boxempty\,e'}})$ follows from the induction
hypothesis.

\subsection{R11} 

We have that $\rho(\dr{\LET x = n \IN f}{\R}) =
\rho(\dr{\con{[n/x]f}}{\boxempty})$. By the induction hypothesis
$\con{[n/x]f} \costrighteq_s \rho(\dr{\con{[n/x]f}}{\boxempty})$, and since
$\con{\LET x = n \IN f} \eval \con{[n/x]f}$ it follows from Lemma
\ref{lem:simplaws}:4 that $\con{\LET x = n \IN f} \costrighteq_s
\rho(\dr{\LET x = n \IN f}{\R})$.

\subsection{R12}

We have that $\rho(\dr{\LET x = y \IN f}{\R}) =
\rho(\dr{\con{[y/x]f}}{\boxempty})$. By the induction hypothesis
$\con{[y/x]f} \costrighteq_s \rho(\dr{\con{[y/x]f}}{\boxempty})$, and since
$\con{\LET x = y \IN f} \costeq \con{[y/x]f}$ it follows that $\con{\LET x = y \IN f} \costrighteq_s
\rho(\dr{\LET x = y \IN f}{\R})$.

\subsection{R13} 

\subsubsection{Case: $x \in strict(f)$}

We have  $\rho(\dr{\LET x = e \IN f}{\R}) =
\rho(\dr{\con{[e/x]f}}{\boxempty})$. Evaluating the input term yields
$\con{\LET x = e \IN f} \eval^r \con{\LET x = v \IN f}$ $ \eval
\con{[v/x]f} \eval^s \econ{v}$, and evaluating the input to the
recursive call yields: $\con{[e/x]f} \eval^s \econ{e} \eval^r
\econ{v}$. These two resulting terms are
syntactically equivalent, and therefore cost equivalent. By Lemma
\ref{lem:steps} their ancestor terms are cost equivalent, $\con{\LET x
  = e \IN f} \costeq \con{[e/x]f}$, and cost equivalence implies
strong improvement. By the induction hypothesis $\con{[e/x]f}
\costrighteq_s \rho(\dr{\con{[e/x]f}}{\boxempty})$, and therefore $\con{\LET
  x = e \IN f} \costrighteq_s \rho(\dr{\LET x = e \IN f}{\R})$.

\subsubsection{Case: otherwise}

We have that $\rho(\dr{\LET x = e \IN f}{\R}) =$ \\ $ \rho(\LET x =
\dr{e}{\boxempty} \IN \dr{\con{f}}{\boxempty})$, and the conditions of the
proposition ensure that $\fv(\con{\LET x = e \IN f}) \cap dom(\rho) =
\emptyset$, so  $\rho(\LET x = \dr{e}{\boxempty} \IN \dr{\con{f}}{\boxempty}) =$\\ $ \LET
x = \rho(\dr{e}{\boxempty}) \IN \rho(\dr{\con{f}}{\boxempty})$. By the induction
hypothesis $e \costrighteq_s \rho(\dr{e}{\boxempty})$ and $\con{f}
\costrighteq_s \rho(\dr{\con{f}}{\boxempty})$. By Lemma \ref{lem:eqlet2} the
input is strongly improved by $\LET x = e \IN \con{f}$, and therefore
$\con{\LET x = e \IN f} \costrighteq_s \rho(\dr{\LET x = e \IN f}{\R})$.

\subsection{R14}

We have that $\rho(\dr{\LETREC g = v \IN e}{\R}) = \rho(\LETREC g =
v \IN \dr{\con{e}}{\boxempty})$, and the conditions of the
proposition ensure that $\fv(\con{\LETREC g = v \IN e}) \cap dom(\rho) =
\emptyset$, so $\rho(\LETREC g = v \IN \dr{\con{e}}{\boxempty}) = \LETREC
g = v \IN \rho(\dr{\con{e}}{\boxempty})$. By the induction
hypothesis  $\con{e}
\costrighteq_s \rho(\dr{\con{e}}{\boxempty})$. By Lemma \ref{lem:eqletrec} the
input is strongly improved by $\LETREC g = v \IN \con{e}$, and therefore
$\con{\LETREC g = v \IN e} \costrighteq_s$ $ \rho(\dr{\LETREC g = v \IN e}{\R})$.

\subsection{R15}

We have $\rho(\dr{\CASE x \OF \{ p_i \to e_i \}}{\R}) =
\rho(\CASE x \OF \{p_i \to \dr{\con{e_i}}{\boxempty}\})$, and the conditions
of the proposition ensure that $\fv(\con{\CASE x \OF \{ p_i \to e_i
  \}}) \cap dom(\rho) = \emptyset$, so $\rho(\CASE x \OF \{p_i \to
\dr{\con{e_i}}{\boxempty}\}) = \CASE x \OF \{p_i \to
\rho(\dr{\con{e_i}}{\boxempty})\}$. By the induction hypothesis $\con{e_i}
\costrighteq_s \rho(\dr{\con{e_i}}{\boxempty})$.  Using Lemma \ref{lem:eqcase}
the input is strongly improved by $\CASE x \OF \{p_i \to \con{e_i}
\}$, and therefore $\con{\CASE x \OF \{ p_i \to e_i \}} \costrighteq_s$ \\ $
\rho(\dr{\CASE x \OF \{ p_i \to e_i \}}{\R})$. 

\subsection{R16}

We have $\rho(\dr{\CASE k_j\,\overline{e} \OF \{ p_i \to e_i
  \}}{\R}) = \rho(\dr{\con{\LET \overline{x}_j = \overline{e} \IN
    e_j}}{\boxempty})$. \\ Evaluating the input term yields $\con{\CASE
  k_j\,\overline{e} \OF \{ p_i \to e_i \}} \eval^r \con{\CASE
  k_j\,\overline{v} \OF \{ p_i \to e_i \}} \eval
\con{[\overline{v}/\overline{x}_j]e_j}$, and evaluating the input to
the recursive call yields $\con{\LET \overline{x}_j = \overline{e} \IN
  e_j} \eval^r \con{\LET \overline{x}_j = \overline{v} \IN e_j} \eval
\con{[\overline{v}/\overline{x}_j]e_j}$. These two resulting terms are
syntactically equivalent, and therefore cost equivalent. By Lemma
\ref{lem:steps} their ancestor terms are cost equivalent, $\con{\CASE
  k_j\,\overline{e} \OF \{ p_i \to e_i \}} \costeq \con{\LET
  \overline{x}_j = \overline{e} \IN e_j}$, and cost equivalence
implies strong improvement. According to the induction hypothesis $\con{\LET
  \overline{x}_j = \overline{e} \IN e_j} \costrighteq_s $ \\ $
\rho(\dr{\con{\LET \overline{x}_j = \overline{e} \IN e_j}}{\boxempty})$, and
therefore $\con{\CASE k_j\,\overline{e} \OF \{ p_i \to e_i \}}
\costrighteq_s$ \\ $\rho(\dr{\CASE k_j\,\overline{e} \OF \{ p_i \to e_i
  \}}{\R})$.

\subsection{R17}

We have that $\rho(\dr{\CASE n_j \OF \{p_i \arr e_i \} }) =
\rho(\dr{\con{e_j}}{\boxempty})$. By the induction hypothesis $\con{e_j}
\costrighteq_s \rho(\dr{\con{e_j}}{\boxempty})$, and since $\con{\CASE n_j
  \OF \{p_i \arr e_i \}} \eval \con{e_j}$ it follows from Lemma
\ref{lem:simplaws}:4 that $\con{ \CASE n_j \OF \{p_i \arr e_i
  \}}\costrighteq_s$ \\ $ \rho(\dr{\CASE n_j \OF \{p_i \arr e_i \} }{\R})$.

\subsection{R18} 

We have that $\rho(\dr{\CASE a \OF \{ p_i \to e_i \}}{\R}) = $ \\ $ 
\rho(\CASE \dr{a}{\boxempty} \OF \{p_i \to \dr{\con{e_i}}{\boxempty}\})$, and the conditions
of the proposition ensure that $\fv(\con{\CASE a \OF \{ p_i \to e_i
  \}}) \cap dom(\rho) = \emptyset$, so $\rho(\CASE \dr{a}{\boxempty} \OF \{p_i \to
\dr{\con{e_i}}{\boxempty}\}) = \CASE \rho(\dr{a}{\boxempty}) \OF \{p_i \to
\rho(\dr{\con{e_i}}{\boxempty})\}$. By the induction hypothesis $a \costrighteq_s
  $ \\ $\rho(\dr{a}{\boxempty})$ and  $\con{e_i} \costrighteq_s \rho(\dr{\con{e_i}}{\boxempty})$ and by Lemma \ref{lem:eqcase}
the input is strongly improved by  $\CASE a \OF \{p_i \to \con{e_i}
\}$, and therefore $\con{\CASE a \OF \{ p_i \to e_i \}} \costrighteq_s $ \\ $
\rho(\dr{\CASE a \OF \{ p_i \to e_i \}}{\R})$. 

\subsection{R19}

We have that $\rho(\dr{\CASE e \OF \{ p_i \to e_i \}}{\R}) = 
\rho(\dr{e}{\con{\CASE \boxempty \OF \{p_i \to e_i\}}})$ and
$\con{\CASE e \OF \{ p_i \to e_i \}} \costrighteq_s
\rho(\dr{e}{\con{\CASE \boxempty \OF \{ p_i \to e_i \}}})$ follows from the
induction hypothesis.

\subsection{R20}
\label{sec:slutbev}

We have that $\rho(\dr{e}{\R}) = \rho(\con{e})$, and the conditions of
the proposition ensure that $\fv(\con{e}) \cap dom(\rho) = \emptyset$,
so $\rho(\con{e}) = \con{e}$ This is syntactically equivalent to the
input, and we conclude $\con{e} \costrighteq_s \rho(\dr{e}{\R})$.

%  Doctors recommend to take out the appendix if pain starts to get
%  life threatening.

\end{document}